\theoremstyle{definition}
\declaretheorem[style=definition,qed=$\lrcorner$]{theorem}
\declaretheorem[style=definition,qed=$\lrcorner$,sibling=theorem]{corollary}
\declaretheorem[style=definition,qed=$\lrcorner$]{definition}
\declaretheorem[style=definition,qed=$\lrcorner$,sibling=theorem]{lemma}
\declaretheorem[style=definition,qed=$\lrcorner$,sibling=theorem]{conjecture}
\declaretheorem[style=definition,qed=$\lrcorner$,sibling=theorem]{claim}
\declaretheorem[style=definition,qed=$\lrcorner$]{example}
\crefname{claim}{claim}{Claim}
\declaretheorem[style=definition,qed=$\lrcorner$]{remark}
\newcommand{\bigO}{\ensuremath{\mathcal{O}}\xspace}
\newcommand{\ER}{\ensuremath{\exists\mathbb{R}}\xspace}
\renewcommand{\P}{\ensuremath{\textsc{P}}\xspace}
\renewcommand{\NP}{\ensuremath{\textsc{NP}}\xspace}
\renewcommand{\R}{\ensuremath{\mathbb{R}}\xspace}
\newcommand{\N}{\ensuremath{\mathbb{N}}\xspace}
\newcommand{\Z}{\ensuremath{\mathbb{Z}}\xspace}
\newcommand{\ETR}{\textsc{ETR}\xspace}
\newcommand{\multiplier}{\ensuremath{4860}\xspace}
\renewcommand{\AC}{\ensuremath{\textsc{AC}^0}\xspace}
\newcommand{\TrulyRPH}{\ensuremath{\mathsf{T}\mathbb{R}\textsc{PH}}\xspace}
\newcommand{\RPH}{\ensuremath{\mathbb{R}\textsc{PH}}\xspace}
\newcommand{\LogicRPH}{\ensuremath{\texttt{Logic-}\mathbb{R}\textsc{PH}}\xspace}
\renewcommand{\PH}{\textsc{PH}\xspace}
\renewcommand{\PR}{\ensuremath{\textsc{P\R}}\xspace} 
\renewcommand{\BQP}{\textsc{BQP}\xspace}
\renewcommand{\PSPACE}{\textsc{PSPACE}\xspace}
\newcommand{\oracle}[1]{#1}
\newcommand{\Size}[1]{\mathbf{SIZE}(#1)}
\newcommand{\LLL}{\ensuremath{\mathcal{L}}}
\newcommand{\algo}[1]{#1}
\newcommand{\quasipoly}[1]{\mathtt{quasipoly}(#1)}
\newcommand{\sipser}{\ensuremath{F^{k+1}_N}}
\DeclareMathOperator*{\probability}{\ensuremath{\textsf{Pr}}}
\newcommand{\PosSLP}{\ensuremath{\textsc{PosSLP}}\xspace}
\newcommand{\bin}{\ensuremath{\texttt{bin}}\xspace}
\newcommand{\real}{\ensuremath{\texttt{real}}\xspace}
\newcommand{\RPHcircuitUpperBound}{S}
\newcommand{\truthtable}{\ensuremath{\textsf{tt}}}
\renewcommand{\poly}{\ensuremath{\texttt{poly}}}
\renewcommand{\BP}{\mathsf{BP}}
\newcommand{\DRPHlevel}[1]{\ensuremath{\Delta_{#1}\mathbb{R}}}
\newcommand{\SRPHlevel}[1]{\ensuremath{\Sigma_{#1}\mathbb{R}}}
\newcommand{\PRPHlevel}[1]{\ensuremath{\Pi_{#1}\mathbb{R}}}
\newcommand{\TrulyDRPHlevel}[1]{\ensuremath{\Delta_{#1}\mathsf{T}\mathbb{R}}}
\newcommand{\TrulySRPHlevel}[1]{\ensuremath{\Sigma_{#1}\mathsf{T}\mathbb{R}}}
\newcommand{\TrulyPRPHlevel}[1]{\ensuremath{\Pi_{#1}\mathsf{T}\mathbb{R}}}
\newcommand{\DPHlevel}[1]{\Delta_{#1}}
\newcommand{\SPHlevel}[1]{\Sigma_{#1}}
\newcommand{\PPHlevel}[1]{\Pi_{#1}}
\newcommand{\Parity}{\textsc{Parity}\xspace}
\newcommand{\ParityA}{\ensuremath{\textsc{Parity}_O}\xspace}
\newcommand{\Parityn}{\ensuremath{\textsc{Parity}}\xspace}
\newcommand{\operations}{\texttt{op}}
\newcommand{\signset}{\texttt{sign}}
\newcommand{\power}{\texttt{power}}
\newcommand{\CNF}{\textsc{CNF}}
\newcommand{\DNF}{\textsc{DNF}}
\newcommand{\leaf}{\texttt{leaf}}
\newcommand{\route}{\texttt{path}}
\newcommand{\T}{T}
\newcommand{\prop}{\ensuremath{\texttt{Prop}}\xspace}
\newcommand{\lucas}[1]{
}
\newcommand{\till}[1]{
\textcolor{orange}{Till: #1}
}
\newcommand{\subha}[1]{
}
\newcommand{\thekla}[1]{
}
\newcommand{\reviewer}[1]{
}
\newcommand{\new}[1]{\textcolor{blue}{#1}}
\title{Oracle Separations for~$\RPH$}
\author{Thekla Hamm\thanks{Eindhoven University of Technology. {\tt t.l.s.hamm@tue.nl}}
\and
Lucas Meijer\thanks{Utrecht University. {\tt l.meijer2@uu.nl} is generously supported by the Netherlands Organisation for Scientific Research (NWO) under project no. VI.Vidi.213.150.}
\and 
Tillmann Miltzow\thanks{Utrecht University. {\tt t.miltzow@uu.nl} is generously supported by the Netherlands Organisation for Scientific Research (NWO) under project no. 016.Veni.192.250 and no. VI.Vidi.213.150.}
\and 
Subhasree Patro\thanks{Eindhoven University of Technology. {\tt s.patro@tue.nl}}}
\date{}
\begin{document}

\maketitle

\begin{abstract}
    While theoretical computer science primarily works with discrete models of computation, like the Turing machine and the wordRAM, there are many scenarios in which introducing real computation models is more adequate.
    For example, when working with continuous probability distributions for say smoothed analysis, in continuous optimization, computational geometry or machine learning.
    We want to compare real models of computation with discrete models of computation. 
    We do this by means of oracle separation results.
    
    We define the notion of a \textit{real Turing machine} as an extension of the (binary) Turing machine
    by adding a real tape.
    Using those machines, we define and study the  real polynomial hierarchy \RPH.
    We are interested in \RPH as the first level of the hierarchy corresponds to the well-known complexity class \ER. 
    It is known that $\NP \subseteq \ER \subseteq \PSPACE$
    and furthermore $\PH \subseteq \RPH \subseteq \PSPACE$.
    We are interested to know if any of those inclusions are tight.
    In the absence of unconditional separations of complexity classes, we turn to oracle separation.
    We develop a technique that allows us to transform oracle separation results from the binary world to the real world. 
    As applications, 
    we show there are oracles such that:
    \begin{itemize}
        \item $\RPH^O$ proper subset of $\PSPACE^O$,
        \item $\SPHlevel{k+1}^O$ not contained in $\SRPHlevel{k}^O$, for all $k\geq 0$,
        \item $\SRPHlevel{k}^O$ proper subset of $\SRPHlevel{k+1}^O$, for all $k\geq 0$,
        \item $\BQP^O$ not contained in $\RPH^O$.
    \end{itemize}
    Our results hint that \ER is strictly contained in \PSPACE and that there is a separation between the different levels of the real polynomial hierarchy. We also bound the power of real computations by showing that \NP-hard problems are unlikely to be solvable using polynomial time on a realRAM. 
    Furthermore, our oracle separations hint that polynomial-time quantum computing cannot be simulated on an efficient real Turing machine.
\end{abstract}

\vfill

\newpage

\tableofcontents

\newpage
\section{Introduction}


\subsection{Motivation}
    In theoretical computer science, discrete models of computation are dominant.
    They capture many algorithmic problems accurately and have a rich structural theory~\cite{AB09}.
    Yet, in many contexts, continuous numbers are more appropriate or even necessary to be used as part of the model of computation.
    The primary examples are computational geometry~\cite{2017handbookDCG}, machine learning~\cite{alpaydin2020introduction}, continuous optimization~\cite{2020introductionContinuousOptimiziation} and working with probabilities~\cite{spielman2004smoothed}.
    Yet, working with continuous numbers and thus inherently with infinite precision makes many researchers uneasy.
    The reasons for that are multitude.
    Firstly, it is clearly an unrealistic model of computation, as a physical computer can only work with finite precision.
    Secondly, it might be possible to exploit the real number computations to say solve problems much faster
    using real numbers compared to using binary numbers.
    For example, one such exploitation might be the ability to solve \PSPACE-complete problems in polynomial time using real numbers.
    For instance, Shamir~\cite{shamir1977factoring} shows how to factor integers,
    and Sch\"{o}nhage showed how to solve general \PSPACE-complete problems~\cite{schonhage1979power}.
    However, both algorithms use the rounding operation in a clever way.
    Since then, the realRAM is considered not to have the rounding operation~\cite{EvdHM20}.
    And without rounding, we are not aware of any other anomalies or exploitations with real number computations. 
    Still, the fact that researchers have not found any new abnormalities does not mean that they do not exist.
    At last, we want to point out that real Turing machines are often considered with access to arbitrary real constants. 
    In this case, it was shown that those models of computation are at least as powerful as \P/\poly~\cite{Koiran94} and the best upper bound is 
    \PSPACE/\poly~\cite{CuckerGrigoriev1997}.
       
    We want to show \textit{limitations} 
    on real number computations.
    Most known methods to separate complexity classes unconditionally use the diagonalization method~\cite{AB09}. 
    As the diagonalization method is limited in scope, 
    we turn our attention to oracle separations.

\subsection{The Setup}
    \paragraph{Model of Real Computation.}
    There are several ways to define computations with real numbers.
    The most common way to work with real numbers is the realRAM model
    of computation~\cite{EvdHM20}.
    Its advantage is that it closely resembles modern computers and 
    thus allows fine-grained computational analysis.
    In other words, we can distinguish between linear, quadratic time computations etc.
    However, that model of computation is very involved, specifically as it allows random access.
    (This means that the content of one word register is used to address the content of another register.)

    The second common model is the Blum Shub Smale model (or BSS model) of computation, which allows for real inputs 
    and real outputs~\cite{BSS89, BCSS98}. 
    Its advantage is that it is closer to a Turing machine and thus more minimal.
    The downside is that it does not closely correspond to modern computer architectures.
    The research line building upon the BSS model often focuses on real input, where we focus on binary inputs $\{0,1\}^*$.
    Furthermore, originally the BSS model allowed the machine to have access to arbitrary real constants, whereas we are constant free.
    
    As there are many definitions of BSS machines, compare~\cite{BSS89} and \cite{BCSS98} for example, and as we want to work with real and binary input, we give our own definition of a real Turing machine.
    For us a real Turing machine has two tapes, one binary tape and one real tape. 
    Furthermore, we are allowed the usual arithmetic operations: addition, subtraction, multiplication, and division.
    See \Cref{sec:TuringMachines} for a detailed definition.
    The real Turing machine we define is convenient for our purpose, and we show that it is equivalent to the previous models up to polynomial factors. 
    Therefore, readers familiar with any other standard model of real computation may skip \Cref{sec:TuringMachines}.
    
    \paragraph{Real Polynomial Hierarchy.}
        We say that a language $L \subseteq \{0,1\}^*$ is in the real polynomial hierarchy $\RPH$ 
        if there is a constant $k\geq 1$, a univariate polynomial $q$, a polynomial-time real Turing machine $M$, and a sequence of (alternating) quantifiers $Q_1, \dotsc, Q_k \in \{\exists, \forall\}$ such that
        \begin{equation*}
        x \in L \iff Q_1 u_1 \in \R^{q(|x|)} Q_2 u_2 \in \R^{q(|x|)} \cdots Q_k u_k \in \R^{q(|x|)} M(x, u_1,\ldots,u_k)=1.
        \end{equation*}
        The $k$th level of the hierarchy~\RPH consists of all languages decidable in this way using \(k\) quantifiers.
        Languages on the \(k\)-th level are distinguished based on their first quantifier into \SRPHlevel{k} (existential), and \PRPHlevel{k} (universal).
    
        One can also define the \emph{truly} real polynomial hierarchy~\TrulyRPH in which even languages (rather than just computations) are over real numbers and not only over Boolean values.
        The truly real polynomial hierarchy is interesting in its own right, but not the topic of this article.
        It was considered in Chapter 21 by Blum, Cucker, Shub and Smale in the context of machines that are not allowed to do multiplication~\cite[Chapter 21]{BCSS98}.
        For general real Turing machines, we only found a paper by B\"{u}rgisser and Cucker~\cite{BC09} that defines this hierarchy, although the article was really about the hierarchy as we define it.
     
        To us, the complexity classes $\PR = \SRPHlevel{0}$ and $\ER = \SRPHlevel{1}$ are of primary importance. 
        The complexity class \PR and \ER are the real analogs of $\P$ and \NP. 
        The reason we find them most interesting is that 
        there are more problems studied for those two classes compared to other levels of the real polynomial hierarchy.
        See \Cref{sub:RealComputation} for some selected examples of problems in \PR and
        see \Cref{sec:ETR} for an introduction to the existential theory of the reals~\ER.
         
    \paragraph{Oracles.}
        When we consider an oracle model of computation, we assume that there is a set $O$ and our model of computation is allowed to make queries of the form $x\in O$ and receives the answer in one time step.    
        See \Cref{sec:OracleDefinition} for formal details, and \Cref{sec:discussion} for more context on why oracle separations are arguably considered a step towards separating complexity classes.
        We want to say already here, that oracle separation results serve as a \textit{hint} that complexity classes are separate.
        We discuss in \Cref{sub:OracleHistory} how to judge this ``hint''.

%
        
\subsection{Results}

    Our first result separates the real polynomial hierarchy 
    from \PSPACE using random oracles.

    \begin{restatable}{theorem}{PSPACEnotRPH}
        \label{thm:PSPACEnotRPH}
        For a random binary oracle $O$,
        $\PSPACE^O \nsubseteq \RPH^O$ with probability 1.
    \end{restatable}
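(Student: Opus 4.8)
The plan is to produce a single tally language $L^O\in\PSPACE^O$ which, with probability $1$ over the random oracle, lies outside every level $\SRPHlevel{k}^O$ of the real polynomial hierarchy at once; since $\PRPHlevel{k}^O\subseteq\SRPHlevel{k+1}^O$ and $\RPH^O=\bigcup_k\SRPHlevel{k}^O$, this gives the theorem. For the hard language I would take, on input $1^n$, the value of a Sipser-style read-once alternating formula $\mathrm{Sip}_n$ of \emph{growing} depth $d(n)\to\infty$ with uniform gate fan-in $w(n)$, where $w(n)$ is super-polynomial but $w(n)^{d(n)}$ is still sub-exponential (e.g.\ $w(n)=n^{\lceil\log n\rceil}$, $d(n)=\lceil\log n\rceil$). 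Its $w(n)^{d(n)}$ leaves are identified, through an explicit injective encoding, with distinct binary strings of a common length $g(n)=(\log n)^{O(1)}$, and the leaf labelled by a string $y$ is assigned the oracle bit $[\,y\in O\,]$. Then $L^O\in\PSPACE^O$ for \emph{every} $O$: a binary oracle machine evaluates $\mathrm{Sip}_n$ by a depth-first recursion that stores only the current root-to-node path (recursion depth $d(n)$, each frame $\lceil\log w(n)\rceil$ bits, so $(\log n)^{O(1)}$ space in total), querying $O$ at each leaf.

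The heart of the proof — and the step I expect to be the main obstacle — is a \emph{transfer lemma}: for every fixed $k$ and every polynomial-time real Turing machine $M$ with a binary oracle there is a polynomial $p$ so that, for all $n$, the outcome of the $\SRPHlevel{k}$-computation of $M$ on $1^n$ depends only on $O$ restricted to strings of length at most $p(n)$ and, as a function of those bits, is computed by an unbounded-fan-in (i.e.\ $\AC$) circuit of depth $k+2$, size $2^{p(n)}$ and bottom fan-in $p(n)$. This is the real-number analogue of the classical fact that $\SPHlevel{k}^O$ restricted to tally inputs is computed relative to $O$ by constant-depth circuits, and it is here that the uncountable real quantifiers must be tamed. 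I would argue by induction on the number of quantifiers. The base case is a \PR-machine $M(1^n,\vec u)$ with a real parameter $\vec u\in\R^{\poly(n)}$ aggregating the eventual innermost quantifiers: consider the computation tree of $M(1^n,\vec u)$ in which one branches on \emph{both} the machine's sign tests \emph{and} its oracle answers. It has depth at most $\poly(n)$ and at most $2^{\poly(n)}$ leaves, and along any fixed root-to-leaf path the oracle answers are committed, so every real register is a rational function of $\vec u$ with \emph{fixed} rational coefficients. Hence that path's sign conditions cut out an $O$-independent semialgebraic region $R_\ell\subseteq\R^{\poly(n)}$, and its oracle literals form a width-$\poly(n)$ conjunction $T_\ell$ over the bits of $O$. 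Refining $\R^{\poly(n)}$ by the arrangement of the $R_\ell$ (which has $2^{\poly(n)}$ cells by standard bounds on arrangements of bounded-degree semialgebraic sets), the computation restricted to one cell $c$ equals the fixed, $O$-independent DNF $\bigvee_{\ell\ \mathrm{accepting},\ c\subseteq R_\ell}T_\ell(O)$, of depth $2$, width $\poly(n)$, size $2^{\poly(n)}$. For the inductive step one eliminates the outermost real quantifier $Q_1$ by projecting each cell of the $(\vec u,\text{parameter})$-space onto the parameter space via the effective Tarski–Seidenberg theorem (a polynomial, hence still $2^{\poly(n)}$, blow-up in the number and degree of defining polynomials), re-partitioning the parameter space by the arrangement of these projections, and taking an $\bigvee$ (for $\exists$) or $\bigwedge$ (for $\forall$) of the inductive circuits over the cells projecting onto a given cell. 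Since the quantifiers alternate, this adds exactly one layer per quantifier, so after all $k$ of them the circuit has depth $k+2$, bottom fan-in $\poly(n)$ and size $2^{\poly(n)}$, as claimed.

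Granting the transfer lemma, I would invoke the average-case form of Håstad's switching lemma (the random-restriction argument, which yields not merely non-computability but that the circuit disagrees with the target on a $\delta$-fraction of inputs for a fixed constant $\delta>0$): a depth-$(k+2)$ circuit of size $2^{\poly(n)}$ and bottom fan-in $\poly(n)$ cannot $(1-\delta)$-approximate $\mathrm{Sip}_n$ once $d(n)>k+2$ and $w(n)$ exceeds the relevant polynomials in $n$ — which holds for all sufficiently large $n$ because $d(n),w(n)\to\infty$ while $p$ and $k$ are fixed. Concretely, $k+1$ rounds of random restrictions, each keeping a $1/\poly(n)$-fraction of the live variables, collapse the circuit to a decision tree of depth $o(w(n))$, whereas the Sipser parameters are set precisely so that $\mathrm{Sip}_n$ restricts (with high probability) to a read-once formula still depending on many variables — a contradiction.

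It remains to upgrade ``for a fixed machine'' to ``probability $1$''. Fix a candidate $(M,p,Q_1\cdots Q_k)$ for $L^O\in\SRPHlevel{k}^O$ and choose a sparse sequence $n_1<n_2<\cdots$ with $g(n_{i+1})>p(n_i)$; then ``$M^O$ correct on $1^{n_i}$'' is measurable with respect to oracle bits of length at most $p(n_i)$, while $L^O(1^{n_i})$ depends only on the length-$g(n_i)$ bits, which are distinct across $i$ and fresh relative to the history at $n_1,\dots,n_{i-1}$. Conditioning on all oracle bits of length $\ne g(n_i)$ makes $L^O(1^{n_i})$ equal to $\mathrm{Sip}_{n_i}$ of the remaining random bits and $M^O(1^{n_i})$ a fixed depth-$(k+2)$ circuit on them, so by the previous paragraph the conditional probability of correctness at $n_i$ is at most $1-\delta$; hence $\Pr_O[\,M^O\text{ correct on }1^{n_1},\dots,1^{n_I}\,]\le(1-\delta)^I\to 0$, so $\Pr_O[\,M^O\text{ decides }L^O\,]=0$. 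As there are only countably many such candidates and countably many $k$, $\Pr_O[\,L^O\in\RPH^O\,]=0$, whereas $L^O\in\PSPACE^O$ always; therefore $\PSPACE^O\nsubseteq\RPH^O$ with probability $1$.
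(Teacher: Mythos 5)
Your architecture matches the paper's --- a single tally language in $\PSPACE^O$, a ``transfer lemma'' turning any $\SRPHlevel{k}^O$ computation on $1^n$ into a constant-depth, $2^{\poly(n)}$-size, $\poly(n)$-bottom-fan-in circuit over the oracle's truth table, an average-case circuit lower bound, and a Borel--Cantelli/countability argument --- but the two load-bearing components are realized quite differently. For the transfer lemma, the paper tames the real quantifiers by a purely combinatorial counting argument: it bounds the number of behaviorally distinct real oracle Turing machines per time step (\Cref{lem:NrOfConstfreeRTMs}) and inductively replaces each range $\R^{q(n)}$ by an explicit finite witness set $\LLL$ (\Cref{lem:ConstrainedQuantifierRanges}), then expands each leaf into a width-$P(n)$ DNF/CNF via the machine's decision tree of adaptive oracle queries. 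You instead decompose the witness space into $O$-independent semialgebraic cells via the sign/oracle-branching computation tree and eliminate quantifiers block-by-block with effective Tarski--Seidenberg. Both yield the needed $2^{\poly(n)}$ bound for constant $k$; the paper's route is more elementary and model-agnostic, while yours imports real algebraic geometry but arguably explains \emph{why} finitely many cells suffice. Note your stated ``polynomial blow-up'' per elimination block is not literally right --- effective quantifier elimination is singly exponential in the block size $\poly(n)$, i.e.\ $(sd)^{O(\poly(n))}$ --- though with $s,d = 2^{\poly(n)}$ and constant $k$ the conclusion $2^{\poly(n)}$ survives. For the hard language, the paper uses $\ParityA$ together with Cai's average-case lower bound for \textsc{Parity} against all constant depths simultaneously, whereas you use a Sipser-style function of growing depth $d(n)\to\infty$. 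This is the one place where your sketch leans on a result that is not off-the-shelf: the average-case (constant-$\delta$ error fraction) hardness of a \emph{growing-depth} Sipser function against depth-$(k+2)$ circuits of size $2^{\poly(n)}$ with your specific parameters requires redoing the Rossman--Servedio--Tan-style restriction analysis rather than citing it, and the parameter bookkeeping ($N = 2^{(\log n)^{O(1)}}$ versus size $2^{\poly(n)}$) is delicate. Substituting \textsc{Parity} closes this gap immediately and is exactly what the paper does. On the other hand, your handling of the probability-$1$ step --- sparsifying the input lengths so the target oracle bits are fresh, conditioning on all bits of other lengths, and chaining conditional probabilities --- is more explicit than the paper's own treatment.
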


    To understand the significance of the theorem, let us consider the complexity class $\ER = \SRPHlevel{1}$, which is on the first level of the hierarchy.
    It is known that $\NP \subseteq \ER \subseteq \PSPACE$.
    \Cref{thm:PSPACEnotRPH} implies that $\PSPACE^O \nsubseteq \ER^O$ and thus gives the first hint that \ER is not equal to \PSPACE.
    Again, looking at the gap between \NP and \PSPACE, we see that there are infinitely many layers of the polynomial hierarchy in between.

    As \ER-hard problems are arguably harder to solve than
    problems in \NP
    it is plausible that \ER 
    could for instance contain more levels of the polynomial hierarchy.
    The next theorem hints that this is not the case.
    
    \begin{restatable}{theorem}{RPHlayers}
    \label{thm:RPHlayers}
        For each $k\geq 0$,
        for a random oracle $O$, with probability $1$ $\SPHlevel{k+1}^{O} \nsubseteq \SRPHlevel{k}^O$. 
    \end{restatable}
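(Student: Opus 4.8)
The plan is to transplant the classical random-oracle separation of the levels of the polynomial hierarchy (Furst--Saxe--Sipser, Yao, and Håstad, in the average-case / random-oracle form) to the real setting, using the guiding principle that real quantifiers are, circuit-theoretically, no more powerful than Boolean ones: a \SRPHlevel{k} computation, restricted to inputs of a fixed length $n$ and to the polynomially many oracle strings it can query, collapses to exactly the same kind of small bounded-depth Boolean circuit over the oracle bits that a \SPHlevel{k} computation does. Fix $k$. Encode the Sipser function $\sipser$ (with $N=N(n):=2^n$), which is computed by a monotone alternating depth-$(k+1)$ formula with top gate $\mathsf{OR}$ and bottom fan-in $N^{\Theta(1/(k+1))}$, into an oracle in the usual way: let $L_O=\{x\in\{0,1\}^n : \sipser \text{ applied to the block of oracle bits indexed by } x \text{ evaluates to } 1\}$, where distinct $x\in\{0,1\}^n$ index disjoint blocks of strings of length $\mathrm{poly}(n)$. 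Evaluating the formula with $k+1$ alternating polynomial-length \emph{binary} quantifiers followed by one oracle query shows $L_O\in\SPHlevel{k+1}^O$ for every $O$. We also use Håstad's robust lower bound: any Boolean circuit of depth $k+1$, size $2^{\mathrm{polylog}(N)}$ and bottom fan-in $\mathrm{polylog}(N)$ agrees with $\sipser$ on at most a $\tfrac12+N^{-\Omega(1)}$ fraction of inputs.

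The core step is a \emph{transfer lemma}: if $L\in\SRPHlevel{k}^O$ uniformly in $O$ --- witnessed by a fixed polynomial-time real Turing machine $M$ with time bound $p$ and quantifiers $Q_1\cdots Q_k$ --- then for every $n$ there is a single Boolean circuit $C_n$, independent of $O$, of depth $k+1$, bottom fan-in at most $p(n)$, and size $2^{\mathrm{poly}(n)}$, over the variables $\{O(s):|s|\le p(n)\}$, with $C_n(O)=\mathbf 1_L(x)$ for every $x\in\{0,1\}^n$. To prove it, fix $x$ and observe that the computation of $M^O(x,u_1,\dots,u_k)$ is described by a decision tree whose internal nodes test either the sign of a polynomial in $u_1,\dots,u_k$ --- whose \emph{rational} coefficients are fully determined by the oracle answers read earlier on the branch --- or a single oracle bit; a root-to-leaf path (a ``transcript'' $\tau$) thus determines a semialgebraic region $R_\tau\subseteq\R^{k\cdot q(|x|)}$, an $\mathsf{AND}$ of at most $p(n)$ oracle literals, and an output bit, and there are only $2^{\mathrm{poly}(n)}$ transcripts. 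The decisive point is that a real quantifier $\exists u_i\in\R^{q(|x|)}$ behaves, at the circuit level, exactly like an unbounded fan-in $\mathsf{OR}$ (and $\forall$ like $\mathsf{AND}$): it is satisfied iff some cell --- an intersection of $R_\tau$'s, possibly after eliminating the later variables --- is nonempty \emph{and} consistent with $O$, where nonemptiness of a cell is a \emph{fixed, oracle-independent} fact while consistency with $O$ is again an $\mathsf{AND}$ of $\le p(n)$ oracle literals. Iterating over the $k$ quantifiers, using effective quantifier elimination over real closed fields (Collins / Renegar / Basu--Pollack--Roy) a constant number of times to handle the interaction between successive real variables --- which keeps the cell count and the polynomial degrees singly exponential in $\mathrm{poly}(n)$ --- produces $k$ alternating layers on top of a bottom layer of fan-in-$\le p(n)$ $\mathsf{AND}$'s of oracle literals, i.e.\ the claimed $C_n$.

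Granting the lemma, the separation follows by Borel--Cantelli. Suppose some fixed \SRPHlevel{k} machine decided $L_O$ on all inputs for a positive-measure set of oracles. By the lemma, for each $n$ it yields a fixed depth-$(k+1)$, bottom-fan-in-$p(n)$, size-$2^{\mathrm{poly}(n)}$ circuit computing $\sipser$ on the $N(n)$ relevant oracle bits; with $N(n)=2^n$ we have $p(n)=\mathrm{polylog}(N(n))$ and $2^{\mathrm{poly}(n)}=2^{\mathrm{polylog}(N(n))}$, so Håstad's bound forces this circuit to err on a $\tfrac12-N(n)^{-\Omega(1)}$ fraction of inputs. Since the blocks used by distinct length-$n$ inputs are disjoint, a uniformly random $O$ makes them independent and uniform, so the probability the machine is correct on all of $\{0,1\}^n$ is at most $(\tfrac12+o(1))^{2^n}$, which is summable in $n$; hence the machine fails for almost every $O$. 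Taking a union over the countably many \SRPHlevel{k} machines gives $\SPHlevel{k+1}^O\nsubseteq\SRPHlevel{k}^O$ with probability $1$.

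The main obstacle is the transfer lemma, and within it the \emph{alternation} of real quantifiers: a single $\exists u\in\R^{q(n)}$ is easy (it is literally an $\mathsf{OR}$ over transcripts with oracle-independent nonemptiness bookkeeping), but for $k\ge2$ the cell decomposition relevant to $u_i$ --- produced by eliminating $u_k,\dots,u_{i+1}$ --- has defining polynomials whose coefficients depend on the oracle answers \emph{and} on $u_1,\dots,u_{i-1}$, so one must check that this dependence still factors through a bounded-depth, $2^{\mathrm{poly}(n)}$-size circuit over the oracle bits and that it folds cleanly into the decomposition for $u_{i-1}$. The accompanying degree and cell-count bookkeeping from iterated quantifier elimination is routine (a constant number of steps, polynomial blow-up of exponents, $\mathrm{poly}(n)$ ambient dimension), but pinning the circuit depth to exactly $k+1$ --- so the result reads $\SPHlevel{k+1}\nsubseteq\SRPHlevel{k}$ rather than with a worse gap --- requires the same care as in the classical argument, namely arranging the transcript structure so that the base machine contributes only to the bottom fan-in.
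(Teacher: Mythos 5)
Your proposal is correct in outline and reaches the theorem, but it routes the one genuinely new technical step through a different tool than the paper. Both arguments share the outer shell: encode a depth\nobreakdash-$(k+1)$ Sipser function into disjoint oracle blocks, observe membership in $\SPHlevel{k+1}^O$ via $k+1$ short binary quantifiers plus one query, convert a hypothetical $\SRPHlevel{k}^O$ machine into a depth\nobreakdash-$(k+1)$, quasipolynomial-size circuit with $\poly(n)=\polylog(N)$ bottom fan-in over the oracle bits (your DNF/CNF merging of the decision tree into the last quantifier layer is exactly the paper's \Cref{thm:CircuitLowerBoundsImplyNoMembershipInRPH} and \Cref{rem:FaninAtmostPoly}), and close with the average-case depth hierarchy theorem of Rossman--Servedio--Tan (\Cref{thm:averageCaseDepthHierarchyTheorem}) plus independence across blocks and a union bound over machines --- your Borel--Cantelli paragraph is in fact more explicit than the paper's own write-up. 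Where you diverge is in how the real quantifiers are made finite. The paper's \Cref{lem:ConstrainedQuantifierRanges} does this by pure counting: there are at most $\multiplier^{T(n)}$ many $T(n)$-distinct real oracle Turing machines (\Cref{lem:NrOfConstfreeRTMs}), so one witness per behavior class, per input, per level suffices, yielding finite sets $\LLL$ of singly exponential size with no real algebraic geometry at all. You instead run a transcript/cell-decomposition argument with block quantifier elimination over real closed fields. Your route is viable, and the obstacle you flag (oracle dependence threading through the iterated elimination for $k\ge 2$) dissolves once you notice what you almost say: along a \emph{fixed} transcript the tested polynomials are fixed rational functions of $(u_1,\dotsc,u_k)$ --- the oracle enters only through which transcripts are consistent with $O$, as an AND of at most $p(n)$ literals --- so the entire sign-invariant decomposition adapted to the block structure can be computed obliviously to $O$, and for constant $k$ the Basu--Pollack--Roy bounds keep everything singly exponential. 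The trade-off: the paper's counting argument is shorter, fully elementary, and needs nothing about semialgebraic sets; your argument is heavier but makes the geometric structure of the witnesses explicit and would also tell you \emph{which} finitely many cells matter, not merely that finitely many witnesses exist.
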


    Note that the previous theorem implies the following corollary by choosing $k=0$.
    (We denote by $\PR$ polynomial-time computations on a real Turing machine with binary input, also denoted as $\text{BP}^0(P_{\R})$ in the literature.)

    \begin{corollary}
        \label{cor:NP-notin-PR}
        There exists a
        binary oracle $O$ such that $\NP^O \nsubseteq \PR^O$.    
    \end{corollary}
    This hints that polynomial time realRAM computation 
    is likely not capable of solving \NP-hard problems.
    (Specifically, this hints that the 
    result by B\"{u}rgisser and Jindal~\cite{BJ24} about NP-hardness of \PosSLP is not true in full generality, see explanation on \PosSLP below.)
    In other words, their assumptions are too strong.
    %
    It is trivial that $\SPHlevel{k} \subseteq \SRPHlevel{k}$, for every $k$, as we can restrict any real witness to be binary.
    This implies that we also have a separation of the  different levels within \RPH.

    \begin{corollary}
        For each $k\geq 0$, there exists a binary oracle $O$ such that $\SRPHlevel{k}^O \subsetneq \SRPHlevel{k+1}^O$.
    \end{corollary}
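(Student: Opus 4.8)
The plan is to deduce the corollary from \Cref{thm:RPHlayers} together with two elementary, fully relativizing containments, so that the only substantive ingredient is the separation theorem itself.

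First I would record the upward containment $\SRPHlevel{k}^O \subseteq \SRPHlevel{k+1}^O$, valid for every oracle $O$ and every $k \ge 0$. Given a defining expression $Q_1 u_1 \in \R^{q(|x|)} \cdots Q_k u_k \in \R^{q(|x|)}\, M^O(x,u_1,\dots,u_k) = 1$ for a language in $\SRPHlevel{k}^O$, one appends a dummy quantifier block $Q_{k+1} u_{k+1} \in \R^{q(|x|)}$ whose contents the machine simply ignores; this is an $\SRPHlevel{k+1}^O$-expression for the same language. For the base case $\SRPHlevel{0} = \PR$ this is the standard observation that $\PR^O \subseteq \ER^O$ (prepend one existential block). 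Next I would use the trivial inclusion $\SPHlevel{k+1}^O \subseteq \SRPHlevel{k+1}^O$ already noted in the introduction: a Boolean $\SPHlevel{k+1}^O$-computation is simulated by an $\SRPHlevel{k+1}^O$-computation whose real machine first rounds each coordinate of every witness block to $\{0,1\}$ and then runs the Boolean predicate --- rounding does not change the value of the quantified game, since either player can still realize any Boolean assignment.

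Finally I would fix $k \ge 0$ and invoke \Cref{thm:RPHlayers}: the set of binary oracles $O$ with $\SPHlevel{k+1}^O \nsubseteq \SRPHlevel{k}^O$ has probability $1$, hence is nonempty, and we pick such an $O$. Chaining the containments gives $\SRPHlevel{k+1}^O \supseteq \SPHlevel{k+1}^O \nsubseteq \SRPHlevel{k}^O$, so $\SRPHlevel{k+1}^O \nsubseteq \SRPHlevel{k}^O$, and together with $\SRPHlevel{k}^O \subseteq \SRPHlevel{k+1}^O$ this gives the claimed strict inclusion $\SRPHlevel{k}^O \subsetneq \SRPHlevel{k+1}^O$. (A single oracle working for all $k$ simultaneously is obtained by intersecting the countably many probability-$1$ events.) There is no real obstacle beyond \Cref{thm:RPHlayers}; the only place to be slightly careful is that the dummy-block argument must conform to the formal definition of $\RPH$ --- the real Turing machine is allowed to ignore its last witness block, and the single padding polynomial $q$ can be reused for that block.
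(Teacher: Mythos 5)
Your proposal is correct and follows essentially the same route as the paper: combine $\SPHlevel{k+1}^O \subseteq \SRPHlevel{k+1}^O$ and $\SRPHlevel{k}^O \subseteq \SRPHlevel{k+1}^O$ with \Cref{thm:RPHlayers} to extract a language in $\SRPHlevel{k+1}^O \setminus \SRPHlevel{k}^O$. One cosmetic caution: since the model deliberately excludes a general rounding operation, phrase the inclusion $\SPHlevel{k+1}^O \subseteq \SRPHlevel{k+1}^O$ via sign tests (threshold each real witness coordinate against $1/2$, or test equality with $0$ and $1$ and reject/accept non-Boolean witnesses according to the quantifier) rather than as ``rounding''; the game-value argument you give then goes through unchanged.
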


    At last, we turn our attention to quantum computing.
    We were wondering if \ER would be powerful enough to simulate quantum computing.
    Quantum computing relies on quantum mechanics, where qubits can exist in superposition, representing a continuous range of states. 
    Quantum algorithms manipulate these states through interference and entanglement, which allows for exploration of solution spaces differently than classical computers. 
    Although the number of amplitudes to describe a state in superposition can be exponential,  we hoped that one might be able to simulate those 
    quantum computations using real numbers. 
    
    The idea was that the amplitudes of all quantum states could be stored by real numbers that could even be guessed by a real Turing machine, i.e., $\BQP \subseteq \ER$.
    Again, using oracle separation, we could show that this is not likely.
    We show an even stronger statement by comparing \BQP with \RPH.

    \begin{restatable}{theorem}{BQPnotinRPH}
    \label{thm:BQPnotinRPH}
        There exists an oracle $O$ such that $\BQP^O \nsubseteq \RPH^O$.
    \end{restatable}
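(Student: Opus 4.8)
The plan is to follow the same route as \Cref{thm:PSPACEnotRPH} and \Cref{thm:RPHlayers}, but with the hard Boolean function (something hard for small constant-depth circuits, such as the parity of the oracle bits at a given length, resp.\ the Sipser function \sipser) replaced by the \textsc{Forrelation} problem, and the easy upper bound ($\PSPACE$, resp.\ $\SPHlevel{k+1}$) replaced by $\BQP$. Concretely, fix the distribution $\mathcal D$ over binary oracles used by Raz and Tal in their separation of $\BQP$ from $\PH$: an oracle $O\sim\mathcal D$ encodes, for each input length, a pair of Boolean functions drawn either from the ``forrelated'' distribution or from the uniform one. Relative to $O$ the standard (Hadamard, query $f$, Hadamard, query $g$, Hadamard) quantum algorithm decides the \textsc{Forrelation} distributional problem in polynomial time, so (handling the promise exactly as Raz and Tal do) $\textsc{Forrelation}^O\in\BQP^O$ for $O\sim\mathcal D$ with high probability; this part uses nothing about our model and is taken over verbatim. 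All the work is in showing that for $O\sim\mathcal D$ we have $\textsc{Forrelation}^O\notin\RPH^O$ with probability $1$.

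For that I would invoke the circuit simulation of $\RPH^O$ that is the core of our transfer technique. Fix $L\in\SRPHlevel{k}^O$, witnessed by a polynomial-time real Turing machine $M$ with real quantifier blocks $u_1,\dots,u_k\in\R^{q(n)}$, and fix an input length $n$; at most $N=2^{\poly(n)}$ oracle bits can ever be touched on inputs of length $n$. For fixed $x$ and fixed reals $u_1,\dots,u_k$ the machine $M$ runs for $\poly(n)$ steps, so it makes at most $\poly(n)$ oracle queries and branches on at most $\poly(n)$ sign conditions, each given by a polynomial of degree at most $2^{\poly(n)}$ in the $u_i$. By the Milnor--Thom / Warren bound these $\poly(n)$ polynomials partition $\R^{k\,q(n)}$ into at most $2^{\poly(n)}$ sign-condition cells, on each of which the output and the query pattern of $M$ are constant, and by the ball theorem each nonempty cell is certified by a point of bit-complexity $2^{\poly(n)}$. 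Replacing each real quantifier $Q_i u_i\in\R^{q(n)}$ by the induced finite disjunction/conjunction over these cells, exactly as in the proofs of \Cref{thm:PSPACEnotRPH} and \Cref{thm:RPHlayers}, expresses the indicator of $L$ on length-$n$ inputs as a Boolean circuit of depth $O(k)$ and size $2^{\poly(n)}$ over the $N$ oracle bits --- a constant-depth circuit of size quasipolynomial in $N$.

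Now fix an arbitrary $\RPH$ machine (a real Turing machine $M$, a constant $k$, a polynomial $q$); I claim $\Pr_{O\sim\mathcal D}[\,M\text{ decides }\textsc{Forrelation}^O\,]=0$. Indeed, if $M$ decided it correctly on all lengths, then by the simulation above, for each $n$ the \textsc{Forrelation} predicate on the $N$ relevant oracle bits would be computed exactly by a constant-depth circuit of size quasipolynomial in $N$, and such a circuit distinguishes the ``forrelated'' sub-distribution from the uniform one with advantage $\Omega(1)$; this contradicts the Raz--Tal Fourier-analytic bound, which says precisely that constant-depth circuits of quasipolynomial size have distinguishing advantage $o(1)$. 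Hence for every large $n$, $\Pr[M\text{ correct on length }n]\le 1-\Omega(1)$, and since the oracle is drawn independently per length these events are independent, so $\Pr[M\text{ correct on all large }n]=0$; a union bound over the countably many $\RPH$ machines gives that with probability $1$ no $\RPH$ machine decides $\textsc{Forrelation}^O$. Combined with the (high-probability) correctness of the quantum algorithm, there is a positive-probability set of oracles $O$ with $\textsc{Forrelation}^O\in\BQP^O\setminus\RPH^O$; fixing one proves $\BQP^O\nsubseteq\RPH^O$. (If one prefers not to discretize the real witnesses but to work with the real analogue of \AC --- constant-depth circuits with additional sign-of-polynomial gates --- one simply invokes the corresponding extension of the Raz--Tal bound; the conclusion is unchanged.)

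The main obstacle is controlling the \emph{size} of the circuit produced by the $\RPH^O$ simulation, and this is exactly the technical content inherited from \Cref{thm:PSPACEnotRPH} and \Cref{thm:RPHlayers}. A polynomial-time real machine can, via repeated squaring, manufacture polynomials of degree $2^{\poly(n)}$, so the sign-condition cells have certifying points of bit-complexity $2^{\poly(n)}$; discretizing each real quantifier over a \emph{grid} of rationals of that bit-complexity would produce disjunctions of fan-in $2^{2^{\poly(n)}}$ --- doubly exponential in $n$, hence far above the $2^{\poly(n)}$ size range in which the Raz--Tal (and Håstad) lower bounds are meaningful, and the whole argument would collapse. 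The resolution is to discretize over sign-condition \emph{cells} rather than over a grid: their number is only $2^{\poly(n)}$, since the cell bound $(sD)^{O(m)}$ with $s,m=\poly(n)$ and $D=2^{\poly(n)}$ stays singly exponential in $n$ (equivalently quasipolynomial in $N$), so the induced disjunctions have manageable fan-in even though individual certifying points are enormous. Making this rigorous in the presence of the adaptive, oracle-dependent branching of $M$, and carefully tracking the constant additive increase in depth per quantifier alternation, is where the real care goes; once it is in place, the \textsc{Forrelation} separation drops out by plugging in Raz and Tal's bound.
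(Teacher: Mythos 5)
Your top-level strategy coincides with the paper's: take the Raz--Tal distributions $\mathcal{D}_1$ and $\mathcal{U}_{N_1}$, build a distributional oracle (one block per input length, forrelated or uniform according to a random language $L\subseteq\{1\}^*$), use the quantum algorithm verbatim for the $\BQP^O$ upper bound, convert any $\RPH^O$ machine into a constant-depth circuit of size $N^{\bigO(\polylog N)}$ over the oracle bits, plug in the Raz--Tal advantage bound, and finish with independence across lengths plus a union bound over the countably many machines. Where you genuinely diverge is in the one step that carries all the technical weight, namely how the real quantifiers are discretized. The paper does this combinatorially: \Cref{lem:NrOfConstfreeRTMs} bounds the number of $T(n)$-distinct real oracle Turing machines by $\multiplier^{T(n)}$, and \Cref{lem:ConstrainedQuantifierRanges} inductively builds finite witness sets $\LLL(n,T,i,k)$ of size $(2^n\cdot\multiplier^{T(n')})^{2^{i-1}}$, one witness per input, per machine behaviour, and per tuple of earlier witnesses; no algebraic geometry is needed. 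You instead count realizable sign-condition cells of the branching polynomials via Milnor--Thom/Warren and use bit-complexity bounds for cell representatives. Your approach is a recognized alternative (it is essentially how $\PR\subseteq\PSPACE/\poly$-type results are proved), and your diagnosis of why a grid discretization fails while a cell count succeeds is exactly right; the paper's route is more elementary and avoids any degree or bit-complexity bookkeeping.

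There is, however, one concrete hole in your sketch as written: quantifier alternation. Replacing $\exists u\in\R^{q(n)}$ by a disjunction over representatives of the nonempty cells of a Milnor--Thom decomposition of the \emph{joint} space $\R^{kq(n)}$ is sound only for a single existential block. For $\exists u_1\forall u_2\cdots$ you need, for each fixed $u_1$, a finite set of candidate $u_2$'s that is exhaustive for the inner quantifier, and the projections of a joint sign-condition decomposition onto the $u_1$-coordinates are not cylindrical, so cells of the joint space do not induce valid finite ranges block by block. You must either run a block-wise quantifier-elimination argument in the style of Renegar/Basu--Pollack--Roy (singly exponential per block, compounding only over the constant number $k$ of blocks) or adopt the paper's inductive witness-selection, where $\LLL(n,T,i+1,k)$ explicitly contains one witness for every tuple in $\LLL(n,T,1,k)\times\cdots\times\LLL(n,T,i,k)$ and every machine behaviour, which is why its size grows as $(\cdot)^{2^{i}}$ yet stays $2^{\poly(n)}$ for constant $k$. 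With that step repaired, your argument goes through and yields the same conclusion as the paper's proof.
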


    \paragraph{The main message.}
    The main message of our study is that working with a real model of computation \textit{seems not} to give any great additional power to the binary counterparts.
    This confirms our intuition that most problems can be solved more or less equally fast on the wordRAM and the realRAM.
    This is in contrast to previous models, which appear significantly more powerful.
    For example, the simple addition of rounding allows a real Turing machine to compute $\PSPACE$-complete problems in polynomial time.
    So, despite some models that work with real numbers being extraordinarily powerful, we convey that simply using the realRAM is not an all-powerful model that can solve problems unrealistically fast.


    \paragraph{Foundations.}
    In order to have a self-contained work, we give a detailed account with elaborate definitions of our model of real computation and other basic concepts, see \Cref{sec:preliminaries}.


\subsection{Polynomial Time Computations with Reals}
\label{sub:RealComputation}

This article addresses the power of real number computations, which makes it worthwhile to provide some background on their practical applications. One might ask: if real number computations are rarely used, why should we care? The answer is that real number computation is ubiquitous, used across many domains, making it difficult to comprehensively list every application. Instead, we provide a few selected examples.  

\begin{itemize}
\item \textbf{Computational Geometry:} Computational Geometry often uses the realRAM model as its standard model of computation, primarily due to the inherent nature of geometric data being real-valued. Additionally, computational geometers prefer not to deal with numerical issues that arise from rational number computations, nor with challenges such as collinearities and degenerate inputs. Although these are practical issues, addressing them is cumbersome and often considered uninteresting, leading researchers to ignore them in theoretical models~\cite{mark2008computational, de2000computational}.
\item \textbf{Convex Optimization:} In Convex Optimization, the goal is to find a point that minimizes a linear objective function given a convex program. According to O'Donnell~\cite{NotAutomatizable}, the community often assumes that semi-definite programming (SDP) can be solved efficiently, i.e., in polynomial time. However, as pointed out in O'Donnell's work, this is not always the case, particularly in specific degenerate scenarios.
\item \textbf{Continuous Probabilities:} In areas such as machine learning, statistics, and smoothed \mbox{analysis}, it is common to draw a random number $x$ from the interval $[0,1]$. Without the ability to handle real-valued inputs, further computation on this value would be impossible. Thus, these fields implicitly require computational models that support real numbers. A well-known example is the smoothed analysis framework by Spielman and Teng~\cite{spielman2009smoothed}.
\end{itemize}

\subsection{The Existential Theory of the Reals}
\label{sec:ETR}

\paragraph{Significance.}
The complexity class \ER or also called the existential theory of the reals gained popularity in the last decades due to the fact that a series of central problems in theoretical computer science became known to be \ER-complete.
Thus \ER-completeness is describing the complexity of those problems precisely.
The lecture notes by Matou\v{s}ek~\cite{M14} gives a gentle introduction to the complexity class.
The compendium by Schaefer, Cardinal and Miltzow~\cite{ERcompendium} gives an overview of all known \ER-complete problems.
Some examples are stretchability of pseudoline arrangements~\cite{S91,M88}, non-negative matrix factorization~\cite{S16}, Nash equilibria~\cite{DGP09, SS17}, training neural networks~\cite{BHJMW22, Z92,AKM21}, polytope realization~\cite{RG99} and geometric packing~\cite{AMS24}, to name just a few highlights.

\paragraph{History.}
The name appeared first in a conference version by Schaefer~\cite{S10}, yet a handful of \ER-completeness proofs 
had already been established before, see~\cite{S91,RG99, B91, KM94}.
The complexity class under the name $\BP^0(\NP_\R)$ was already used considerably earlier, but the connection between the two research lines had only been made later~\cite{ERcompendium}.
While Schaefer defined \ER using logical formulas over the reals, 
$\BP^0(\NP_\R)$ is defined using real Turing machines. 
The equivalence between the two is not difficult to establish.
The definition via real Turing machines allows us to ask about oracle separations, and so we use this definition here as well. 
In contrast, the definition via logic, makes it convenient to be used as a starting point for hardness reductions.

\paragraph{Practical Difficulty.}
    In order to understand how meaningful \ER-completeness vs \NP-completeness is, practical difficulty is an important point to consider.
    The vague impression from the authors is that \ER-complete problems are more difficult to solve than \NP-complete problems. 
    We will give arguments that support this perspective.
    However, one should not be aware that it is possible to create \NP-complete problems for which small generic instances are impossible to solve.
    Furthermore, there are \ER-complete problems which are solved sufficiently in practice with giant instances, like training neural networks.
    Nevertheless, we want to argue that in some sense \ER-complete problems are ``typically'' harder to solve than \NP-complete problems.
    We give a few reasons below.
    \begin{itemize}
        \item There is no ``simple'' or straightforward algorithm to solve \ER-complete problems optimally.
        All existing algorithms require deeper knowledge of polynomials and real algebraic geometry.
        \item There are general purpose algorithms to solve \NP-complete problems. The two most famous ones are Integer Linear Programming solvers and SAT-solvers.
        Both of them run very fast in practice on huge instances with provable optimal solutions.
        \item Fast methods for \ER-complete problems are typically based on gradient descent method (sometimes in disguise). 
        Gradient descent has no guarantee on the optimum and could stop any time in a local optima arbitrarily far away from the global optimum.
        This behavior is avoided in training neural networks by carefully designing neural network architectures, with redundancies, data diversity and special network structures, and many other clever tricks that developed over decades from a very large and very active community. 
        \item There are tiny instances known of \ER-complete problems that are not solved by humanity.
        The prime example is finding the smallest square container to fit $11 = 1 + 1 + 1 + 1 + 1 + 1 + 1 + 1 + 1 + 1 + 1$ unit squares.
        In comparison, any SAT instance with $11$ variables can solved by a patient teenager, or by a brute-force program written in 30 minutes that runs in 10 milliseconds on the phone in your pocket.
    \end{itemize}

\paragraph{Complexity.}
As mentioned before, it is known that $\NP \subseteq \ER \subseteq \PSPACE$. 
While the first part follows directly from the definition of \NP and \ER (You need to use the right definition though.), the second part is considered a major breakthrough, first established by Canny~\cite{C88b}, see also Renegear~\cite{R92A, R92B, R92C, R92}.
Our results hint that \ER is strictly contained in \PSPACE.

\paragraph{Beyond \ER.}
We are currently aware of only a handful of problems that are known to be complete for other levels of the polynomial hierarchy. 
Note that the second level $\SRPHlevel{2}$ is also denoted $\exists \forall \R$ and $\PRPHlevel{2}$ is denoted $\forall \exists \R$.

The first problem is a generalization of \ETR.
In \ETR we are given a quantifier free formula $\varphi$
and we ask if there exists $x\in\R^n$ such that $\varphi(x)$ evaluates to true.
This generalization is called $Q_1\ldots Q_k$-\textsc{Fragment of the Theory of the Reals}.
Here $Q_i$ are alternating existential and universal quantifiers.
So this generalization is really a family of algorithmic problems.
In this generalization, we ask if
$Q_1 x_1\in \R^n \ldots Q_k x_k\in \R^n : \varphi(x_1,\ldots,x_k)$.
B\"{u}rgisser and Cucker showed that this generalization is complete for the corresponding level of the real polynomial hierarchy~\cite{BC09}. For example, the $\exists \forall$ fragment is complete for $\exists \forall \R$.
Furthermore, Schaefer and \v{S}tefankovi\v{c}~\cite{SS23}
showed that $\varphi$ can be replaced by the predicate $p>0$, for some multivariate polynomial $p$.
And additionally, instead of quantifying over the reals it is sufficient to quantify over the unit interval.
While this result is a bit abstract and technical, it is an important tool for hardness reductions.

For example, it was a key step to show that computing the \textsc{Hausdorff Distance} is $\forall \exists \R$-complete~\cite{JKM23}.
Other problems are the \textsc{Compact Escape Problem}~\cite{DCLNOW21}, \textsc{Surjectivity}~\cite{SS23, BC09}, \textsc{Image Density}~\cite{BC09, JJ23}, 
being \textsc{Star Shaped} for compact semi-algebraic sets~\cite{starhard05},
and a restricted version of \textsc{Area Universality}~\cite{DKMR18}.

\subsection{Oracle Separations}
\label{sub:OracleHistory}

We start with the historical motivation of oracle separation results.
In \Cref{tab:oracle}, we give a set of selected of milestone results.

\begin{table}[tbp]
    \centering
\begin{longtable}{|p{4.2cm}|p{10cm}|}
    \hline
    \textbf{Author \& Year} & \textbf{Results} \\
    \hline
    1975 Baker,  Gill and Solovay~\cite{baker1975relativizations}  & Oracle separation of P and \NP \\ 
    \hline
    1981 Furst, Saxe, and Sipser~\cite{FSS84} &  Relation between constant-depth circuits and the polynomial hierarchy \PH
    \\
    \hline
    1981 Bennett and Gill~\cite{bennett1981relative}    & Relative to a random oracle $O$, $\text{P}^{O}\neq \NP^O \neq \text{co-NP}^O$ with probability 1\newline Introduction of the \emph{random oracle hypothesis}: random oracle separation implies unrelativized separation (disproved; see later in the table)\\
    \hline
    1985 Sipser~\cite{Sip83} and Yao~\cite{Yao85}  &  Oracle separation of levels of \PH, for some oracle\\
    \hline
    1986 H\aa stad~\cite{hastad1986} & 
    Influential Switching Lemma to show circuit lower bounds
    \\
    \hline
    1986 Cai~\cite{Cai86} & Relative to a random oracle \(O\), $\PH^O \subsetneq \PSPACE^O$ with probability $1$\\
    \hline
    1994 Book~\cite{book1994collapsing} 
    & Collapsing of the polynomial time hierarchy without oracle access (\PH) implied by collapsing of polynomial time hierarchy with access to a random oracle ($\PH^O$) with probability \(1\)\\
    \hline
     1994 Chang, Chor,  Goldreich,   Hartmanis,   H{\aa}stad,   Ranjan, Rohatgi~\cite{chang1994random}
     & Refutation of the random oracle hypothesis: $\textsc{IP}^O \neq \PSPACE^O$, despite $\textsc{IP} = \PSPACE$~\cite{shamir1992ip}\\
    \hline
    
    2010 Aaronson~\cite{Aaronson2010}  & $\BQP^O$ not contained in $\PH^O$ for some oracle $O$ under additional assumptions \newline Arguments for importance of oracle separations\\
    \hline
    2015 Rossman, Servedio and Tan~\cite{Rossman2015FOCS} 
    & Distinctness of each level of \PH\ relative to a random oracle \(O\) with probability \(1\) \\
    \hline
    2022 Raz and Tal~\cite{RT22}  & $\BQP^O$ not contained in $\PH^O$ for some oracle $O$ \\
    \hline
\end{longtable}
\caption{A chronological list of selected results related to oracle separation.}
    \label{tab:oracle}
\end{table}

\paragraph{Historic Motivation.}
To give a motivation of oracle separation results, it is maybe worth to go one more step back.
The origins of modern complexity theory lie in a formal definition of computations in the 1930s.
Some computational problems are shown to be undecidable~\cite{AB09} using diagonalization arguments.
Furthermore, and particularly of note in the context of computations with real numbers, researchers defined the arithmetical hierarchy~\cite{Kleene43,Mostowski79} that contains different levels of undecidability.
It is known that the different levels of the arithmetical hierarchy are distinct (without the use of oracles)~\cite{Rogers69}.
Later, the focus shifts to the study of more limited complexity classes, like \P, \NP, \PH, \PSPACE etc.
The original hope was that proof strategies that 
could show the undecidability of the \textsc{Entscheidungsproblem} would carry over to those more fine-grained new complexity classes.
Interestingly, all early separation proofs used only simple properties of Turing machines.
It is not exactly clear how to define ``simple'' properties, but to give the intuition, consider two complexity classes $\mathcal{A}$ and $\mathcal{B}$.
Let us assume that they are defined using the two machine models $M_A$ and $M_B$.
Now, alternatively, we could also define the two complexity classes using machine models $M_A^O$ and $M_B^O$ for some oracle $O$.
(This is also called the relativized world with respect to oracle~$O$.)
This leads to the complexity classes $\mathcal{A}^O$ and $\mathcal{B}^O$.
Assume further that $\mathcal{A}^O = \mathcal{B}^O$,
but we expect that $\mathcal{A} \neq \mathcal{B}$.
Then it is clear that we need to use some property of $M_A$ and $M_B$
that is not true for $M_A^O$ and $M_B^O$.
In summary, a ``simple'' property is such a property that is true for all oracle machines $M^O$.

\paragraph{Technical Ideas.}
The main idea that underscores many oracle separation results, is a relation to small depth circuits that was first pointed out by Furst, Saxe and Sipser~\cite{FSS84}.
To explain the underlying idea, we consider the function \Parityn. 
It receives as input $n$ binary numbers $x_1,\ldots,x_n$ and returns the parity of $x_1+\ldots +x_n$.
\Parityn\ can be solved in polynomial time, but it \textit{cannot} be solved by constant-depth circuits of ``small'' size.
The proof of the latter fact is not easy -- one way to establish it is using H\r astad's switching lemma~\cite{hastad1986}.

Given an oracle $O$, we can define the language 
\[\ParityA = \{ 1^n \mid \text{\emph{the number of binary strings of length $n$ in $O$ is odd}}\}.\]
Intuitively, one can see the problem of deciding $1^n \in \ParityA$ as solving $\Parity$ with exponentially many inputs.
Namely, all the results of queries $x \in O$ with $x$ of length $n$.
The key insight is that as $\Parity$ cannot be solved with ``small'' constant depth circuits, neither can \ParityA.
Furthermore, it is possible to translate an algorithm solving problems in $\PH^O$,
as a circuit of constant depth of ``small'' size.
(Quasi-polynomial in terms of the exponential input the circuit receives.)
As the lower bound of the size of the circuit is larger than the upper bound, we 
will conclude that $\ParityA$ is not contained in $\PH^O$.
Now, using that \ParityA is contained in $\PSPACE^O$, we will conclude that
 $\PH^O$ is \textit{properly} contained in $\PSPACE^O$.

Other oracle separation results work similarly.
We summarize the key steps:

\begin{enumerate}
    \item Consider a simple function $f : \{0,1\}^*\rightarrow \{0,1\}^*$. (\Parity was the function we used above.)
    \item Find some circuit lower bound for computing $f$.
    \item Translate this circuit lower bound to the oracle world with exponentially many inputs.
    \item Show that the considered complexity class yields smaller than possible circuits.
\end{enumerate}
Once the lower bound is above the upper bound, we can imply a contradiction and have identified a function outside the considered complexity class.
We follow the general approach and build upon already existing lower bounds.
So, we aim to find appropriate upper bounds for circuits deciding~$\RPH^O$.

\paragraph{Real Oracle Separations.}
To the best of our knowledge there are three papers that deal with oracle separations of real complexity classes.
The first result is by Emerson~\cite{emerson1994relativizations} who showed that $\P_\R$ and $\NP_\R$ can be separated using oracles.
The second result by Ga{\ss}ner~\cite{gassner2010separation} is separating $\P_\R$ from $\text{DNP}_\R$ using oracles.
(In the complexity class $\text{DNP}_\R$ we have a binary witness, but can do computations with real numbers.)
The third result by Meer and Wurm~\cite{MeerWurm2025} shows that 
$\NP^\Z \nsubseteq \ER^\Z \neq \PSPACE^\Z$.
Here the integers $\Z\subset \R$ serve as an oracle.
Note that boolean models of computation cannot make much use of real oracles.

\subsection{Proof Overview}
Similarly as for the corresponding known separation results involving the classic polynomial hierarchy, we can use the following:
\begin{itemize}
    \item \ParityA is in \(\PSPACE^O\) and there is a known lower bound on the size of any constant-depth circuit deciding \ParityA~\cite{Cai86}.
    \item There is a so called \emph{Sipser-like function} for each constant \(k\) which lies in the \(k\)-th layer of the polynomial hierarchy and for which there is a known lower bound on the size of any circuit of depth \(k - 1\)~\cite{Yao85, Sip83}.
    \item There is a distribution over \(\{0,1\}^*\) such that deciding whether a sequence was sampled from this distribution when restricted to be sampled from the uniform distribution otherwise can be done with a certain bound on the probability of false positives in \BQP\ and for which there is a known lower bound on the size of any constant-depth circuit ensuring the same guarantee \cite{RT22}.
\end{itemize}


The crucial difference between known separation results involving the polynomial hierarchy and our new separation results involving the real polynomial hierarchy lies in the following fact:
In the classic polynomial hierarchy the quantification over variables given to a Turing machine in addition to a language \(L\) on a specific level of the hierarchy range over \textit{Boolean} values.
Hence, a bounded-depth circuit for computing \(L\) can be given by encoding the characteristic function of \(L\) on all Boolean input strings of length \(n\), i.e., via \(2^n\)-many inputs.
This is not immediately possible for quantification over \textit{real} variable ranges, which underlies the definition of the real polynomial hierarchy.


The technical basis of our separation results is \Cref{lem:ConstrainedQuantifierRanges} which shows the existence of some set~$\LLL$ which we can use instead of $\R$ as the range for the quantifiers.
This in turn enables us to derive new upper bounds on 
the circuit size of any \RPH algorithm, see \Cref{thm:RPHandACanalogue} and \Cref{thm:CircuitLowerBoundsImplyNoMembershipInRPH}.
And those upper bounds are smaller than known lower bounds for functions in $\PSPACE^O$, $\BQP^O$ as well as higher levels of $\PH^O$.
And in this way we conclude the separation results.
We note that this set~$\LLL$ is not constant or even of constant size -- in fact we give an easy argument, why it cannot be.
The sets to which we can restrict the quantifier ranges will be constructed inductively using a bound on the number of real Turing machines which we establish in \Cref{lem:NrOfConstfreeRTMs}.
\subsection{Discussion}
\label{sec:discussion}

In this section, we aim to discuss our results from different perspectives.
That includes to give a fair evaluation, highlight shortcomings, and possible improvements or alternative perspectives.

\paragraph{How Meaningful are Oracle Separations.}
    As mentioned earlier, oracle separation really separates machine models and not complexity classes.
    How meaningful those separations are with respect to the original machine model is worth a discussion.
    In the following, we list arguments that either support or reduces the strength of oracle separation results, as far as we are aware of them.
    
    \begin{itemize}
        \item 
            At first, it is important to note that oracle separations are not decisive.
            Specifically, it is known that  $\textsc{IP} = \PSPACE$~\cite{shamir1992ip}, but there are oracles \(O\) such that $\textsc{IP}^O \neq \PSPACE^O$~\cite{chang1994random}.
        \item Oracles are still useful, because
        without being able to show \(\mathcal{A}^{O} \neq \mathcal{B}^{O}\) for any oracle \(O\) one cannot even hope to show \(\mathcal{A} \neq \mathcal{B}\) for two complexity classes with ``black box methods'' such as diagonalization.
        For the polynomial hierarchy, in particular if \PH with random oracle access were to collapse, so would \PH without oracle access~\cite{book1994collapsing}.
        
        \item Aaronson~\cite{Aaronson2010} makes the case that oracle separations give ``\textit{lower bounds in a concrete computational model that is natural and well motivated}''.
        Aaronson refers to this model as \textit{query complexity}. 
        The idea is that we lower bound the number of queries to the oracle.
        Maybe, instead of the number of queries, one might also want to think of the way the results from the queries are processed.
        In our work, we use known lower bounds and give new upper bounds.
        \item At last, 
        whenever we have some oracle separation results,
        they are based on some unconditional separation results in some concrete models of computation.

    \end{itemize}

\paragraph{Random Oracle Separation.}
    While \Cref{thm:PSPACEnotRPH} is true for a random oracle,
    the oracles for \Cref{thm:RPHlayers} and \Cref{thm:BQPnotinRPH} are of a more complicated structure. 
    We wonder if the later separations are also true for random oracles.
    There are several reasons why we care about random oracles more than just a specific oracle.
    \begin{itemize}
        \item One reason is that a random oracle seems to capture on an intuitive level more an ``arbitrary'' oracle in comparison to a well-crafted oracle.
        The specific oracle could be engineered to make a certain proof work, but has maybe less chance to give us an intuition to the world without oracles.
        If we can show that two complexity classes are different with respect to a random oracle with probability $1$,
        then we showed that they are different with infinitely more oracles than just one.
        And thus this gives a stronger intuition that the complexity classes are indeed different without oracles as well.
        Indeed, the \textit{random oracle hypothesis} states that if two complexity classes are different with respect to a random oracle, then they are different~\cite{bennett1981relative}.

        We want to warn here to over interpret this idea.
        For starters, $\textsc{IP}^O \neq \PSPACE^O$, even for a random oracle $O$.
        In other words, the random oracle conjecture is disproven.
        Furthermore, random structures all ``look the same''.
        To understand what we mean, let $P$ be a property and $S$ be a random structure. 
        We know that $P$ will hold for $S$ with either probability $0$, or probability $1$, when the size of $S$ goes to infinity.
        So, if we are interested with oracles with a different set of properties, a random oracle, gives us only \textit{one} example.

        So arguably more interestingly would it be to have oracle separations with different types of oracles, i.e., oracles satisfying different types of properties.
        
        \item Another reason that ``we'' care about random oracles is that several members of the research community care about it~\cite{barak_ac0_notes, hastad1986, Rossman2015FOCS}.
    \end{itemize}

\paragraph{PosSLP Separations.}
    The maybe most notable work to describe the difference between binary and real computation is via \PosSLP.
    In \PosSLP we are given a finite sequence $a_1,\ldots,a_n$ as follows.
    $a_0 = 0, a_1 = 1$ and each $a_k = a_i \oplus a_j$, with $i,j <k $ and $\oplus \in \{+,-,\cdot\}$.
    Note that this sequence is not given using binary numbers, but by merely giving the sequence of operations to compute each $a_k$.
    \PosSLP asks if the number $a_n$ computed in that way will be positive.
    Note that \PosSLP can be encoded using the number $0,1$.

    It was shown by Allender, B\"{u}rgisser, 
    Kjeldgaard-Pedersen, and Miltersen~\cite{ABKPM08}
    that the polynomial-time computation on the wordRAM with an oracle to \PosSLP is equally powerful as polynomial-time computation on the realRAM.
    (Both under binary inputs.)
    Thus, in that sense, \PosSLP captures the difference between the real computation and binary computation with regard to polynomial-time algorithms.
    In other words $\P^{\PosSLP} = \BP(\P_\R) = \PR$.
    Note that this has no direct consequences to the higher levels of the (real) polynomial hierarchy.
    The reason is that guessing real numbers and doing computations with real numbers is not the same.
    
    B\"{u}rgisser and Jindal~\cite{BJ24} showed that \PosSLP is NP-hard under some strong assumptions.
    As $\P^{\PosSLP} = \BP(\P_\R)$ this would imply that $\NP \subseteq \PR$.
    Now, \Cref{cor:NP-notin-PR}
        says that there exists a
        binary oracle $O$ such that $\NP^O \nsubseteq \PR^O$. 
    Thus, if we believe the underlying assumptions from
    B\"{u}rgisser and Jindal~\cite{BJ24}, then they have found a property of polynomial time Turing machines that are not true for oracle Turing machines.

\paragraph{Separating \NP from \ER.}
    Maybe one of the most interesting questions is if we can give an oracle separation between \NP and \ER. 
    Note that this is not possible with our techniques, as \ER is the real analogue of \NP, 
    so we cannot build on previous oracle separation results.
    Some authors conjectured that $\NP^\PosSLP = \ER$~\cite{ERcompendium}.
    If true, then \PosSLP would explain the difference between \NP and \ER.
    Yet, it is possible that there are suitable oracles to separate \NP from \ER without fully understanding the difference between \NP and \ER.


\paragraph{Smoothed Analysis.}
    We are aware of a structural approach to limit the power of real RAM computations.
    Erickson, Hoog and Miltzow showed that a host of \ER-complete problems can be solved in \NP time with high probability, if we apply a small random perturbation to the input~\cite{EvdHM20}.
    Although the methodology of smoothed analysis is completely different to oracle separation, the aim is similar. 
    The aim is to limit the power of computations with real numbers.

\paragraph{Notation and Motivation of Model.}
    We have taken the liberty to deviate from previous naming of our complexity classes.
    In the BSS model, \ER is denoted by $\BP(\NP^0_\R)$,
    and \PR is denoted by $\BP(\P^0_\R)$~\cite{BSS89,BCSS98}.
    The abbreviation $\BP$ stands for binary part and the zero ``$0$'' indicates that no constants are added to the machine.
    We use our notation and thus deliberately deviate from previous notation in the literature.
    The reason being that to us the complexity classes without machine constants and with binary inputs are the more natural ones to study and thus deserve a catchy name that is easy to parse and to remember.
    There are several reasons we found our complexity classes more interesting. 
    \begin{itemize}
        \item Introducing arbitrary real constants that the machine can use leads to power that is at least \P/poly~\cite{Koiran94}.
        \item We know many more complete problems for \ER       than for $\NP_\R$~\cite{ERcompendium}. 
        \item When we restrict to deciding binary languages, we can compare those complexity classes to classical binary complexity classes. 
    \end{itemize}
    
For good measure, it is fair to mention that certain results are more natural when formulated using real inputs. 
Furthermore, even if we only care about binary inputs, some results require to talk about complexity classes with real inputs~\cite{BCSS98}.

\section{A Constant-Depth Circuit Analogue for $\RPH$}
\label{sec:ACanalogueForRPH}

In this section, we show how the real polynomial hierarchy relates to constant-depth circuits. The computations of any $k$-level $\RPH$ machine (on a fixed input length) can be translated into a $k$-depth Boolean circuit by first replacing every $\forall$ quantifier with an $\land$ gate and replacing every $\exists$ quantifier with an $\lor$ gate and then using the outputs of the Turing machine on all possible witnesses as inputs feeding into this circuit. 
However, such a straightforward translation would require that each $\land$ and $\lor$ gate in the resulting circuit has infinite fan-in (because our witnesses can be any vector in $\mathbb{R}^m$); see \Cref{fig:IllustrationSimple} below for example.

\begin{figure}[h]
\centering
    \includegraphics[]{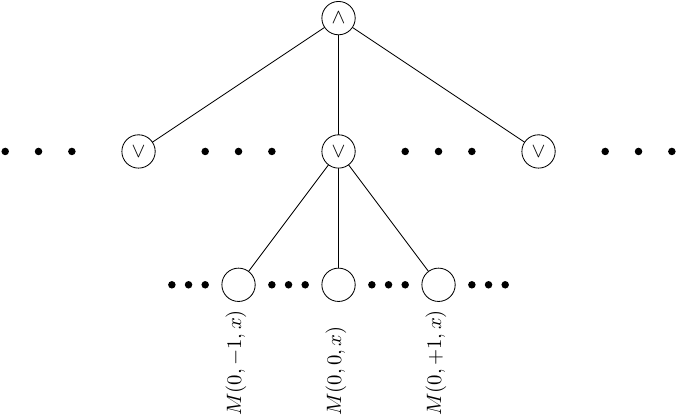}
    \caption{Illustration of a circuit with infinite fan-in corresponding to $\forall u \exists v : M(u,v,x) = 1$.}
\label{fig:IllustrationSimple}
\end{figure}

Note that the first level (i.e., the root) corresponds to the universal quantifier and the second level corresponds to the existential quantifier.
Fortunately, we are able to argue that this fan-in can be made finite and bounded. 
We do this in three steps.
First, we give an upper bound on the number of different real Turing machines running within a certain time bound a certain number of computational steps.
Then, we show how we can change the set that we quantify over from the infinite set $\R^m$ to some finite set $\LLL \subseteq \R^m$.
The underlying idea is that there are only finitely many different machines, for our scenario, and we only need one element of $\R^m$ for each different  machine.
In the third part, we use the upper bound on the cardinality of $\LLL$ to give a circuit recognizing any language of the real polynomial hierarchy with exponential size circuits.

\subsection{On the Number of Real Turing Machines}
\label{sub:numberTM}
Later, it will be useful to upper-bound the number of real Turing machines that behave differently in terms of which real numbers on their tape are relevant for their computations on inputs of a given size.
For this we use the definition of (oracle) real Turing machines as presented in \Cref{def:RealTuringMachine,def:OracleRTM}.
While \Cref{lem:NrRTMsinstates} supplies a bound on the number of real Turing machines that depends on their number of states, the unnecessary inclusion of states which are not relevant for any computation path makes this bound arbitrarily bad.
For this reason, we define the following notion of equivalence between real Turing machines.
\begin{definition}[Real Oracle Turing machine equivalence]
    For a number \(n \in \mathbb{N}\) and a function $T : \mathbb{N} \rightarrow \mathbb{N}$, real oracle Turing machines \((M_1,O_1)\) and \((M_2,O_2)\) are \emph{$T(n)$-equivalent} if and only if on any input of size $n$ both machines terminate after $T(n)$ computational steps and the content on the tapes of \(M_1\) and \(M_2\) are equal after each computation step.
    Two real oracle machines are \emph{$T(n)$-distinct}, if they are not \emph{$T(n)$-equivalent}.
\end{definition}

\begin{lemma}
\label{lem:NrOfConstfreeRTMs}
    Let \(n \in \mathbb{N}\) and let $T: \mathbb{N} \rightarrow \mathbb{N}$ be a function. 
    Then there are at most \(4860^{T(n)}\) many $T(n)$-distinct real oracle Turing machines with the
    following property.
    For any input $x \in \R^n$ it runs in $T(n)$ steps.
    Here, two real oracle Turing machines are considered different when they receive a different oracle results, even if the computational steps are the same.
    
\end{lemma}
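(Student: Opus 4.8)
The plan is to bound the number of $T(n)$-distinct real oracle Turing machines by counting, for a single computation path of length $T(n)$, how many essentially different "transition behaviors" can occur at each step, and then observing that two machines agreeing on all possible computation paths of length $T(n)$ must be $T(n)$-equivalent. Since a machine's behavior on inputs of size $n$ running in $T(n)$ steps is fully determined by what it does at each of the $T(n)$ steps, and at each step the relevant choice is drawn from a bounded menu, the total count is at most (menu size)$^{T(n)}$, so the real work is to show the menu has size at most $\multiplier$ (or rather to identify what the relevant per-step choices are and multiply out the possibilities).

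First I would recall from \Cref{def:RealTuringMachine,def:OracleRTM} exactly what a single step of an (oracle) real Turing machine can do: it reads the current states/symbols on the binary and real tapes, and then it may (i) change control state, (ii) write a symbol on the binary tape and move that head, (iii) perform one arithmetic operation among $\{+,-,\times,\div\}$ on (a bounded number of) real registers and write the result to a real cell, (iv) move the real head, (v) possibly issue an oracle query and branch on the answer. The key reduction is that, because we only care about $T(n)$-equivalence — i.e., the \emph{tape contents} after each step on inputs of size $n$ — we do not need to track the state set or transition function in full generality; we only need to know, step by step along each reachable path, which of finitely many "local actions" is taken. So I would enumerate the independent coordinates of a local action (head moves on two tapes: $3$ options each; binary symbol written: constant many options; arithmetic operation: $4$ options plus which operand cells; oracle-query-or-not with a branch: constant many), and take the product of these counts to get the constant $\multiplier$. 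The precise bookkeeping — exactly which cells may be addressed, how the branch on an oracle answer is folded in, and why the number of addressable cells at step $t$ is itself bounded so that the menu size stays constant rather than growing with $t$ — is the part that needs care.

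The main obstacle I expect is arguing that the per-step branching factor is genuinely a \emph{constant} (independent of $n$ and of the step index $t$), not something that grows. Two subtleties drive this: first, a real Turing machine with random access could in principle address cell $j$ where $j$ is the contents of some binary register, so the number of "candidate operand cells" might seem to grow with $t$; one must argue that along a fixed computation path only a bounded-per-step set of cells is actually touched, or reformulate so that the tape-head model (with only local moves $\{-1,0,+1\}$) is used, in which case only a constant number of cells are adjacent to the heads at any step. Second, oracle queries: a query returns one bit and the machine branches, but for $T(n)$-equivalence we fix the oracle and the input, so the answer along a given path is determined, and the query contributes only a bounded multiplicative factor (query-or-not, and which binary-tape region is queried, which is again head-local). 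Once these are pinned down, the counting is routine: a machine's run on each size-$n$ input of each possible oracle-answer sequence is a sequence of $\le T(n)$ local actions, each from a menu of size $\le \multiplier$, and two machines inducing the same such sequences on all inputs are $T(n)$-equivalent; hence there are at most $\multiplier^{T(n)}$ many $T(n)$-distinct machines.

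Finally I would assemble these pieces: state the menu-size bound as a small sublemma ("any single step of a real oracle Turing machine, as seen through its effect on tape contents, realizes one of at most $\multiplier$ local actions"), then conclude by the determinism-along-a-path argument that the equivalence class of $(M,O)$ under $T(n)$-equivalence is determined by a function from (partial computation histories of length $<T(n)$) to (local actions), of which there are at most $\multiplier^{T(n)}$. I would double-check the constant $\multiplier$ only at the end, since its exact value is immaterial to the applications — all that matters downstream (in \Cref{thm:RPHandACanalogue} and \Cref{lem:ConstrainedQuantifierRanges}) is that it is some absolute constant.
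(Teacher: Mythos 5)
Your proposal is correct and follows essentially the same route as the paper: the paper's proof likewise counts, per computation step, the bounded menu of local actions (head movements, binary write, real/oracle operation), observes that identical inputs plus identical action sequences force identical tape contents, and concludes the bound $\multiplier^{T(n)}$. Your additional care about head-locality and the determinism of oracle answers along a fixed path is sound (the paper glosses over these points), and the exact value of the constant is immaterial downstream, exactly as you say.
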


\begin{proof}
    Two Turing machines are potentially not $T(n)$-equivalent if they do at least one of the three things differently:
    \begin{enumerate}
        \item a different head movement, ($3^5 = 243$ different possible head movements.)
        \item a different binary operation ($2$ different binary operations), or
        \item a different real or oracle operation (at most $13$ different real operations).
    \end{enumerate}
    Note that if two Turing machines, have the same input and do the same operations and the same head movements, then the tape content will be identical at all time steps.
    Thus, per time step, there are at most $4860$ choices by the real Turing machine that affect the tape or the head movement.
    As there are at most $T(n)$ time steps to be considered,
    we get at most 
    \(4860^{T(n)}\) real Turing machines that are not $T(n)$-equivalent.
\end{proof}

\subsection{Quantifying over Finite Sets}
\label{sub:finitequantification}
In the next lemma, we show how the infinite set $\R^m$ can be replaced by a finite set in the quantification.
To make our life slightly easier, we will not replace $\R$ but the entire set $\R^m$ by a set of vectors $\LLL \subseteq \R^m$.
Intuitively, the lemma argues that the finite number of Turing machines implies the existence of a finite set of solutions, without changing the truth value of any sentence.
The core idea is that we only need one vector per machine, and the total number of machines with a specific running time is finite. Therefore the total number of vectors we need to consider is finite.
Let us mention that the existence of the set $\LLL$ does not imply we can compute it quickly.

\begin{lemma}
    \label{lem:ConstrainedQuantifierRanges}
    Let \(k \in \N\), $q:\N \rightarrow \N$ be a polynomial, $T:\N \rightarrow \N$ be a function satisfying $T(n) \geq k\cdot q(n)+n$ and $1 \leq i \leq k$. 
Then for every $n\in \N$, there is a set \(\LLL(n, T,i, k) \subseteq \R^{q(n)}\) such that for every real Turing machine \(M\) with runtime at most $T(n)$, every $x \in \{0,1\}^n$ and every oracle $O$ the following holds.  
(1) The cardinality of \(\LLL(n, T,i, k)\) is at most \( (2^n \cdot \multiplier^{T(n')})^{2^{i-1}} \) (where $n'=k\cdot q(n)+n$), and (2) the following two sentences are equivalent:
    \[ Q_1 u_1 \in \R^{q(n)} \ldots Q_k u_k \in  \R^{q(n)}  : M^O(x, u_1,\ldots,u_k) =1,\]
    \[ Q_1 u_1 \in \LLL(n, T, 1,k) \ldots Q_k u_k \in \LLL(n, T, k,k) : M^O(x, u_1,\ldots,u_k) =1.\]
Here $Q_j$ alternates between being $\forall$ and $\exists$.
    
\end{lemma}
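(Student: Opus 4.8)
The plan is to construct the sets $\LLL(n,T,i,k)$ by induction on the quantifier position $i$, working from the innermost quantifier outward, and at each step using \Cref{lem:NrOfConstfreeRTMs} to bound how many "genuinely different" sub-computations can arise. The key conceptual point is: once we fix the input $x \in \{0,1\}^n$ and the values of the outer witnesses $u_1, \ldots, u_{i-1}$, the residual formula $Q_i u_i \ldots Q_k u_k : M^O(x, u_1, \ldots, u_k) = 1$ depends on $(x, u_1, \ldots, u_{i-1})$ only through the behavior of the real Turing machine $M^O$, which on an input of total description size bounded in terms of $n' = k\cdot q(n) + n$ runs for at most $T(n')$ steps. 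By \Cref{lem:NrOfConstfreeRTMs} there are at most $\multiplier^{T(n')}$ many $T(n')$-distinct such machines; combined with the $2^n$ choices of $x$, there are at most $2^n \cdot \multiplier^{T(n')}$ distinct residual predicates at the outermost level, and this count gets squared each time we descend one quantifier level (hence the $2^{i-1}$ exponent), because to pin down the residual predicate at level $i$ we must pin down, for each of the finitely many predicates at level $i-1$, the finitely many ways it can be realized.

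**First I would** set up the base case $i = k$: for the innermost quantifier, fix $x$ and $u_1, \ldots, u_{k-1}$; the predicate $u_k \mapsto M^O(x, u_1, \ldots, u_k)$ partitions $\R^{q(n)}$ into the "accepting" set and its complement. The total number of distinct such partitions, as $(x, u_1, \ldots, u_{k-1})$ ranges over all possibilities, is at most $2^n \cdot \multiplier^{T(n')}$ by the machine-counting lemma (the machine's computation on a fixed prefix is one of $\multiplier^{T(n')}$ possibilities, and there are $2^n$ choices of $x$). For each distinct accepting set that is nonempty, pick one witness $u_k$; put all these chosen witnesses into $\LLL(n,T,k,k)$. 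This set has size at most $2^n \cdot \multiplier^{T(n')} = (2^n \cdot \multiplier^{T(n')})^{2^{k-1}}$... wait, the exponent should be largest at the innermost level — so I would reindex so that $\LLL(n,T,k,k)$ carries exponent $2^{k-1}$ and $\LLL(n,T,1,k)$ carries exponent $2^0 = 1$; the induction then runs from $i=1$ downward, with each step multiplying the exponent by $2$. Concretely: having built $\LLL(n,T,i+1,k), \ldots, \LLL(n,T,k,k)$, the truncated sentence $Q_{i+1} u_{i+1} \in \LLL(\cdot,i+1,k) \cdots Q_k u_k \in \LLL(\cdot,k,k) : M^O(x,u_1,\ldots,u_k)=1$ is, for fixed $x$, a Boolean function of $(u_1, \ldots, u_i)$ determined by the behavior of $M^O$ on the relevant finite set of inner-witness tuples; the number of distinct such functions is at most $(2^n \cdot \multiplier^{T(n')})^{2^{i-1}}$ by counting machine behaviors (one factor of the product of the inner-level count per inner witness value, which is why the exponent doubles), and for each one pick one realizing $u_i$ to populate $\LLL(n,T,i,k)$.

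**The hard part will be** making the machine-counting argument precise enough to justify the exponent-doubling — specifically, showing that the residual predicate at level $i$ genuinely depends on the outer prefix only through $O(1)$-many bits' worth of machine-behavior data, so that the count is $(2^n \cdot \multiplier^{T(n')})^{2^{i-1}}$ and not something larger. The subtlety is that the residual predicate at level $i$ quantifies over $\LLL$-sets that were themselves constructed to depend on the outer prefix, so one must check there is no circular blow-up: the resolution is that the $\LLL$-sets are \emph{fixed} once and for all (they depend only on $n, T, k$, not on $x$ or the $u_j$), precisely because we took the union over all outer prefixes when constructing them. With that observation the count is a clean product over the at-most-$(2^n\cdot\multiplier^{T(n')})^{2^{i-1}}$ inner predicates of the at-most-$(2^n\cdot\multiplier^{T(n')})$ behaviors per predicate...

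Let me restate this more carefully in the final version: one must verify that the condition $T(n) \geq k \cdot q(n) + n$ guarantees the machine has enough steps to read all $k$ witnesses of length $q(n)$ plus the input, so that $T(n')$ with $n' = k\cdot q(n)+n$ is a legitimate bound on the runtime on the "folded" single input consisting of $x$ concatenated with all witnesses, and hence \Cref{lem:NrOfConstfreeRTMs} applies with parameter $T(n')$. The equivalence of the two sentences then follows by a straightforward induction: going from $\R^{q(n)}$ to $\LLL(n,T,i,k)$ at an $\exists$ quantifier cannot create a new true instance wrongly (the $\LLL$-witness is a genuine $\R^{q(n)}$-witness) and cannot destroy a true instance (if some $\R^{q(n)}$-witness works, one of the chosen representatives in $\LLL$ realizes the same residual predicate and hence also works); dually for $\forall$.
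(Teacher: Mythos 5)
Your high-level idea --- finitely many machine behaviours per level, hence finitely many representative witnesses, with the counts compounding multiplicatively to the bound $(2^n\cdot\multiplier^{T(n')})^{2^{i-1}}$ --- is the right one, and your closing soundness argument (an $\exists$-quantifier restricted to $\LLL$ can neither create nor destroy truth, dually for $\forall$) matches the paper. But the direction of your induction is wrong, and this is not a cosmetic reindexing issue. You build the sets from the innermost quantifier outward: $\LLL(n,T,k,k)$ first, for \emph{all} real prefixes $(x,u_1,\dots,u_{k-1})$. For that set to make the innermost quantifier replaceable, it would have to contain a witness (or counterexample) for every distinct residual predicate $u_k\mapsto M^O(x,u_1,\dots,u_k)$ as the prefix ranges over $(\R^{q(n)})^{k-1}$ --- and there can be uncountably many of these. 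Take $M$ accepting iff $u_k = u_{k-1}+1$: the unique witness for $\exists u_k$ is $u_{k-1}+1$, so a set working for every real $u_{k-1}$ must be uncountable. Your claimed bound of $2^n\cdot\multiplier^{T(n')}$ on the number of distinct residual predicates does not follow from \Cref{lem:NrOfConstfreeRTMs}: that lemma counts machines up to an equivalence in which the two machines are compared on the \emph{same} input; it says nothing about how many distinct ``slices'' a single machine induces as part of its real input varies. Your proposed resolution of the circularity (``take the union over all outer prefixes'') is exactly the step that blows up to an uncountable set.

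The paper's proof runs the induction in the opposite, and only workable, direction: outermost first. The set $\LLL(n,T,1,k)$ is built from the \emph{closed} sentence (no free real variables), contributing one witness per pair of $x$ and machine-equivalence class, hence it has size at most $2^n\multiplier^{T(n')}$; then $\LLL(n,T,i+1,k)$ is built by ranging the prefix $(u_1,\dots,u_i)$ only over the already-constructed finite product $\LLL(n,T,1,k)\times\cdots\times\LLL(n,T,i,k)$, which yields the recursion $|\LLL(n,T,i+1,k)|\le 2^n\multiplier^{T(n')}\cdot\prod_{j\le i}|\LLL(n,T,j,k)|$ and hence the exponent $2^i$. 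This is also why the innermost set is the \emph{largest} --- the fact you noticed but tried to accommodate by relabelling the exponents rather than by reversing the construction; the paper's second example before the lemma (a machine accepting iff $x>y$) is precisely a warning against the inner-first order.
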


\begin{remark}
The sets $\LLL(\cdots)$ also implicitly depends on the polynomial $q$, however as our function $T$ also depends on the polynomial $q$ we avoid using $q$ as an argument in defining $\LLL$.
Additionally, note that the set $\LLL(\cdots)$ is a set of vectors of length determined by $q$. 
\end{remark}

Note that the following proof is slightly easier for the case $k=1$, but as the general case is not much more complicated, we do not write out the case $k=1$ separately.

\begin{proof}
    Let $n'=k \cdot q(n) +n$ denote the length of the entire input to the Turing machine $M^O$.
    We do the proof by induction on $i = 1, \ldots,k$.
    We first do the induction basis for $i=1$.
    
    \paragraph{Induction Basis $i=1$.} 
    There are at most  $\multiplier^{T(n')}$  real oracle Turing machines $(M,O)$ that are not $T(n')$-equivalent; see \Cref{lem:NrOfConstfreeRTMs}.
    For each of them, the statement 
    \[\exists u_1 \in \R^{q(n)} (\forall u_2 \in \R^{q(n)} \ldots Q_k u_k \in \R^{q(n)}): M^O(x, u_1,\ldots,u_n) = 1\]
    is either true or not.
    If it is true, then we add the corresponding $u_1 \in \R^{q(n)}$ to $\LLL(n,T,1,k)$. We do this for all $x \in \{0,1\}^n$.
    Similarly, for each $(M,O)$ for which the statement 
    \[\forall u_1 \in \R^{q(n)}  (\exists u_2 \in \R^{q(n)} \ldots Q_k u_k \in \R^{q(n)}): M^O(x, u_1,\ldots,u_k) = 1\]
    is false, we add real numbers to $\LLL(n,T,1,k)$ that witness that the statement is incorrect and again, we do this for all $x \in \{0,1\}^n$.
    We note that $|\LLL(n,T,1,k)| \leq 2^n  \cdot \multiplier^{T(n')}$; 
    for any $x$, we either add a certificate for the $\forall$ or the $\exists$ quantifier depending on the truth value of $M^O$ deciding $x$.
    Clearly, we can replace $\R^{q(n)}$ by $\LLL(n,T,1,k)$ for the first quantifier by construction.

    
    \paragraph{Induction Step $1$ to $2$.}
    As the induction step is a little notation-heavy, we do the induction step first from $1$ to $2$, and then in full generality.
    Readers are invited to skip this paragraph.
    Alternatively, it is probably enough to read this paragraph, and then it should be clear how it works in full generality.
    
    First, consider a sentence of the form
    \[\exists u_1 \in \R^{q(n)} \, \forall u_2\in \R^{q(n)}  \ldots Q_k u_k \in \R^{q(n)}: M^O(x, u_1,\ldots,u_k) = 1,\]
    By the induction hypothesis, the statement is equivalent to
    \[\exists u_1 \in \LLL(n,T,1,k)  \, \forall u_2 \in \R^{q(n)}  \ldots  Q_k u_k \in \R^{q(n)} : M^O(x, u_1,\ldots,u_k) = 1,\]
    Now, we construct the set $\LLL(n,T,2,k)$.
    We add one $u_2$ for each $u_1\in \LLL(n,T,1,k)$,
    and $(M,O)$ that is not $T(n)$-equivalent.
    We add to  $\LLL(n,T,2,k)$ a vector $u_2\in \R^n$,
    such that \[\exists u_3 \in \R^{q(n)} \forall u_4\in \R^{q(n)} \ldots  Q_k u_k \in \R^{q(n)} :M^O(x,u_1,\ldots,u_k) \neq 1.\]
    If no such elements exist, then we do not add anything to $\LLL(n,T,2,k)$.
    We do the same for the other order of quantifiers.
    We note that \[|\LLL(n,T,2,k)| \leq 2^n \cdot |\LLL(n,T,1,k)| \cdot \multiplier^{T(n')} \leq 2^{2n} \cdot \multiplier^{2T(n')} = (2^n \cdot \multiplier^{T(n')})^{2^{2-1}}\]
    \paragraph{Induction Step $i$ to $i+1$.}  
    First, consider a sentence of the form
    \[Q_1 u_1 \in \R^{q(n)} 
    \ \cdots \ 
    Q_i u_i \in \R^{q(n)}
    Q_{i+1} u_{i+1} \in \R^{q(n)}
    \ \cdots \ 
    Q_k u_k \in  \R^{q(n)} M^O(x, u_1,\ldots,u_k) =1\]
    where the quantifiers $Q$ alternate.
    By the induction hypothesis, the statement is equivalent to
    \[Q_1 u_1 \in \LLL(n,T,1,k)
    \ \cdots \ 
    Q_i u_i \in \LLL(n,T,i,k)
    Q_{i+1} u_{i+1} \in \R^n
    \ \cdots \ 
    Q_k u_k \in  \R^{q(n)} M^O(x, u_1,\ldots,u_k) =1\]
    
    Now, we construct the set $\LLL(n,T,i+1,k)$.
    We add one $u_{i+1} \in \R^{q(n)}$ for each $(u_1,\ldots,u_i)\in \LLL(n,T,1,k)\times \ldots \times \LLL(n,T,i,k)$,
    and $(M,O)$ that is not $T(n)$-equivalent as follows.
    Let us assume that $Q_i = \exists$.
    We add to  $\LLL(n,T,i+1,k)$ a vector $u_{i+1}\in \R^{q(n)}$,
    such that \[\varphi(u_1,\ldots,u_{i+1}) \equiv Q_{i+2} u_{i+2} \in \R^{q(n)}  \ldots  Q_k u_k \in \R^n :M^O(x,u_1,\ldots,u_n) = 1,\]
    is true.
    If no such elements exist, then we do not add anything to $\LLL(n,T,i+1,k)$.
    In case that $Q_{i+1} = \forall$, we add an element in case that 
    \[\varphi(u_1,\ldots,u_{i+1}) \equiv Q_{i+2} u_{i+2} \in \R^n  \ldots  Q_k u_k \in \R^{q(n)} :M^O(x, u_1,\ldots,u_n) \neq 1,\]
    is true.
    We note that 

    \begin{align*}
    |\LLL(n,T,i+1,k)| 
    & \leq 2^n\cdot \multiplier^{T(n')} \cdot |\LLL(n,T,1,k)|\cdot \ldots \cdot |\LLL(n,T,i,k)| \\
    & \leq 2^n\cdot \multiplier^{T(n')} \cdot \Pi_{j=1}^i (2^n\cdot\multiplier^{T(n')})^{2^{j-1}}\\
    & =(2^n \cdot \multiplier^{T(n')})^{2^{(i+1)-1}}
    \end{align*}
    
\end{proof}

We now provide two examples to motivate our decisions in constructing the set $\LLL$.

\begin{example}
    This example shows that we cannot make the upper bound on the number of $T(n)$-equivalent Turing Machines independent of the running time of the real oracle Turing machine.
    For that, we define an infinite sequence of real Turing machines $(M_i)_{i\in \N}$.
    Each real Turing machine that has constant running time also has the following properties:
    \begin{itemize}
        \item the runtime is $\Theta(i)$ and thus independent of the input size,
        \item it returns \texttt{yes} on input $x\geq i$, and otherwise \texttt{no}.
        (We encode $x \in \R$ as a real number and thus has input length $1$.)
    \end{itemize}
    It is easy to construct such a series of Turing machines.
    Now consider the sentence
    \[\exists x \in \R : M_i(x) = 1.\]
    Clearly, the sentence is correct for all $i \in \N$, but for any finite set $\LLL\subset \R$, the following sentence is incorrect:
    \[\exists x \in \LLL : M_i(x) = 1,\]
    for some $i > \max \LLL$.
    As we have shown, by definition all the machines $M_i$ have constant asymptotic running time.
    Thus, if we want to replace the set that we quantify over by a finite set, then this set has to depend 
    on the concrete running time of the considered machines, and not just on their asymptotic running time.
\end{example}

The next example shows that the sets for the different blocks of quantifiers have to be different.
\begin{example}
    We consider a machine $M$ that looks at its first two real inputs $x,y$ and 
    returns yes if $x\geq y$ and no otherwise.
    Consider a sentence of the form 
    \[\exists x\in \R \,\forall y\in \R : M(x,y) = 1.\]
    Clearly this sentence is wrong, because we can choose $y = x+1$ in the second quantification.
    Now consider any finite set $\LLL \subseteq \R$.
    Clearly the following sentence is true:
    \[\exists x\in \LLL \, \forall y\in \LLL : M(x,y) = 1.\]
    To see that it is true, let $x:= \max \LLL$.
    This example shows that each block of quantifiers needs a different set $\LLL$.
\end{example}


\subsection{Circuit Upper Bounds for \RPH}
\label{sub:CircuitUpperBounds}

We are now ready to give an upper bound on constant-depth circuits that decide languages in the real polynomial time hierarchy.
To achieve this, we collect the results of $M^O(x, \dots)$ for all possible choices of the $u_i$ determined by \Cref{lem:ConstrainedQuantifierRanges} and then check via the circuit $C_n$ whether the given quantifier structure is satisfied. 
It is a standard technique to restrict the circuit's depth by the quantifier rank of the formula. Therefore, readers familiar with this technique may want to skip this section, which is written with a novice reader in mind.
Before, we can start, we give some definitions and notation that simplifies to formulate our theorem.

The following definition spells out what it means for a  real oracle machine to \textit{establish} that a language is on the $k$th level of the real polynomial hierarchy with respect to some oracle.
Thereafter, we give an upper bound on the set $\LLL(n,\T,1,k)\times \ldots \times \LLL(n,\T,k,k)$ in order to state the circuit upper bound in \Cref{thm:RPHandACanalogue} and refine it in \Cref{thm:CircuitLowerBoundsImplyNoMembershipInRPH}.

\begin{definition}[Certification of Level Membership]
\label{def:CertificationLevelMembership}
    We say that a real oracle Turing machine $M^O$ with polynomial $\T: \N \rightarrow \N$ running time, and a polynomial $q$
\textit{establishes} 
that a language $L$ is in
$\SRPHlevel{k}^O \cup \PRPHlevel{k}^O$
if
the following sentence holds for any $x\in \{0,1\}^*$.
\begin{equation*}
    x \in L \iff Q_1 u_1 \in \R^{q(|x|)} Q_2 u_2 \in \R^{q(|x|)} \cdots Q_k u_k \in \R^{q(|x|)} M^{O}(x, u_1,\ldots,u_k)=1.
\end{equation*}
Here $Q_i$ alternates between $\exists$ and $\forall$.
\end{definition}


We denote $\LLL = \LLL(n,\T,1,k)\times \ldots \times \LLL(n,\T,k,k)$, where $\LLL(n, \T, i, k)$ denotes the sets constructed in \Cref{lem:ConstrainedQuantifierRanges}
and $\RPHcircuitUpperBound = |\LLL{}|$ is its size.
Using the definition of $\LLL$ and the known upper bounds on $\LLL(n,\T,i,k)$, we get
\[\RPHcircuitUpperBound = \Pi_{j=1}^{k} | \LLL(n,\T,j,k)| \leq \Pi_{j=1}^{k} (2^n \cdot \multiplier^{\T(n')})^{2^{j-1}} = (2^{n} \cdot \multiplier^{\T(n')})^{2^{k}}
= (2^{2^k n} \cdot \multiplier^{\T(k\cdot q(n)+n) 2^{k}})
.\] 
Note that 
\[S \leq 2^{2^k\bigO( n^c)}, \]
for some constant $c$ depending on $q$ and $\T$.
As $q$ and $\T$ depend only on $M$, so does the constant $c$.
Furthermore, we use the notation $[S]$ for the set $\{1,\ldots, S\}$.

\begin{definition}[Truth Table corresponding to Oracle $O$ and Integer $n$] 
\label{def:TruthTableOfO}
Let $O$ be an oracle. For any string $y\in \{0,1\}^*$, let $b_y \in \{0,1\}$ indicate whether a string $y \in O$, i.e.\ $b_y=1$ iff $y\in O$. The \emph{truth table of $O$} for a value $n \in \N$, denoted by $\truthtable(O,n)$, 
is a Boolean string of length $2^n$,
i.e., $\truthtable(O,n) \in \{0,1\}^{2^n}$.
The string $\truthtable(O,n)$ is
constructed in the following way:
\begin{equation*}
    \truthtable(O,n)=\bigcirc_{y \in \{0,1\}^n} b_y,
\end{equation*}
here $\bigcirc$ denotes the symbol for concatenation and $y$ is picked in the lexicographical order.
\end{definition}

For example $\truthtable(O,3) = 
b_{000}b_{001}b_{010}b_{011}b_{100}b_{101}b_{110}b_{111}$ is a string of length $8$.

\begin{definition}[Oracle Space]
    We denote by $\mathcal{O}$  the family of all binary oracles.
    In other words the \textit{space of all binary oracles}.
    Note that each oracle $O$ can be identified with an infinite bit string $(b_i)_{i\in \N}$.
    The idea is that we can enumerate all the words in $\{0,1\}^*$.
    If $w$ is the $i$th word in this order we let $b_i =1$  if it is in $O$ and $b_i = 0$
    if $w$ is not in $O$. 
    This gives a one to one correspondence between the set of infinite bit strings $\{0,1\}^\N$ and the set of all oracles $\mathcal{O}$. 
    We show now how to define probabilities using this correspondence.
    Let $\sigma \in \{0, 1\}^{n}$ be a string on $n$ bits.
    A \textit{basic cylinder} $[\sigma] \subset \mathcal{O}$ is a subset of oracles such that the following holds.
    Let $O$ be an oracle and $(b_i)_{i\in \N}$  be the corresponding infinite bit string. 
    Then $O \in [\sigma]$ if and only if $b_1\ldots b_n = \sigma$.
    The probability of a random oracle lying in $[\sigma]$ is $2^{-n}$.
    This defines a probability space $(\mathcal{O}, \Pr)$.
    Note that every word is present in a random oracle with probability $1/2$.    
\end{definition}

As the definition is a bit technical, we give a small example.
In lexicographic order, $0$ is the first word in $\{0,1\}^*$ and $1$ is the second word and $00$ is the third word and so on.
Assume we only want to consider the set of oracles $\mathcal{O}_1$ containing both $0$ and $00$, but not the word $1$ then we define $\sigma = 101$.
The set of oracles $\mathcal{O}_1$ equals the basic cylinder $[\sigma]$.
The probability of this space equals $1/8$.
We leave it to the reader to verify that $\PR ([01] \cup [10]) = 1/2$.
Note that $[01] \cup [10]$ describes all oracles that contain exactly one word of length one.

\begin{definition}[Density]
We denote for an infinite string $x\in \{0,1\}^\N$ the first $n$ bits $x_1\ldots x_n$, by 
$x{\upharpoonright}n$.
Thus $[x{\upharpoonright}n]$ is the basic cylinder of the first $n$ bits of $x$.
For $A \subseteq \{0,1\}^\N$ measurable and $x \in \{0,1\}^\N$, we define the 
\textit{cylinder density} at scale $n$:
\[
D_n(A,x) = \frac{\Pr\!\big(A \cap [\,x{\upharpoonright}n\,]\big)}
                {\Pr\!\big([\,x{\upharpoonright}n\,]\big)}.
\]
\end{definition}

Let $\prop:\{0,1\}^* \rightarrow \{0,1\}$ be a function and $O$ an oracle then we define
    \[ L(\prop, O) =
    \{1^n \mid (\prop(\truthtable(O,n)) =1) \land (n \in \N) \}\]
be a unary language.
The following lemma establishes that we can assume that we can in some specific sense assume that we can restrict ourselves to a single Turing machine to establish that $L(\prop,O)$ is in $\RPH^O$, even if we consider different oracles.

\begin{theorem}[Lebesgue Density Theorem on $\{0,1\}^{\mathbb{N}}$ {\cite[Theorem~5.3.1]{Nygard_2016}}]
    Let $A \subseteq \{0,1\}^{\mathbb{N}}$ be measurable.
    Then for almost every $b \in A$,
    \[
    \lim_{n \to \infty} 
    \frac{\Pr(A \cap [b{\upharpoonright}n])}{\Pr([b{\upharpoonright}n])} = 1.
    \]
\end{theorem}

We will now translate this theorem to the language of oracles.
As preparation we need the following definition.
We define $[O\upharpoonright n]$  as the set of oracles $O'$ for which it holds that among the first $n$
words of are in $O'$ if and only if they are in $O$.
If $b$ is the bit string corresponding to $O$ then $[b{\upharpoonright}n]$ corresponds to $[O{\upharpoonright}n]$

\begin{theorem}[Lebesgue Density Theorem for the Oracle Space]
    Let $\mathcal{O}'$ be a measurable set of oracles.
    Then for almost every oracle $O \in \mathcal{O}'$,

     \[ \lim_{n \rightarrow \infty} \frac{\Pr(\mathcal{O}'\cap [O{\upharpoonright}n])}{\Pr([O{\upharpoonright}n])} = 1.\]
\end{theorem}

\begin{lemma}
\label{lem:oneTM}
    Let $\prop:\{0,1\}^* \rightarrow \{0,1\}$ be a function. 
    And let us assume that with probability $\alpha > 0$, for a random binary oracle $O\in \mathcal{O}$, $L(\prop,O)$ is in $\DRPHlevel{k}^{O}$.

    Then, there exists one real oracle Turing machine $M$ such that 
    $M^{O}$ establishes $L(\prop, O)$ is in $\DRPHlevel{k}^{O}$
    with probability  $\beta > 1-\varepsilon$, for any fixed $\varepsilon > 0$.
    %
\end{lemma}

Before we dwell into the proof, it might be worthwhile to note that the number of oracles is uncountable. There are at least two ways to see this. 
The first one is to note that the power set of a countable set is uncountable.
The second one is easy to see by giving an injective mapping from $[0,1)$ to $\mathcal{O}$.
Let $x \in [0,1)$ and we will define an oracle $O_x$.
Let $0.b_1b_2b_3\ldots$ be the (lexicographically smallest) binary expansion of $x$. 
For each $n$ add the word $b_1\ldots b_n$ to our target oracle $O_x$. 
Note that for any two numbers $x$ and $y$ the two oracles $O_x$ and $O_y$ are distinct.

\begin{proof}  
    We will do the proof for $\varepsilon = 0.01$ for readability.
    We trust that the reader is able to adapt the proof to work for general $\varepsilon$.
    We index all the oracles in $\mathcal{O}$ by the (uncountable) index set $I$.
    In other words, $\mathcal{O} = \{O_i : i\in I\}$.
    For each oracle such that $L(\prop,O_i) \in \DRPHlevel{k}^{O_i}$, 
    we denote by $M_i$ the Turing machine establishing this fact.
    If $L(\prop,O_i) \not \in \DRPHlevel{k}^{O_i}$, then $M_i$ is the Turing machine returning no and stopping immediately.
    
    Now, note that we have uncountably many oracles $O \in \mathcal{O}$. 
    Yet the total number of Turing machines is countable.
    Thus many of the Turing machines in the set $\{ M_i : i\in I\}$ must be identical.
    And thus we can write
    \[\{ M_i : i\in I\}  = \{ M_j' : j\in \N\}.\]
    In other words for every $M_i$ exists an $M_j'$ such that $M_i = M_j'$.
    We partition $\mathcal{O}$ into the sets $\mathcal{O}_j$
    such that for each oracle $O\in \mathcal{O}_j$
    it holds that $M_j'$ establishes that $L(\prop, O) \in \DRPHlevel{k}^O$.
    For all oracles $O$ such that $L(\prop, O) \not \in \DRPHlevel{k}^O$, just put them in an arbitrary class.
    We are now ready for the following calculation.

   \begin{align*}
    \alpha &=  \Pr( \{O \in \mathcal{O} : L(\prop, O) \in \DRPHlevel{k}^{O} \} )  \\
        &=
    \Pr \left(  \{O_i \in \mathcal{O} :  M_i^{O_i} \text{ establishes } L(\prop,O_i) \in \DRPHlevel{k}^{O_i} \} \right)\\
     &= \Pr \left(
     \bigcup_{j\in \N}  \{ O \in \mathcal{O}_j : { M_j'}^{O} \text{ establishes } L(\prop,O) \in \DRPHlevel{k}^{O}   \} 
     \right) \\
     & =  \sum_{j\in \N} \Pr \left(
       \{  O \in \mathcal{O}_j : { M_j'}^{O} \text{ establishes } L(\prop,O) \in \DRPHlevel{k}^{O} : O \in \mathcal{O}_j \} 
     \right) \\
    \end{align*}

    Here the first equation just repeats the definition of $\alpha$.
    The second equation invokes the definition of $M_i$.
    The third equation partitions the set of oracles according to the countable set of Turing machines $M_j'$.
    Finally, the last equation is not an inequality, because the union is actually a \textit{disjoint} union by definition.
    Furthermore, note that last equation holds because the union was \textit{countable} and not uncountable.
    To see that it does not work for uncountable sets, we look at the following small example.
    $1 = \mu([0,1]) = \mu(\cup_{x\in [0,1]} x) \leq \sum_{x\in [0,1]} \mu ({x}) = \sum_{x\in [0,1]} 0 = 0 \neq 1$.

    As each probability is non-negative and $\alpha$ is positive, there must exists at least one real oracle Turing machine $M = M_j'$ such that 
    
    \[ \Pr \left(
       \{ { O\in \mathcal{O} : M}^{O} \text{ establishes } L(\prop,O) \in \DRPHlevel{k}^{O} \} 
     \right) = \beta >0.\]

    Now, what remains is to show that there exists a machine $N$ such that 
    \[\Pr \left(
       \{ { O\in \mathcal{O} : N}^{O} \text{ establishes } L(\prop,O) \in \DRPHlevel{k}^{O} \} 
     \right) = \beta >0.99.\]
    
     Let $\mathcal{O}'$ be the set of oracles that $M$ establishes are in $\DRPHlevel{k}^{O}$.
     In other words,
     \[\mathcal{O}' = \{ O : M^O \text{ establishes } L(\prop,O) \in \DRPHlevel{k}^{O}\}.\]
     We saw that  $\Pr(\mathcal{O}')$ is positive as the probability of choosing an oracle in $\mathcal{O}'$ was positive.
     Now, the Lebesgue density theorem says that for almost every $O \in \mathcal{O}'$, 
     its cylinder density approaches $1$.
     Thus there exists an oracle $O$ and constant $m\in \N$ such that 
     \[\frac{\Pr(\mathcal{O}'\cap [O{\upharpoonright}m])}{\Pr([O{\upharpoonright}m])} > 0.99.\]
    In other words, $M$ will correctly establish 
    if $L(\prop,O) \in \DRPHlevel{k}^{O}$ for a random oracle in $\mathcal{O}'\cap [O{\upharpoonright}m]$ with probability at least $0.99$.
    Note that any oracle $O$ is in $[O{\upharpoonright}m]$ if we just ignore the first $m$ words. 
    Let $n_0$ be the smallest number such that the first $m$ words have all length smaller than $n_0$.
    Then for any oracle $O$ there is another oracle $O' \in [O{\upharpoonright}m]$ such that 
    $1^n \in L(\prop,O)$ if and only if $1^n \in L(\prop,O')$.
    Thus the questions $1^n \in L(\prop,O)?$ will be correctly computed for all $n \geq n_0$ with probability $0.99$ for a random oracle $O$. 
    We modify $M$ in a way that it computes $1^n\in L(\prop,O)$ correctly for every $n<n_0$ 
    and otherwise does exactly what $M$ did before for any larger $n$.
    This is possible as there are only finitely many truth tables that need to be considered.
    We denote the new Turing machine  by $N$.
    Note that $N$ gives the correct answer with probability $0.99$ for every $n$.
    And thus it holds that for a random oracle $O$, $\Pr(N^O \text{ establishes } L(\prop,O) \in \DRPHlevel{k}^{O}) \geq 0.99$.
\end{proof}

    Let us emphasize that \Cref{lem:oneTM} justifies that in subsequent theorems, 
    we may assume that there exists a single real oracle Turing machine $M$, such that $M^O$ establishes $L(\prop, O) \in \DRPHlevel{k}^O$ with positive probability over all oracles $O \in \mathcal{O}$, instead of a set of real oracle Turing machines.

\begin{theorem}
\label{thm:RPHandACanalogue}
Let $\prop:\{0,1\}^* \rightarrow \{0,1\}$ be a function, 
$M$ be a real oracle Turing machine,
and $\mathcal{O}' \subseteq \mathcal{O}$ a set of oracles.
%
We assume 
for all oracles $O\in \mathcal{O}'$ it holds that 
$L(\prop, O)$ is in $\DRPHlevel{k}^O$ and this is established by $M^O$.

Then there exists a family of $k$-depth circuits $(C_n)_{n\in \N}$ of size
$|C_n| \leq 2\cdot\RPHcircuitUpperBound$ (with $\RPHcircuitUpperBound = 2^{2^k\bigO( n^c)}$), for some constant $c$ only depending on $M$,
such that for all oracles $O\in \mathcal{O}'$
it holds that for all $n$ 

 
    \[1^n \in L(\prop,O) \Leftrightarrow C_{n}(a) = 1.\]   
Here,
$a \in \{0,1\}^{\RPHcircuitUpperBound}$
is the input generated in the following way: for all $j \in [\RPHcircuitUpperBound]$,
$a(j)=M^O(1^n, w_j)$,
where $w_j$ is the $j$th element in
$\LLL$
ordered by the lexicographic ordering.
\end{theorem}

Note that the circuit family $C_n$ is independent of the oracle.
Furthermore, recall that $L(\prop,O)$ is a unary language, so we do not need to check for arbitrary $x\in \{0,1\}^*$ if they are a word in the language.
Before we engage in the formal proof, it might be instructive to see an example circuit for the case $k=1$ and $k=2$.

\begin{example}
\label{eg:kis1}
    Assume the notation and conditions as in \Cref{thm:RPHandACanalogue}, with $k=1$ and let $Q_1 = \exists$ be an existential quantifier.
    Using \Cref{lem:ConstrainedQuantifierRanges}, for any $\T$-time Turing machine $M^O$ and any oracle 
    $O$
    we can immediately replace the real set we quantify over, by the sets 
    $\LLL(n,\T,1,1)$.
    For the point of having an example that we can draw in a figure, 
    we assume that $\LLL(n,\T,1,1) = \{w_1,w_2,w_3,w_4\}$ is of size four.
    Thus, we have the following equivalence
    \[1^n\in L(\prop, O) \Leftrightarrow \exists w\in \{w_1,w_2,w_3,w_4\} : M^O(1^n,w) = 1\]
    We generate the binary string $a$, by setting 
    $a(i) = M^O(1^n,w_i)$.
    (We interpret $M(1^n,w_i) \neq 1$ as $M^O(1^n,w_i) = 0$.)
    Using $a$, we see that evaluating the following circuit $C_n$ is equivalent to $1^n\in L(\prop, O)$.
    Note that the input values to the circuit, the results for $M^O(\dots)$ may differ depending on the oracle $O$.
    \begin{center}
        \includegraphics[page = 2]{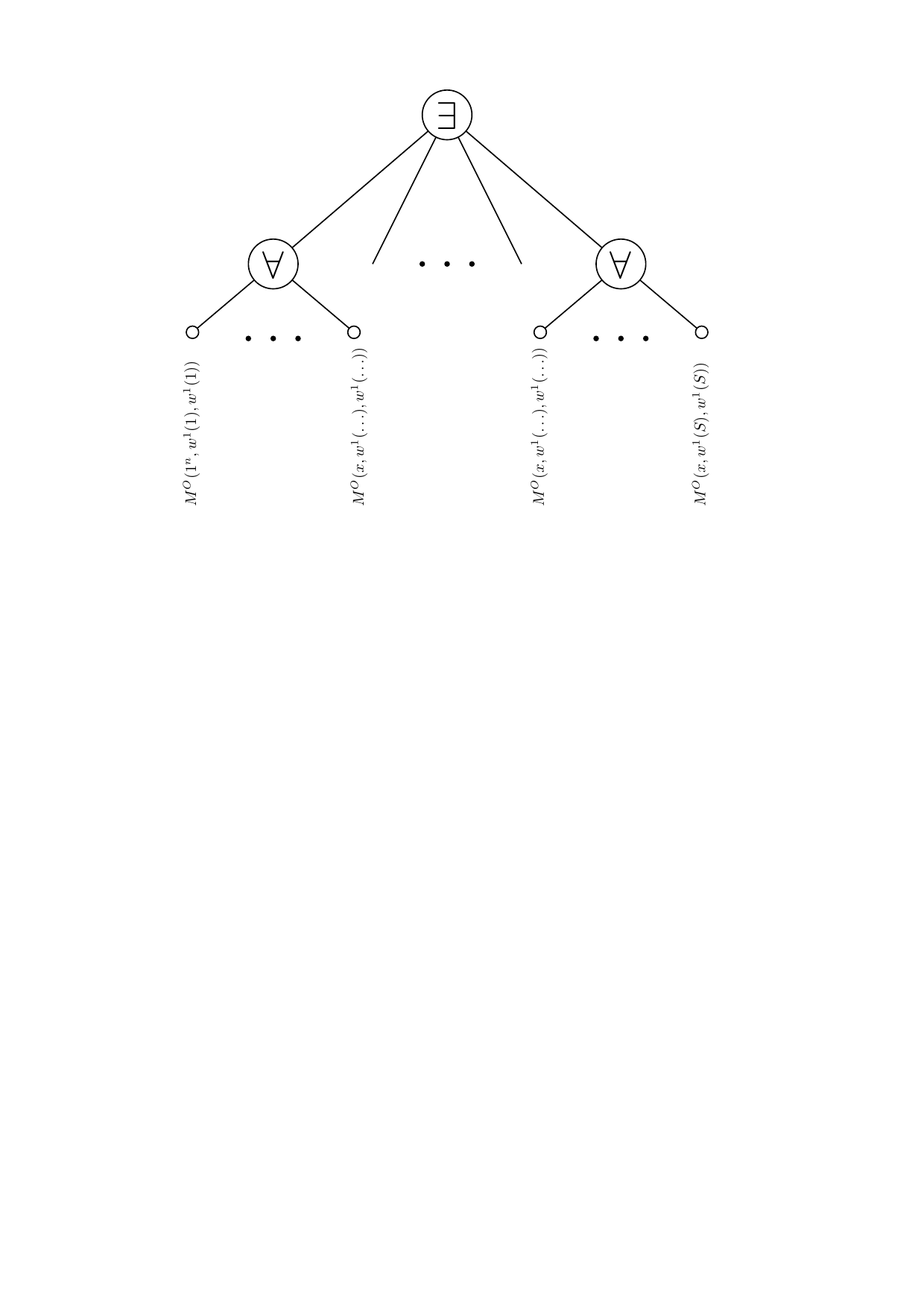}    
    \end{center}
\end{example}
The previous example was straightforward and should emphasize the simplicity of our result, despite a bit of formalism.
As the case $k=1$ is maybe not representative of the general case, we provide another example with $k=2$.

\begin{example}
    \label{ex:depth2tree}
    Assume the notation and conditions as in \Cref{thm:RPHandACanalogue}, with $k=2$ and let $Q_1 = \exists$ be an existential quantifier and $Q_2 = \forall$ be a universal quantifier.
    Using \Cref{lem:ConstrainedQuantifierRanges}, for any oracle $O$ and any $\T$-time Turing machine $M^O$, we can immediately replace the real set we quantify over, by the sets 
    $\LLL(n,\T,1,2)$ and $\LLL(n,\T,2,2)$.
    For the point of having an example that we can draw in a figure, 
    we assume that 
    $\LLL(n,\T,1,2) = \{v_1,v_2,v_3\}$
    and
    $\LLL(n,\T,1,2) = \{w_1,w_2,w_3\}$
    are both of size three.
    Thus, we have the following equivalence:
    \[1^n\in L(\prop, O) \Leftrightarrow \exists v\in \{v_1,v_2,v_3\} \ \forall w \in \{w_1,w_2,w_3\}  : M^O(1^n,w) = 1\]
    We generate the binary string $a$ of length $9$.
    However, for ease of notation, we write 
    $a(i,j)$ instead of $a(3(i-1)+j)$.
    We set $a(i,j) = M(1^n,v_i,w_j)$.
    (We interpret $M(1^n,w_i) \neq 1$ as $M(1^n,w_i) = 0$.)
    Using $a$, we see that evaluating the following circuit $C_n$ is equivalent to $x\in L(\prop, O)$.
    Again, note that $a$ depends on the oracle $O$.
    \begin{center}
        \includegraphics[page = 4]{figures/TreeRPH3.pdf}    
    \end{center}
    To see the equivalence, we can think of evaluating the circuit from top to bottom.
    Say the existential quantifier $Q_1$ chooses $v = v_2$,
    then the top node also  chooses $v_2$ and all leaf nodes below have $v = v_2$.
\end{example}

\bigskip

\begin{proof}
Let $n\in \N$. By using \Cref{lem:ConstrainedQuantifierRanges}, for any oracle $O$ and any $\T$-time Turing machine $M^O$, we can immediately reduce deciding
\begin{equation}
\label{eq:ActualMachine}
    Q_1 u_1 \in \R^{q(n)} \cdots Q_k u_k \in \R^{q(n)} : M^O(1^n, u_1,\ldots,u_k)=1
\end{equation}
to deciding
\begin{equation}
\label{eq:FunctionallyEquivalentMachine}
    Q_1 u_1 \in \LLL(n,\T, 1, k) \cdots Q_{k} u_{k} \in \LLL(n, \T, k, k) : M^O(1^n,u_1,\ldots,u_k)=1.
\end{equation}
    
\noindent Following the convention of Boolean circuits in \Cref{def:BooleanCircuits}, we construct a directed acyclic graph where the non-input vertices are referred to as gates and each gate will be labeled by either $\lor$ or $\land$. It helps to think of an constant-depth circuit to have a single gate, either $\land$ or $\lor$, at the top and then $(k-1)$ layers of gates of alternating $\land$ and $\lor$ gates feeding into their previous layers.
See \Cref{ex:depth2tree} for an illustration.
Now, we give the construction of an appropriate $k$-depth circuit that simulates a machine $M^O$ that establishes $L(\prop, O) \in \DRPHlevel{k}^O$ for all oracles $O \in \mathcal{O}'$.

\paragraph{Circuit Construction from the Top.}

\begin{enumerate}
    \item \textbf{Root node:} The top node is $\lor$ in case that $Q_1 = \exists$ and $\land$ in case that $Q_1 = \forall$.
    The number of children equals $|\LLL(n,\T,1,k)|$.
    Furthermore, we label every edge with the corresponding $w_j(1) \in \LLL(n,\T,1,k)$.
    We say that the top layer is the first layer.
    Note that we constructed already all the nodes from the $2$nd layer.
    \item \textbf{Intermediate nodes:} All nodes on the $i$th layer are  $\lor$ in case that $Q_i = \exists$ and $\land$ in case that $Q_i = \forall$.
    The number of children equals $\LLL(n,\T,i,k)$.
    Furthermore, we label every edge with the corresponding $w_j(i) \in \LLL(n,\T,i,k)$.
    \item \textbf{Input nodes:}
    The input nodes are on the last layer $k+1$ and are already constructed.
    For each input node, we look at the unique path from that node to the root. 
    Say that the labels $w_j = (w_{j}(1),\ldots,w_j(k))$ were on the path. 
    Then the value at the input node equals $a(j) = M^O(1^n,w_j)$.
\end{enumerate}

We want to point out that as each node on level $i$ has $|\LLL(n,\T,i,k)|$ children, the total number of children equals 
\[S = |\LLL(n,\T,1,k)| \times \ldots \times |\LLL(n,\T,k,k)|.\]

To see the equivalence of evaluating $C_n(a)$ with $x\in L(\prop, O)$,
we note that evaluating it from top to bottom 
corresponds to evaluating the \Cref{eq:FunctionallyEquivalentMachine} from left to right.
Now, for each oracle for which $M^O$ establishes that $L(\prop, O) \in \DRPHlevel{k}^O$, we find that $C_n$ (for all $n$) must, for any such oracle $O$, correctly compute $\prop(\truthtable(O, n))$ when its input nodes take the values of $a(j)=M^O(1^n, w_j)$ where $w_j$ is the $j$th element in $\LLL$ ordered by the lexicographic ordering.
\end{proof}

Now, the issue is that the leaf nodes of the circuit use $M^O$ as their input, which can be unpredictable when $M^{O}$ may make at most $\T(n)$ many queries to the oracle $O$. 
Using arguments similar to ones used in \cite{FSS84, Cai86}, by adding one additional layer of quantifier (either $\exists$ or $\forall$ depending on whether $k$ is even or odd and whether $L(\prop, O) \in \SRPHlevel{k}$ or $L(\prop, O) \in \PRPHlevel{k}$), one can replace the machine $M^{O}$ with another machine $M'^{O}$ that makes at most one query to the oracle $O$. As this can be done for every circuit in $(C_n)_{n \in \mathbb{N}}$, the result of this procedure is a family of $k+1$-depth circuit with its input nodes being the output of $M'^{O}$. We state the argument here for completeness.

The following definition will be useful for our proof of this theorem. The notion of decision trees has been extensively studied in the context of black-box query complexities but we use the idea to simplify the presentation of our proof; the survey article \cite{BW02} serves as an excellent resource. Towards that, we use the definition of decision trees from \cite{CMP22}.

\begin{definition}[Decision Trees]
\label{def:decisionTrees}
A decision tree computing a function $f : \{0,1\}^m \rightarrow \{0,1\}$ is a binary tree with leaf nodes having labels in $\{0,1\}$, each internal node is labeled by a variable $x_i$ (for an $i \in [m]$) and has two outgoing edges, labeled $0$ and $1$. On input $x \in \{0,1\}^m$, the tree’s computation proceeds by computing $x_i$ as indicated by the node’s label and following the edge indicated by the value of the computed variable. The output value at the leaf reached by proceeding the computation on $x$, let us denote by $\mathcal{T}(x)$, must be such that $f(x)=\mathcal{T}(x)$. 

Given a function  $f : \{0,1\}^m \rightarrow \{0,1\}$ and a decision tree $\mathcal{T}$ computing it, define the function $\leaf$ that takes an input $x \in \{0,1\}^m$ and outputs the leaf of $\mathcal{T}$ reached on input $x$. For an input $x \in \{0,1\}^m$, let $\route(x)$ denote the unique path in $\mathcal{T}$ from its root to $\leaf(x)$. We use $\mathcal{P}(\mathcal{T})$ to denote the set of all paths from the root to a leaf in $\mathcal{T}$.

We assume that the decision trees computing functions $f$ contain no extraneous leaves, i.e., for all leaves, there is an input $x \in \{0,1\}^m$ that reaches that leaf. We also assume that for every path $R$ in $\mathcal{P}(\mathcal{T})$ and index $i \in [m]$, the variable $x_i$ is queried at most once on $R$.
\end{definition}

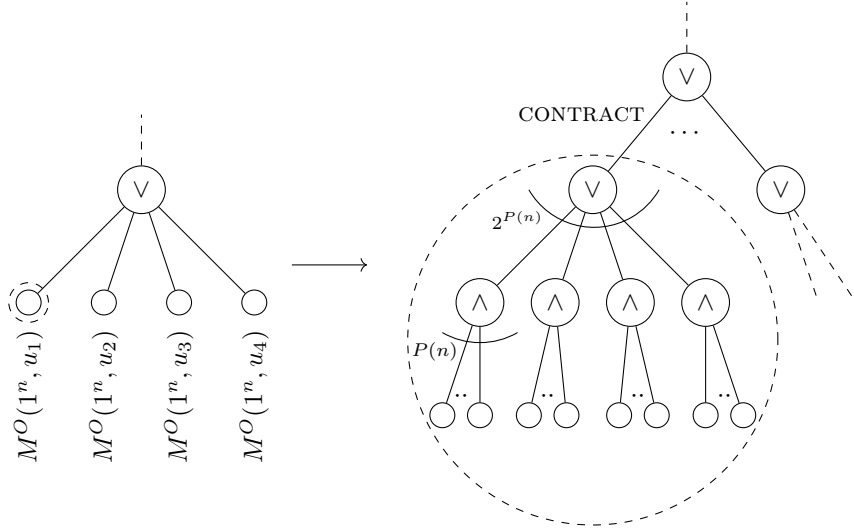
\begin{figure}
    \centering
    \begin{tikzpicture}

        \node[circle,draw] (parent) at (-3,0) {$\lor$};
        \node[circle,draw] (child1) at (-4.5,-1.5) {};
        \node[circle,draw] (child2) at (-3.5,-1.5) {};
        \node[circle,draw] (child3) at (-2.5,-1.5) {};
        \node[circle,draw] (child4) at (-1.5,-1.5) {};

        \node[rotate=90] (child1-label) at (-4.5, -2.75) {$M^{O}(1^n, u_1)$};
        \node[rotate=90] (child2-label) at (-3.5, -2.75) {$M^{O}(1^n, u_2)$};
        \node[rotate=90] (child3-label) at (-2.5, -2.75) {$M^{O}(1^n, u_3)$};
        \node[rotate=90] (child4-label) at (-1.5, -2.75) {$M^{O}(1^n, u_4)$};

        \draw [-] (parent) -- (child1);
        \draw [-] (parent) -- (child2);
        \draw [-] (parent) -- (child3);
        \draw [-] (parent) -- (child4);
        \draw [dashed] (parent) -- (-3,1);

        \node[draw,circle,dashed,minimum size=15pt] (OldCircuit) at (-4.5,-1.5) {};

        \draw [->] (-1,-1) -- (0,-1);

        \node [circle,draw] (grandparent) at (4.25,1.5) {$\lor$};

        \node [circle, draw] (absenteeparent) at (5.5,0) {$\lor$};
        
        \node[circle,draw] (parent) at (3,0) {$\lor$};
        \node[circle,draw] (child1) at (1.5,-1.5) {$\land$};
        \node[circle,draw] (child2) at (2.5,-1.5) {$\land$};
        \node[circle,draw] (child3) at (3.5,-1.5) {$\land$};
        \node[circle,draw] (child4) at (4.5,-1.5) {$\land$};

        \node[circle,draw] (grandchild11) at (1,-3) {};
        \node (grandchild1-dots) at (1.25,-2.75) {$\cdot \cdot$};
        \node[circle,draw] (grandchild12) at (1.5,-3) {};

        \node[circle,draw] (grandchild21) at (2.15,-3) {};
        \node (grandchild2-dots) at (2.4,-2.75) {$\cdot \cdot$};
        \node[circle,draw] (grandchild22) at (2.65,-3) {};

        \node[circle,draw] (grandchild31) at (3.35,-3) {};
        \node (grandchild2-dots) at (3.6,-2.75) {$\cdot \cdot$};
        \node[circle,draw] (grandchild32) at (3.85,-3) {};

        \node[circle,draw] (grandchild41) at (4.5,-3) {};
        \node (grandchild4-dots) at (4.75,-2.75) {$\cdot \cdot$};
        \node[circle,draw] (grandchild42) at (5,-3) {};
       
        \node[draw,circle,dashed,minimum size=140pt] (DNFpartInNewCircuit) at (3,-2) {};
    
        \draw [-] (grandparent) -- (parent);
        \draw [-] (grandparent) -- (absenteeparent);
        \draw [-] (parent) -- (child1);
        \draw [-] (parent) -- (child2);
        \draw [-] (parent) -- (child3);
        \draw [-] (parent) -- (child4);
        \draw [-] (child1) -- (grandchild11);
        \draw [-] (child1) -- (grandchild12);
        \draw [-] (child2) -- (grandchild21);
        \draw [-] (child2) -- (grandchild22);
        \draw [-] (child3) -- (grandchild31);
        \draw [-] (child3) -- (grandchild32);
        \draw [-] (child4) -- (grandchild41);
        \draw [-] (child4) -- (grandchild42);

        \draw[black] (2.15,0) arc (-150:-30:1);
        \node (exponential) at (2,-0.4) {\scriptsize $2^{T(n)}$};

        \draw[black] (1,-1.9) arc (-120:-60:1);
        \node (poly) at (0.9,-2.15) {\scriptsize $T(n)$};

        \draw [dashed] (absenteeparent) -- (6,-1.5);
        \draw [dashed] (absenteeparent) -- (6.5,-1.5);
        \node [rotate=0] (dots) at (4.25,0.75) {$\cdots$};
        \draw [dashed] (grandparent) -- (4.25,2.5);
        \node [rotate=0] (Contract) at (2.85,1) {$\textsc{contract}$};

    \end{tikzpicture}
    \caption{Illustration of the circuit construction mentioned in the proof of \Cref{thm:CircuitLowerBoundsImplyNoMembershipInRPH}. In the first step of the construction, each input node in the old circuit gets replaced by depth-$2$ circuits which have a fan-in of at most $\T(n)$ for the gates that are the parents of the input nodes and have a fan-in of at most $2^{\T(n)}$ for the gates that are the grandparents of the input nodes; here $\T(n)$ is a polynomial. Furthermore, its input nodes are outputs of $M'^{O}$. The second step of the construction is to contract consecutive edges with the same labels. The final depth of the new circuit is $1$ more than the depth of the original circuit.}
    \label{fig:MachineMakesOneQuery}
\end{figure}

\begin{theorem}
    \label{thm:CircuitLowerBoundsImplyNoMembershipInRPH}
Let $\prop:\{0,1\}^* \rightarrow \{0,1\}$ be a function, 
$M$ be a real oracle Turing machine
and \(\mathcal{O}' \subseteq \mathcal{O}\) a set of oracles.
We assume for all oracles \(O \in \mathcal{O}'\) it holds that
$L(\prop, O)$ is in $\DRPHlevel{k}^O$ and this is established by $M^O$.

Then there exists a family of $(k+1)$-depth circuits $(C_n)_{n\in \N}$ of size
$|C_n| \leq {(2^n)}^{\bigO(\polylog(2^n))}$, 
such that for all
oracles  $O \in \mathcal{O}'$ it holds that for all $n$
%

    \[1^n \in L(\prop, O) \Leftrightarrow C_{n}(a) = 1.\]
Here, 
$a = \truthtable(O,n)$.


\end{theorem}
Recall that $\truthtable(O,n)$ is the truth table of $O$ and tells us which words of size $n$ are members of the oracle $O$.
\begin{proof}
Let $(D_n)_{n \in \N}$ of size $|D_n| \leq 2 \cdot S$, with $S\leq (2^n \cdot \multiplier^{P(k \cdot q(n)+n)})^{2^k}$, be the family of $k$-depth circuits we get by invoking \Cref{thm:RPHandACanalogue} for $L(\prop, O)$.
For any fixed \(n\),
we will now construct a circuit $C_n$, using the following observations.
\begin{enumerate}
\item Depending on whether $k$ is even or odd and depending on whether $L$ is in $\SRPHlevel{k}$ or $\PRPHlevel{k}$ the intermediate nodes right above the input nodes of $D_n$ will be either labeled by $\lor$ or by $\land$. This information is important as our construction $C_n$ will depend on this label and we call it the \emph{last} label of $D_n$.
\item The $i^\text{th}$ input to the circuit $D_n$ corresponds to the output of the Turing machine $M^O(1^n, u)$ where $u \coloneqq (u_1, \ldots, u_k)$ is the $i^\text{th}$ element of the set $\mathcal{L}$ as stated in \Cref{thm:RPHandACanalogue}. Furthermore, for every $u \in \mathcal{L}$, the machine $M^{O}$ makes at most $\T(n)$ many \emph{adaptive} queries to oracle $O$.
\end{enumerate}
Let $\mathcal{T}_u$ be the decision tree (\Cref{def:decisionTrees}) capturing the computation of $M^{O}$ on $1^n$ and $u$ that proceeds by computing queries to $\truthtable(O, t)$ which is the string provided by $O$.
This additionally means that the output $M^O(1^n, u)$, with $z$ being the string provided by $O$, is consistent with $\mathcal{T}_u(z)$. 
Let $\DNF(\mathcal{T}_u)$ denote a $\DNF$ formula (\Cref{def:CNFnDNFformulas}) that accepts all the paths in $\mathcal{T}_u$ reaching leaves of output value $1$; w.l.o.g., we can assume such a path exists otherwise we can replace the node $M^O(1^n, u)$ with $0$. Alternatively, let $\DNF(\neg \mathcal{T}_u)$ denote a $\DNF$ formula that accepts all the paths in $\mathcal{T}_u$ reaching leaves of output value $0$. Using De Morgan's law we can see that the $\neg \DNF(\neg \mathcal{T}_u)$ will be a $\CNF$ formula (\Cref{def:CNFnDNFformulas}) that accepts all the paths in $\mathcal{T}_u$ reaching $1$; let us denote this by $\CNF(\mathcal{T}_u)$. We will now construct $C_n$ from $D_n$ in the following way. If the \emph{last} label of $D_n$ was $\lor$ ($\land$) then for every $i^{\text{th}}$ input node of $D_n$ do the following.
\begin{enumerate}
    \item Replace the input $M^O(1^n, u)$ (where $u$ is the $i^\text{th}$ element in $\mathcal{L}$ under a fixed ordering) with a circuit constructed from the $\DNF(\mathcal{T}_u)$ ($\CNF(\mathcal{T}_u)$) with its output feeding into the $\lor$ ($\land$) as seen in \Cref{fig:MachineMakesOneQuery}. The input node of the resulting circuit can now be interpreted as the output of $M'^{O}$ on $1^n$ and $u$ and more importantly it makes a single query to the string provided by $O$. Furthermore, the resulting $(k+2)$-depth circuit contains two consecutive layers with the same label.
    \item Contract the edges between the nodes of the consecutive layers of the same labels. The resulting circuit, which we denote by $C_n$, is of depth $k+1$ and of size at most $2^{\T(n)} \cdot \T(n)$ times the size of $D_n$.
\end{enumerate}

As the inputs to $C_n$ makes at most one query to oracle $O$, the input to $C_n$ (which computes property $\prop$) is of length $2^n$.
Now, we conclude that the input nodes are queries to the oracle $O$ and see its size $|C_n| \leq 2 \cdot 2^{\T(n)} \cdot \T(n) \cdot S$.
The circuit bounds in terms of this input length $N=2^n$ is at most $N^{\bigO(\polylog N)}$ and the circuits $(C_n)_{n \in \mathbb{N}}$ compute $\prop$; the big $\bigO$ notation is in terms of $N$. Hence, proving the statement of \Cref{thm:CircuitLowerBoundsImplyNoMembershipInRPH}.
\end{proof}

The following remark will be useful for \Cref{thm:LevelSep} in \Cref{sec:SepLevelsRPH}.
\begin{remark}
\label{rem:FaninAtmostPoly} Each circuit $C_n$ in the family of $(k+1)$-depth circuits $\{C_n\}_{n \in \mathbb{N}}$ resulting from \Cref{thm:CircuitLowerBoundsImplyNoMembershipInRPH} has the property that all the intermediate nodes right above the input nodes of $C_n$ have a fan-in of at most $\poly(n)$; also illustrated in \Cref{fig:MachineMakesOneQuery}.  
\end{remark}

\section{Separations of \RPH}
\label{sec:RPHSeparations}

In this section, using our results from \Cref{sec:ACanalogueForRPH}, we will present oracle separations between $\RPH$ and well-studied complexity classes $\PSPACE$ and $\BQP$. We also present oracle separations within the layers of $\RPH$ itself.

\subsection{Interpreting the Circuit Bounds from \Cref{sub:CircuitUpperBounds}}
\label{sub:InterpretationCircuitUpperBounds}

In \Cref{sec:ACanalogueForRPH} (more precisely in \Cref{thm:RPHandACanalogue} and \Cref{thm:CircuitLowerBoundsImplyNoMembershipInRPH}) we were able to show how an $\RPH^O$ computation can be translated to a family of circuits of constant depth. 
In this section, we give an intuitive overview of how we can use the family of circuits to find oracle separations.

Suppose we are given a machine $M^O$ that establishes $L(\prop, O) \in \DRPHlevel{k}^O$ with probability $\alpha > 0$ over all random oracles $O$.
The input of machine $M^O$ is a unary string $1^n$.
If $L(\prop, O)$ is established to be in $\DRPHlevel{k}^O$ by $M^O$, then
it accepts all $n$ for which $\prop(\truthtable(O,n)) = 1$.

Applying \Cref{thm:RPHandACanalogue} and \Cref{thm:CircuitLowerBoundsImplyNoMembershipInRPH} on $M^O$ for any $n$ yields a ``small'' constant-depth circuit $C_n$ with (at most) $2^n$ inputs.
Here, the input is not the same as the input of the machine $M^O$.
Instead, the leaf nodes of the computation tree structure of $M^O$ are the input nodes of $C_n$.
In \Cref{thm:CircuitLowerBoundsImplyNoMembershipInRPH}, it was shown that each computational path only needs to query the oracle once.
As such, the leaf nodes will take the values of $\truthtable(O, n)$.
Of course, for a single oracle, these values are constant.
However, we require that $M^O$ establishes that $L(\prop, O) \in \DRPHlevel{k}^O$ for a random oracle with probability $\alpha > 0$.
For each such oracle $O$, the circuit $C_n$ (for any $n$) must correctly compute $\prop$ for $\truthtable(O, n)$.
Using this requirement, we provide various oracle separations, as circuit lower bounds show for various functions that ``small'' circuits cannot successfully compute $\prop$.

We show two types of separations.
In \Cref{sec:RPHvsPSPACE} and \Cref{sec:SepLevelsRPH}, we provide results relative to a random oracle.
Here, we use \Cref{lem:oneTM} to argue that if, for some probability $\alpha > 0$, $L(\prop, O) \in \DRPHlevel{k}^O$, then there also exists some $\beta > 0$ such that some machine $M^O$ establishes this with probability $\beta$ for a random oracle $O$.
Then, we argue using \Cref{thm:RPHcircuitSizeAndNoGoTheorem} that for the set oracles for which $M^O$ establishes $L(\prop, O) \in \DRPHlevel{k}^O$, we find the ``small'' circuit.

Similarly, in \Cref{sec:BQPvsRPH}, we separate complexity classes for some oracle $O$.
In this case, we argue that the machine must establish $L(\prop, O) \in \DRPHlevel{k}^O$.
Again, this would result in a ``small" circuit that would be able to compute $\prop(\truthtable(O, n))$ for all truth tables, which we will use to get a contradiction and show that some oracle must separate the classes.


\subsection{Separating $\RPH$ and $\PSPACE$}
\label{sec:RPHvsPSPACE}
In this section, we prove \Cref{thm:PSPACEnotRPH}, which we restate for the benefit of the reader.

\PSPACEnotRPH* 

Assume $O$ is a random oracle.
That is, for each string $x\in \{0,1\}^*$, we flip a fair coin to decide if $x\in O$ or not.
We show $\PSPACE^O \not\subset \RPH^O$ with probability one, which separates $\RPH^O$ and $\PSPACE^O$.
Furthermore, we show that, $\RPH^O \subseteq \PSPACE^O$ for all $O$, implying that $\RPH^O \subsetneq \PSPACE^O$ with probability one.

First, we work towards $\PSPACE^O \not\subset \RPH^O$ with probability one.
For this, we show that $\RPH^O$ does not contain a problem called $\ParityA$, but $\PSPACE^O$ does.
This closely follows Cai's proof that $\PSPACE^O\not\subset\PH^O$ with probability one \cite{Cai86}.

\begin{definition}[$\Parityn$]
    The parity function over $n$ boolean variables is given by \emph{$\Parityn(x_1, x_2, \dots, x_n)$} = $x_1 \oplus x_2 \oplus \dots \oplus x_n$, where $\oplus$ is the exclusive or operator.
\end{definition}

\begin{definition}[$\ParityA$ \cite{FSS84}]
Given a language $O \subseteq \{0,1\}^*$, 
\begin{equation*}
    \ParityA = \{ 1^n \mid \text{\emph{the number of binary strings of length $n$ in $O$ is odd}}\}.
\end{equation*}
\end{definition}

Note that whether there is a subscript defines which version of $\Parity$ we are talking about:
if there is no subscript, we talk about the parity function,
if the subscript is a language or its eponymous oracle, we talk about the language $\ParityA$.
We note that there are $N=2^n$ binary strings of length $n$.
So, solving $\ParityA$ on input $n$ requires us to solve $\Parityn$ on $N$ variables.

$\ParityA$ is an example of an \textit{oracle property} \cite{Ang980};
we ask our machine to answer a question about a property of the language $O$.

First, we show that $\ParityA$ is indeed in $\PSPACE^O$.

\begin{lemma}[{\cite{FSS84}}]
    For any language $O$, $\ParityA \in \PSPACE^O$.
\end{lemma}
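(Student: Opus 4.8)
The plan is to show that the language $\ParityA$ can be decided by a polynomial-space Turing machine with oracle access to $O$. The key observation is that on input $1^\ell$, deciding membership in $\ParityA$ amounts to computing the parity of the bits $b_y$ for all $2^\ell$ binary strings $y$ of length $\ell$, where $b_y = 1$ iff $y \in O$. While there are exponentially many such strings, a \PSPACE machine can afford to enumerate them one at a time.

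Concretely, the algorithm I would describe is: on input $w$, first check that $w$ is of the form $1^\ell$ for some $\ell \in \N$ (rejecting otherwise); this requires only linear space. Then initialize a single-bit accumulator $c := 0$. Iterate over all strings $y \in \{0,1\}^\ell$ in lexicographic order, using a length-$\ell$ counter to track the current string $y$; for each $y$, query the oracle whether $y \in O$, and if the answer is yes, flip the accumulator, i.e.\ set $c := c \oplus 1$. After the loop finishes, accept iff $c = 1$. The only storage used beyond the input is the $\ell$-bit counter holding $y$ and the one-bit accumulator $c$, so the total workspace is $O(\ell) = O(|w|)$, which is polynomial (indeed linear) in the input length. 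Each oracle query is answered in one step by definition of the oracle model, so correctness is immediate: at the end of the loop, $c$ equals $\bigoplus_{y \in \{0,1\}^\ell} b_y$, which is $1$ exactly when the number of length-$\ell$ strings in $O$ is odd, i.e.\ exactly when $1^\ell \in \ParityA$.

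There is essentially no obstacle here; the lemma is a routine warm-up. The only minor points to be careful about are (i) handling inputs not of the form $1^\ell$, and (ii) confirming that iterating through all $2^\ell$ strings of length $\ell$ can be done with an $O(\ell)$-space counter and does not require storing anything exponential. Both are standard. The content of the surrounding argument — the genuinely hard direction — is the complementary statement that $\ParityA \notin \RPH^O$ with probability $1$, which relies on the circuit upper bounds from \Cref{thm:CircuitLowerBoundsImplyNoMembershipInRPH} together with Cai's circuit lower bound for \Parity against constant-depth circuits under a random oracle; that is handled separately.
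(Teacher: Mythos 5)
Your proposal is correct and matches the paper's proof exactly: both iterate over all $2^\ell$ strings of length $\ell$ using a linear-space counter, maintain a one-bit parity accumulator that is flipped on each positive oracle answer, and accept according to the final value of that bit. The extra details you note (input-format checking and the space accounting for the counter) are routine and consistent with the paper's argument.
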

\begin{proof}
    A $\PSPACE$-machine can simply iterate over all possible strings of length $n$.
    It uses a single bit $B$ to track the parity of $O$.
    We initiate $B$ with the value $0$.
    For each string $s$, it sets $B$ to $B \oplus (s \in O)$.
    After the $\PSPACE$-machine has iterated over all possible strings $s$ of length $n$, it will have calculated $\ParityA$ in polynomial space.
\end{proof}
Now, we formally separate $\RPH^O$ and $\PSPACE^O$ by showing that $\ParityA$ is not in $\RPH^O$.
We use \Cref{thm:CircuitLowerBoundsImplyNoMembershipInRPH}
along with the following Lemma from Cai to show that $\RPH^O$ can not contain $\ParityA$.

\begin{lemma}[{\cite[Theorem~2.1]{Cai86}} \label{thm:cai}]
    For all $k\geq 2$, there is a sequence $\alpha_n$, where $\alpha_n \rightarrow \frac{1}{2}$ as $n\rightarrow \infty$, such that all depth $k$ Boolean circuits with $2^n$ input bits and size bounded by $$2^{(2^n)^{\frac{1}{4(k+1)}}}$$ err on more than $\alpha_n\cdot 2^{(2^n)}$ of the $2^{2^n}$ many inputs when computing $Parity$ on $2^n$ input bits.
\end{lemma}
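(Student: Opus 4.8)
The statement is an \emph{average-case} (equivalently, \emph{correlation}) lower bound for $\Parityn$ against bounded-depth circuits, and the natural route is H{\aa}stad's switching lemma applied through $k-1$ rounds of random restrictions, in the style of \cite{hastad1986, Cai86}. Fix $k \geq 2$ and write $N = 2^n$ for the number of variables. For a circuit $C$ on $N$ bits let $\mathrm{corr}(C) := \mathbb{E}_{x}\big[(-1)^{C(x)\oplus \Parityn(x)}\big]$, the expectation over a uniformly random assignment $x \in \{0,1\}^N$; then $C$ agrees with $\Parityn$ on exactly a $\tfrac12(1+\mathrm{corr}(C))$ fraction of the $2^N$ assignments, so it suffices to produce a sequence $\varepsilon_n \to 0$ with $|\mathrm{corr}(C)| \leq \varepsilon_n$ for every depth-$k$ circuit $C$ of size at most $2^{N^{1/(4(k+1))}}$, and then set $\alpha_n := \tfrac12(1-\varepsilon_n) \to \tfrac12$.

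First I would set up the restriction machinery. After an initial cleanup restriction one may assume the bottom fan-in of $C$ is at most a small parameter $t_0$ (bottom gates of large fan-in are killed or trivialized by a random restriction except with tiny probability). Then apply $k-1$ further independent H{\aa}stad $p$-random restrictions $\rho_1,\dots,\rho_{k-1}$ with survival probability $p \approx N^{-1/(k+1)}$, and let $\rho$ be their composition. The key iteration is: if the current circuit is depth-$d$ with bottom fan-in $\leq t$, the switching lemma says each depth-$2$ bottom subcircuit, after a $p$-random restriction, is computed by a decision tree of depth $< s$ except with probability at most $(5pt)^{s}$; rewriting that decision tree as an $s$-DNF (resp. $s$-CNF) of the opposite type lets us merge it with the layer above and drop the depth to $d-1$, with new bottom fan-in $\leq s$. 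Union-bounding the failure event over all $\leq S = 2^{N^{1/(4(k+1))}}$ gates and the $k-1$ rounds, and choosing $s \approx N^{1/(4(k+1))}$, the total probability that this collapse fails is at most $S\cdot k\cdot(5pt)^{s}$, which for the stated parameters tends to $0$; I would carry out exactly this bookkeeping. Call $\rho$ \emph{good} if the collapse succeeds, so $C|_\rho$ is computed by a depth-$(<s)$ decision tree and hence depends on fewer than $s$ variables, while simultaneously the free set $F_\rho$ has $|F_\rho| \geq m := \tfrac12 N p^{k-1} \approx N^{2/(k+1)} \gg s$ (this last bound holds except with exponentially small probability by a Chernoff bound, which I fold into the bad event).

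The payoff step is a one-line Fourier computation. Writing $x = (\rho,y)$ with $y$ uniform on the free coordinates, $(-1)^{\Parityn(x)} = \pm\, \chi_{F_\rho}(y)$ where $\chi_{F_\rho}$ is the parity character on the \emph{whole} free set, so $\mathbb{E}_y\big[(-1)^{C|_\rho(y)\oplus \Parityn|_\rho(y)}\big] = \pm\, \widehat{(-1)^{C|_\rho}}(F_\rho)$. For a good $\rho$ the function $C|_\rho$ depends on fewer than $|F_\rho|$ of the free variables, so this Fourier coefficient is $0$; for a bad $\rho$ it is at most $1$ in absolute value. Averaging over $\rho$ gives $|\mathrm{corr}(C)| \leq \Pr[\rho \text{ bad}] =: \varepsilon_n = o(1)$, and $\alpha_n := \tfrac12(1-\varepsilon_n)$ is the required sequence, since every such $C$ then errs on at least a $\tfrac12(1-|\mathrm{corr}(C)|) \geq \alpha_n$ fraction of the $2^N$ assignments.

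The main obstacle is the quantitative balancing in the middle paragraph: one must pick the survival probability $p$ and the decision-tree-depth parameter $s$ so that \emph{simultaneously} (i) enough variables survive all $k-1$ rounds, i.e.\ $N p^{k-1}$ stays well above $s$, and (ii) the accumulated switching-lemma failure probability $S\cdot k\cdot(5pt)^{s}$, union-bounded against the full size budget $S = 2^{N^{1/(4(k+1))}}$, stays $o(1)$; making the exponent $1/(4(k+1))$ come out is exactly this optimization. Secondary technical points needing care are the initial bottom-fan-in cleanup, the conversion of shallow decision trees back into shallow DNFs/CNFs without a size blow-up when merging adjacent layers, and verifying that the composition of the H{\aa}stad restrictions together with a uniform completion reproduces the uniform distribution on $\{0,1\}^N$, so that the correlation identity used in the payoff step is valid.
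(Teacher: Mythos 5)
Your sketch is correct, but note that the paper does not prove this lemma at all: it is imported verbatim as Theorem~1 of Cai~\cite{Cai86}, and the paper's ``proof'' is just that citation. What you have written is a self-contained reconstruction along the now-standard route: reduce the error-fraction statement to a correlation bound, kill large bottom fan-in with a cleanup restriction, apply $k-1$ rounds of H{\aa}stad $p$-random restrictions with the switching lemma and a union bound over the at most $2^{N^{1/(4(k+1))}}$ gates, and finish with the observation that a restricted circuit depending on fewer than $|F_\rho|$ free variables has zero Fourier mass on the full parity character $\chi_{F_\rho}$. Your parameter choices do balance: with $s\approx N^{1/(4(k+1))}$ and $p\approx N^{-1/(k+1)}$ one gets $(5ps)^s = 2^{-\Theta(N^{1/(4(k+1))}\log N)}$, which beats the union bound over the size budget, while $Np^{k-1}\approx N^{2/(k+1)}\gg s$ survivors remain, so the payoff step applies; the only bookkeeping you flag but do not carry out (the cleanup fan-in threshold $t_0=\Theta(\log S)$, decision-tree-to-CNF/DNF conversion and merging, and the fact that composing the restrictions with a uniform completion of the free variables yields the uniform distribution) is routine. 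Two remarks on the comparison: first, Cai's original 1986 argument is technically different (it predates the streamlined switching-lemma-plus-Fourier presentation and is considerably more involved), so your derivation is a genuinely different and cleaner route to the same statement, at the cost of re-proving something the paper only needs as a black box; second, you silently and correctly repair a slip in the statement as transcribed here --- the error count must be a fraction of the $2^N$ input assignments (as in Cai's Theorem~1), not ``of the $N$ many inputs,'' and your $\alpha_n:=\tfrac12(1-\varepsilon_n)$ (or $\tfrac12(1-2\varepsilon_n)$ to get strict inequality) delivers exactly that.
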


\begin{lemma}
    For a random oracle $O$, $\ParityA \not\in \RPH^O$ with probability 1.
\end{lemma}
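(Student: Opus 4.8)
The plan is to combine the circuit upper bound from \Cref{thm:CircuitLowerBoundsImplyNoMembershipInRPH} with Cai's lower bound (\Cref{thm:cai}) via a counting argument over the randomness of $O$. Suppose for contradiction that $\ParityA \in \RPH^O$ with positive probability; since $\RPH^O = \bigcup_k (\SRPHlevel{k}^O \cup \PRPHlevel{k}^O)$ and this is a countable union, there is a fixed $k$ such that $\ParityA \in \SRPHlevel{k}^O \cup \PRPHlevel{k}^O$ with positive probability. Observe that $\ParityA$ has exactly the form required by \Cref{thm:CircuitLowerBoundsImplyNoMembershipInRPH}: taking $p = \Parityn$ (the parity of the $2^n$ bits of $\truthtable(O,n)$), we have $\ParityA = \{1^n \mid p(\truthtable(O,n)) = 1\}$. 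So for every oracle $O$ for which $\ParityA \in \SRPHlevel{k}^O \cup \PRPHlevel{k}^O$, that theorem produces a family of $(k+1)$-depth circuits $(C'_{2^n})_{n\in\N}$ of size at most $(2^n)^{\bigO(\polylog(2^n))}$ computing $\Parityn$ on all $2^n$ inputs — in particular, computing it \emph{correctly on every input}, i.e.\ with zero error.

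Now I would confront this with \Cref{thm:cai}. Setting $N = 2^n$ and depth $k+1$, Cai's lemma says any depth-$(k+1)$ circuit on $N$ inputs of size at most $2^{N^{1/(4(k+2))}}$ must err on more than $\alpha_n N \to N/2$ of its inputs. The circuit size bound $(2^n)^{\bigO(\polylog(2^n))} = N^{\bigO(\polylog N)}$ is quasipolynomial in $N$, which is eventually far below $2^{N^{1/(4(k+2))}}$; hence for all sufficiently large $n$, the circuit $C'_{2^n}$ must err on a constant fraction of inputs — contradicting that it computes $\Parityn$ exactly. The only subtlety is that the circuit from \Cref{thm:CircuitLowerBoundsImplyNoMembershipInRPH} depends on $O$ (its leaves are outputs of a fixed machine $M'^O$, and in fact by \Cref{rem:FaninAtmostPoly} its bottom fan-in is $\poly(n)$), but the contradiction is reached for \emph{each} such $O$ individually, so this is harmless: no oracle $O$ can satisfy $\ParityA \in \SRPHlevel{k}^O \cup \PRPHlevel{k}^O$. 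Thus the ``positive probability'' event was in fact empty, a contradiction, and $\ParityA \notin \RPH^O$ with probability $1$ — indeed for every $O$.

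I would note that this argument actually gives a stronger statement than ``probability $1$'': it shows $\ParityA \notin \RPH^O$ for \emph{all} oracles $O$. The random-oracle phrasing is kept only for uniformity with the companion result that $\ParityA \in \PSPACE^O$ for all $O$ and with Cai's original presentation; the randomness of $O$ is genuinely needed only on the $\PSPACE$ side is not — in fact it is needed nowhere here, and one could state both halves unconditionally. I will nonetheless present it in the probability-$1$ form to match the theorem statement, simply remarking that the containment-failure is deterministic.

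The main obstacle is bookkeeping rather than conceptual: one must check that the quasipolynomial circuit-size bound $(2^n)^{\bigO(\polylog(2^n))}$ coming out of \Cref{thm:CircuitLowerBoundsImplyNoMembershipInRPH} — whose exponent hides the dependence on $k$, $q$, and $P$ — is, for each \emph{fixed} $k$, eventually dominated by the $2^{N^{1/(4(k+2))}}$ threshold in \Cref{thm:cai} (with depth parameter $k+1$ there, so $k+2$ in the exponent denominator). Since $\polylog(2^n) = \poly(n)$ grows polynomially in $n$ while $N^{1/(4(k+2))} = 2^{n/(4(k+2))}$ grows exponentially in $n$, the comparison goes through for all large $n$, and the contradiction is secured. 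A secondary point of care is making sure the reduction to a fixed level $k$ (via countable additivity / the union bound over $k$) is stated cleanly, since \Cref{thm:CircuitLowerBoundsImplyNoMembershipInRPH} is indexed by $k$.
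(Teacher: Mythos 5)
Your proposal is correct and follows essentially the same route as the paper: invoke \Cref{thm:CircuitLowerBoundsImplyNoMembershipInRPH} with $p=\Parityn$ to obtain a quasipolynomial-size constant-depth circuit family for parity on $N=2^n$ inputs, then contradict Cai's lower bound (\Cref{thm:cai}). Your version is in fact a bit more careful than the paper's --- you fix the level $k$ via the countable union, track the depth as $k+1$ rather than $k$, and correctly observe that the contradiction is reached for each oracle individually --- but these are refinements of the same argument, not a different one.
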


\begin{proof}
    
    Suppose $\ParityA \in \RPH^O$ with probability $\alpha > 0$.
    Then, there must exist a single real oracle Turing machine $M$ such that $M$ establishes $\ParityA \in \RPH^O$ for a random oracle $O$ with probability $\beta > 0$ by \Cref{lem:oneTM}.

    Using this real oracle Turing machine~$M$ for the property $\prop = \ParityA$, we invoke \Cref{thm:CircuitLowerBoundsImplyNoMembershipInRPH} and get a family of circuits $(C_n)_{n\in\N}$ such that the following holds.
    We denote with $n$ the input size of the machine $M$ and denote $N= 2^n$ for the number of the number of leaf nodes of the corresponding Boolean circuit $C_n$ as per \Cref{thm:CircuitLowerBoundsImplyNoMembershipInRPH}.
    Recall that $M$ decides if a word $w=1^n$ is in $L(\ParityA)$ and $C_n$ does the same. 
    But we assume that $C_n$ receives the truth table $\truthtable(O,n)$ of length $N = 2^n$.
    Remember that $C_n$ has (constant) depth $k$ and size at most $N^{\bigO(\polylog(N))}$. 
    As $M$ does this correctly for a $\beta > 0.99$ fraction of the oracles by \Cref{lem:oneTM}, so does $C_n$, by \Cref{thm:CircuitLowerBoundsImplyNoMembershipInRPH}.
    As such, if $M^O$ establishes $\ParityA \in \RPH^O$ with probability $\beta > 0.99$, the circuit should compute the correct answer with at least probability $0.99 > \frac{1}{2}$.)
    However, by \Cref{thm:cai}, such a circuit should have size \(2^{\Omega(N^\frac{1}{k})}\).
    As \(2^{\Omega(N^\frac{1}{k})}\) grows asymptotically faster than $N^{\bigO(\polylog(N))}=2^{\bigO(\polylog(N))}$,
    for large enough $n$, we have \(2^{\Omega(N^\frac{1}{k})} \geq 2^{\bigO(\polylog(N))}\) for $N=2^n$.
    As such, we reach a contradiction, and we prove $\ParityA \not\in \RPH^O$ with probability 1.
\end{proof}

We observe that $\ParityA$ is not in $\RPH^O$, but is in $\PSPACE^O$.
As such, we conclude that $\PSPACE^O \nsubseteq \RPH^O$ for a random oracle $O$.
Note that for any given circuit which attempts to compute \Parity, it can succeed for some specific oracles, which is why we cannot claim $\RPH^O \nsubseteq \PSPACE^O$ for all $O$.

Now, we shift our attention to the inclusion of $\RPH^O \subseteq \PSPACE^O$ given any binary oracle $O$.
Note that this is not trivially following from 
$\RPH \subseteq \PSPACE$, as \RPH can guess real numbers; such guesses are a priori difficult to simulate using polynomial space.

\begin{lemma}
    \label{lem:RPHSubsetPSPACE}
    $\RPH^O\subseteq\PSPACE^O$ for every binary oracle $O$.
\end{lemma}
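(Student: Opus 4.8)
The statement to prove is $\RPH^O \subseteq \PSPACE^O$ for every binary oracle $O$. The challenge, as flagged in the text, is that an $\RPH^O$ computation quantifies over real vectors $u_1, \ldots, u_k \in \R^{q(n)}$, and \emph{a priori} there is a continuum of witnesses to check, which a $\PSPACE$ machine cannot enumerate. The key to overcoming this is \Cref{lem:ConstrainedQuantifierRanges}: for a fixed input $x$ of length $n$, a fixed oracle $O$, and a fixed real Turing machine $M$ with polynomial runtime $T(n)$, the sentence
\[
Q_1 u_1 \in \R^{q(n)} \cdots Q_k u_k \in \R^{q(n)} : M^O(x, u_1, \ldots, u_k) = 1
\]
is equivalent to the same sentence with each $u_j$ ranging over the finite set $\LLL(n, T, j, k)$. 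So in principle the existential/universal search is finite. The remaining difficulty is that the sets $\LLL(n, T, j, k)$ have doubly-exponential cardinality (roughly $(2^n \cdot \multiplier^{T(n')})^{2^{k}}$) and — more seriously — consist of real vectors whose bit-complexity we have no handle on; we cannot simply write them down on a polynomial-space tape.

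\textbf{First step: reduce to PSPACE over the reals, then eliminate reals via Canny/Renegar.} I would argue in two stages. First, observe that $\RPH \subseteq \PSPACE$ is already known (Canny~\cite{C88b}, Renegar~\cite{R92}): the existential (and, by closure under complement, the full quantified) theory of the reals can be decided in polynomial space, and hence a real Turing machine with $k$ alternating real quantifier blocks and a polynomial-time inner computation is simulable in $\PSPACE$. The only thing to check is that this relativizes cleanly, i.e.\ that adding an oracle $O$ that the inner machine $M$ may query does not break the simulation. Here is the crucial point: on an input $x \in \{0,1\}^n$, the inner real Turing machine $M^O$ runs for only $P(n)$ steps, so along any computation path it makes at most $P(n)$ oracle queries, and each query string has length at most $P(n)$. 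Therefore the \emph{entire} relevant portion of $O$ — namely $O$ restricted to strings of length $\le P(n)$, which is all $M^O$ can ever see on inputs of length $n$ — can be consulted by the $\PSPACE^O$ simulator directly: whenever the simulation needs to know the answer to a query $z$, it simply asks its own oracle $O$. So the oracle queries are not an obstacle at all; they are answered in unit time by the simulating machine's oracle.

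\textbf{Second step: the real arithmetic.} With the oracle handled, what remains is exactly the claim that, for a fixed binary string $z$ encoding the answers to the (at most polynomially many, polynomially short) possible oracle queries, the quantified real sentence
\[
Q_1 u_1 \in \R^{q(n)} \cdots Q_k u_k \in \R^{q(n)} : M(x, u_1, \ldots, u_k, z) = 1
\]
can be decided in $\PSPACE$. Now $M$ is an ordinary (oracle-free, once $z$ is fixed) polynomial-time real Turing machine, so its output $M(x, u_1, \ldots, u_k, z)$ as a function of the real inputs is computed by a polynomial-size arithmetic circuit over $\{+, -, \times, \div\}$ together with sign tests; equivalently, the set of $(u_1, \ldots, u_k)$ on which $M$ accepts is a semialgebraic set defined by polynomially many polynomial (in)equalities of polynomially bounded degree and bit-size (the branching on sign tests contributes the Boolean combination). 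Deciding a bounded-alternation quantified sentence over such a formula is precisely what the algorithms of Canny and Renegar do in $\PSPACE$ — indeed in space polynomial in the formula size and the number of quantifier blocks. Since $k$ is a fixed constant and the formula has polynomial size, the whole decision runs in $\PSPACE$ relative to $O$. One should be a little careful about how the sign-test branching of the real Turing machine is unfolded into a formula: a clean way is to note that a $P(n)$-step computation has at most $2^{P(n)}$ branches, each a conjunction of $P(n)$ sign conditions, and the accepting set is the union of the accepting branches; this is a formula of exponential \emph{length} but only polynomial in the relevant complexity parameters (number of polynomials can be kept polynomial by introducing auxiliary variables for intermediate register values, à la the standard SLP-to-ETR translation), so Renegar's bound still gives $\PSPACE$.

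\textbf{Main obstacle.} The genuinely delicate point is making sure the translation from the real Turing machine $M$ to a quantified real formula keeps \emph{all} the relevant size parameters polynomial, so that invoking Renegar~\cite{R92} (or Canny~\cite{C88b}) actually yields $\PSPACE$ and not, say, $\mathsf{EXPSPACE}$. The standard fix — adding one auxiliary real variable per arithmetic operation to record intermediate values, and one per step to record which branch of each sign test is taken, with the acceptance condition a polynomial-size conjunction — blows up the number of quantified variables to $\mathrm{poly}(n)$ but keeps degrees at $2$ and coefficient bit-sizes polynomial, and keeps the number of quantifier \emph{alternations} equal to $k+1$ (the original $k$ blocks plus one block for the auxiliary variables, which can be folded into the innermost existential block). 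Checking this bookkeeping carefully, and confirming that Renegar's space bound is polynomial in exactly these parameters, is the real content of the proof; everything else is routine relativization.
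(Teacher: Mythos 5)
Your overall route differs from the paper's: you relativize the Canny--Renegar proof of $\RPH \subseteq \PSPACE$ directly, by translating the real Turing machine into a quantified Tarski formula and invoking a polynomial-space decision procedure, whereas the paper goes through the finite witness sets $\LLL$ and the circuit of \Cref{thm:RPHandACanalogue}, and then evaluates each circuit input by combining $\RPH \subseteq \PSPACE$ as a black box, self-lowness of \PSPACE (\Cref{lem:PSPACESelfLow}), and an enumeration of oracle query transcripts. Your second step (the SLP-to-Tarski translation with auxiliary variables, keeping degrees at $2$ and the number of alternations bounded) is standard and fine.

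The gap is in your treatment of the oracle. You assert that the relevant portion of $O$ consists of ``at most polynomially many, polynomially short possible oracle queries'' and that the simulator can ``simply ask its own oracle'' whenever it needs an answer. Neither holds as stated. First, the queries $M^O$ makes depend on the real witnesses $u_1,\dots,u_k$: the strings written on the oracle tape are determined by the computation path, which branches on the signs of real registers. Across all witnesses the set of possible queries is essentially all of $\{0,1\}^{\le P(n)}$, i.e., exponentially many, so there is no polynomial-length string $z$ that fixes in advance all the answers the formula might need. Second, once you have passed to a symbolic decision procedure (Renegar on a Tarski formula), there is no moment at which ``the simulation needs to know the answer to a concrete query $z$'': the oracle answer is a non-semialgebraic function of the query string and cannot be written into a polynomial-size formula. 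The missing step is the transcript trick: quantify (discretely) over the at most $P(n)$ query--answer pairs, verify each guessed answer against $O$ by a direct oracle call \emph{outside} the real formula, and require \emph{inside} the now oracle-free formula that $M$'s actual queries match the guessed transcript; a \PSPACE machine can enumerate all $2^{\poly(n)}$ such transcripts, and since the transcript is uniquely determined by $(x,u_1,\dots,u_k,O)$ this extra quantifier commutes with the real quantifier block. This is exactly the content of the paper's ``generate all combinations of a polynomial amount of queries'' step, and without it your first step does not go through.
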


As a preparation, we need the following \Cref{lem:PSPACESelfLow},
which seems to be folklore.
However, to the best of the authors' knowledge, there is no peer-reviewed source to justify the claim. 
We have only found a blog post entry by Aaronson~\cite{aaronson2014lens}.
For completeness, we present a simple proof.
Recall that a complexity $C$ class is \textit{self-low}, if $C^C = C$.
\begin{lemma}
    \label{lem:PSPACESelfLow}
    $\PSPACE$ is self-low, assuming inputs of oracle queries are at most of polynomial length.
\end{lemma}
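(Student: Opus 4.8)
The plan is to show $\PSPACE^{\PSPACE}=\PSPACE$ by the standard simulation argument: a $\PSPACE$-machine with access to a $\PSPACE$-oracle can answer each oracle query itself, reusing the same polynomial amount of space. The key technical point is that polynomial-space computations compose well because space (unlike time) can be \emph{reused}.

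\medskip

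\emph{Proof sketch.} Let $M$ be an oracle Turing machine running in space $s(n)=n^{O(1)}$ with access to an oracle $A\in\PSPACE$, and let $M_A$ be a machine deciding $A$ in space $t(m)=m^{O(1)}$. We build an ordinary $\PSPACE$-machine $M'$ that simulates $M$ step for step. The one subtlety is how $M'$ handles the query tape. Whenever $M$ would write a symbol on its query tape and eventually enter the query state, $M'$ instead, at the moment the query state is reached, runs $M_A$ on the current contents of the query tape and records the one-bit answer. Since by hypothesis every query string has length at most some polynomial $p(n)$ in the original input length $n$, the simulation of $M_A$ uses space at most $t(p(n)) = n^{O(1)}$. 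Crucially, this workspace can be \emph{erased and reused} for the next query, so the total space used by $M'$ is at most $s(n) + t(p(n)) + O(p(n)) = n^{O(1)}$, where the $O(p(n))$ term accounts for storing the query string itself. Hence $M'$ is a polynomial-space machine deciding the same language, so $\PSPACE^{\PSPACE}\subseteq\PSPACE$; the reverse inclusion is trivial.

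\medskip

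\textbf{Main obstacle.} The only thing one must be careful about is the assumption in the statement that query strings have polynomial length: an unrestricted oracle machine could in principle write an exponentially long query string on its (read/write, hence unbounded-in-principle) query tape, and then simulating the oracle on it would blow up the space. This is precisely why the hypothesis ``inputs of oracle queries are at most of polynomial length'' is included, and in the intended application (simulating an $\RPH^O$ computation, where $O$ is queried with strings produced by a polynomial-time machine) this restriction is automatically satisfied. A secondary bookkeeping point is to note that $M'$ need not actually store the full tape of $M$ separately from simulating $M_A$ --- it interleaves the two computations, so the space bounds add rather than multiply. With the polynomial-length restriction in place, the argument is entirely routine.
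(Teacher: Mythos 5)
Your proof is correct and follows the same approach as the paper's: simulate each oracle query in-place by running the $\PSPACE$-decider for the oracle language on the (polynomially bounded) query string, using $\poly(\poly(n))=\poly(n)$ space. Your version is actually more careful than the paper's, spelling out space reuse between queries and the role of the polynomial-length hypothesis.
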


\begin{proof}
    When we query some $\PSPACE$-oracle as a subroutine, the oracle's input must have some size $\poly(n)$.
    Thus, using the $\PSPACE$-machine to solve this oracle query would use at most $\poly(\poly(n))=\poly(n)$ space.
    In conclusion, any $\PSPACE$-oracle used within a $\PSPACE$-machine can be replaced by a machine that takes at most $\poly(n)$ space.
    Note that we assume that oracle queries are at most of polynomial length, to ensure that the machine's oracle tape is bound to the same polynomial-space requirement as its work tape.
\end{proof}

We are now ready to show that $\RPH^O$  is contained in $\PSPACE^O$, for binary oracles.

\begin{figure}[t]
\centering
\includegraphics{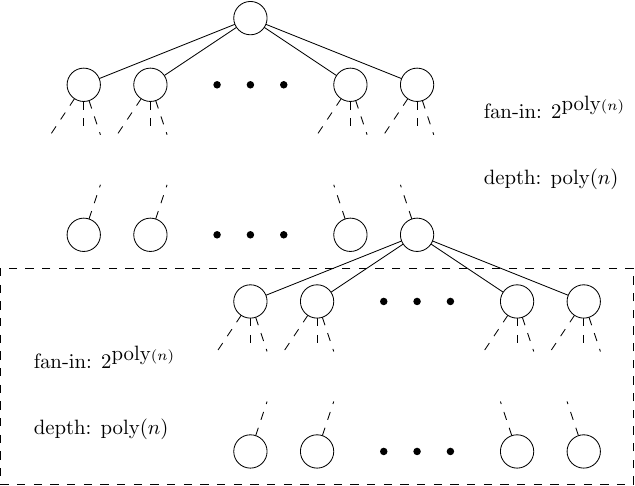}

\label{fig:my_label}
\caption{
A sketch of the circuit a $\PSPACE$-machine uses to solve $\RPH^O$.
The top levels of the circuit represent the different variable assignments.
Each level represents a variable, with each variable getting one of $2^{\poly(n)}$ possible assignments.
Each variable assignment is encoded by a polynomial-sized $\LogicRPH$ formula.
As there are at most $\poly(n)$ variables, the depth is $\poly(n)$.
At the bottom layers, all variables have been assigned a value.
These then handle the different possible oracle queries the machine may attempt to ask.
There are at most $2^{\poly(n)}$ possible different oracle queries of polynomial length the machine might ask, so the fan-in is at most $2^{\poly(n)}$.
Furthermore, the depth is $\poly(n)$, as the machine may ask at most $\poly(n)$ different queries, as it can ask at most one per time step.
}
\end{figure}

\begin{proof}[Proof of \Cref{lem:RPHSubsetPSPACE}]
    Let $L$ be a language in $\DRPHlevel{k}^O$.
    Consider the corresponding circuits \( (C_n)_{n\in \mathbb{N}} \) with inputs \(a\) (depending on \(L\) and \(x \in \{0,1^*\}\)) which we obtain from an application of \Cref{thm:RPHandACanalogue} to \(L\).
    We will establish how the computations of an arbitrary such \(C_n\) on an arbitrary such input \(a\) can be carried out in polynomial space.
    This is obvious as soon as we are able to compute \(a\) in polynomial space, because the computation of the conjunction/disjunction of polynomial-space computable Boolean values can be updated one Boolean value at a time, and hence still using polynomial space.

    To compute the \(j\)th bit of \(a\) we need to evaluate \(M^O(x,w_j)\) for \(x \in \{0,1\}^*\) and \(w_j \in \LLL\) as specified by \Cref{thm:RPHandACanalogue}.
    For this, we need to overcome two obstacles which a priori prevent us from directly using the fact that \(\RPH \subseteq \PSPACE\)~\cite{Can88, R92B}: (1) \(w_j \in \LLL\) is not encoded in binary and (2) the access of \(M^O\) to \(O\).
    
    We first address (1) and formulate \(M^O(x,w_j)\) through a computation of a polynomial-time real Turing machine with oracle access on binary input.
    This reformulation can be carried out in polynomial space.
    Since each number in $\LLL$ is generated by finding an assignment of variables such that a polynomial-time real Turing machine either evaluates to true or false (see the proof of \Cref{lem:ConstrainedQuantifierRanges}), we can use the set of real Turing machines to encode~$\LLL$.
    Indeed, there may be multiple sets $\LLL$, but the exact set we obtain is not important, as long as it maintains the property from \Cref{lem:ConstrainedQuantifierRanges}.
    Since these real Turing machines are discrete structures, they can be encoded by binary strings and thus a $\PSPACE^O$-machine can generate the first real Turing machine in lexicographical order or the next real Turing machine that is representable by a binary string of length at most polynomial in the input in polynomial space, given the previous real Turing machine.
    As such, we are able to iterate over all polynomial-time real Turing machines in polynomial space.
    Now, we can determine our variables to be equal to the resulting variables generated from the appropriate real Turing machine, rather than explicitly setting them.
    As each variable is replaced by a polynomial size Turing machine, the size of the program will still be polynomial.

    To address (2) we describe a $\PSPACE^O$-machine that simulates a polynomial-time real Turing machine with oracle access by querying an $\RPH$-machine.
    From \Cref{lem:PSPACESelfLow}, we know $\PSPACE$ is self-low (if we do not allow exponentially long oracle queries), which means by definition $(\PSPACE)^{(\PSPACE)}=\PSPACE$.
    As $\RPH \subseteq \PSPACE \subseteq \PSPACE^O$~\cite{Can88}, we can use an oracle to $\RPH$ as part of our $\PSPACE^O$-machine.
    So, we would like to simply ask the $\RPH$ oracle to carry out the computation required for the computation of the input bits.
    However, one issue remains: the $\RPH$ oracle does not have access to an oracle to $O$.
    To circumvent this, for each input bit, the $\PSPACE^O$-machine generates all combinations of a polynomial amount of queries to $O$. Each $\RPH$-machine will query $O$ at most a polynomial amount, as it only evaluates a quantifier-free formula, as it guesses the $\RPH$-sentences corresponding to the values in $\LLL$ before asking the $\RPH$-oracle to evaluate the result.
    Then, at least one of these combinations must contain all queries that the $\RPH$-machine makes.
    The $\RPH$-machine then takes as part of its input a list of oracle queries and their results.
    Then, the $\RPH$-machine instead refers to its input registers where the answers are stored.
    Furthermore, if it ever encounters an oracle query that is not given as input, the $\RPH$-machine will immediately write a $0$ to its output register and set its state to $q_\texttt{halt}$.

    We note that the path of the $\RPH$-machine will always be the same, provided the oracle queries are passed as input.
    So, any set of oracle queries passed by the $\PSPACE^O$-machine will either result in the $\RPH$-machine halting prematurely or finding the correct solution.
    If, for any of the combinations of oracle queries, the $\RPH$ machine returns true, the input bit is set to \(1\).
    Otherwise, the input bit is set to \(0\).

 \end{proof}

As $\RPH^O \subseteq \PSPACE^O$ for every binary oracle $O$ and $\PSPACE^O \not\subset \RPH^O$ for a random oracle $O$ with probability 1, we conclude that $\RPH^O \subsetneq \PSPACE^O$ for a random oracle with probability 1 and prove \Cref{thm:PSPACEnotRPH}.

\begin{remark}
It is curious to note that $\RPH^O$ is not necessarily contained in $\PSPACE^O$, if we consider \textit{real} oracles.
Specifically, Meer and Wurm~\cite{MeerWurm2025} showed that $\NP^\Z \subsetneq \ER^{\Z}$ and 
$\ER^\Z \not\supseteq \PSPACE^\Z$.
The underlying reason is that $\ER^\Z$ can solve undecidable problems such as Diophantine equations, whereas a binary Turing machine cannot make efficient use of real oracles.
\end{remark}

\subsection{Separating the Levels of \(\RPH\) (and \(\PH\))}
\label{sec:SepLevelsRPH}
In this section, we prove \Cref{thm:RPHlayers}, which we repeat here for the convenience of the reader.

\RPHlayers*


To prove \Cref{thm:RPHlayers}, we show that there exists a function that $\SRPHlevel{k+1}^O$ can compute, yet $\DRPHlevel{k}^O$ cannot compute.
To do this, we use the following result from Rossman et. al:

\begin{lemma}[{\cite[Average-Case Depth Hierarchy Theorem]{Rossman2015FOCS}}]
    \label{thm:averageCaseDepthHierarchyTheorem}
    For all constants $k \in \N$, there exists a family $\{ \sipser \}_{N \in \N}$ of Boolean functions such that
    \begin{enumerate}
        \item $\sipser$ is an $N$-variable Boolean function computable in depth-$(k+1)$ \AC, and yet
        \item Any depth-$(k+1)$ circuit $C$ of $\quasipoly{N}$ size and bottom fan-in $\polylog(N)$ agrees with $\sipser$ on at most a $\frac{1}{2}+N^{-\Omega(1/(k+1))}$-fraction of inputs.
        That is, the fan-in of the bottom-level gates is at most $\polylog(N)$.
    \end{enumerate}
\end{lemma}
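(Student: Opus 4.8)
The plan is to establish \Cref{thm:averageCaseDepthHierarchyTheorem} via the \emph{random projection} method of Rossman, Servedio and Tan~\cite{Rossman2015FOCS}, which refines H\aa stad's random restriction technique~\cite{hastad1986}. First I would fix the target function: take $\sipser = F^{k+1}_N$ to be an explicit \emph{Sipser-like} read-once monotone formula of depth $k+1$, i.e.\ a tree of alternating $\bigwedge$/$\bigvee$ gates with carefully tuned fan-ins $m_1,\dots,m_{k+1}$ at the successive levels (bottom fan-in $m_{k+1}=\Theta(\log N)$, higher fan-ins growing so that the total number of leaves is $N$), together with a tuning of the leaf probabilities so that every subformula evaluates to $0$ and to $1$ each with probability bounded away from $0$ and $1$ under the uniform input distribution. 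Item~(1) of the statement is then immediate, since $\sipser$ \emph{is}, by construction, a polynomial-size depth-$(k+1)$ read-once formula and hence lies in depth-$(k+1)$ \AC. All of the work goes into item~(2).

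For item~(2) the heart of the argument is a \emph{projection switching lemma}. The goal is to design a distribution over random \emph{projections} $\rho$ --- restrictions which, besides fixing variables to constants, also merge blocks of surviving variables into a common fresh variable --- that simultaneously satisfies two properties. On the circuit side: for any depth-$k$ circuit $C$ of quasipolynomial size and polylogarithmic bottom fan-in, with probability $1-o(1)$ over $\rho$ the bottom two levels of $C$ ``switch'', so that $C$ restricted by $\rho$ is computed by a depth-$(k-1)$ circuit that is again of quasipolynomial size and has polylogarithmic bottom fan-in; this is proved by a H\aa stad-style encoding/counting argument (canonical decision trees) transported to the projected setting, with a failure probability small enough to survive a union bound over all quasipolynomially many gates. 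On the function side: the block structure of $\rho$ is engineered so that $\sipser$ restricted by $\rho$ is distributed exactly (or statistically close to) $F^{k}_{N'}$, the Sipser-like function one level shallower on a suitable number $N'$ of variables --- the projection collapses the bottom level of the balanced formula precisely so as to re-expose a fresh copy of the next function. Crucially, $\rho$ is also \emph{measure preserving}: sampling $\rho$ and then a uniform input to the $N'$-variable projected instance induces the uniform distribution on the original $N$ inputs.

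Granting the projection switching lemma, I would then compose $k-1$ independent projections. By measure preservation, if some depth-$k$ circuit $C$ of the allowed size agreed with $\sipser=F^{k+1}_N$ on a $\beta$-fraction of inputs, then after the composition there would exist a projection sequence along which (i) $C$ collapses all the way down to a decision tree of polylogarithmic depth, while (ii) $\sipser$ collapses to $F^{2}_{N''}$, a balanced read-once depth-$2$ (Tribes-like) formula, and (iii) the collapsed tree still agrees with $F^{2}_{N''}$ on a $\bigl(\beta-o(1)\bigr)$-fraction of inputs (the $o(1)$ being the accumulated slack from the $k-1$ steps). The base case is then a direct combinatorial estimate: conditioned on any partial assignment of polylogarithmically many variables, the output of $F^{2}_{N''}$ stays strictly between $1/4$ and $3/4$ with probability $1-o(1)$, because so few variables cannot force enough of the bottom $\bigwedge$-subformulas; hence no fixed leaf label of a polylog-depth decision tree is correct on more than a $(1/2+o(1))$-fraction of the consistent inputs, and a union bound over the quasipolynomially many leaves shows the whole tree agrees with $F^{2}_{N''}$ on at most a $(1/2+o(1))$-fraction. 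Combining, $\beta - o(1) \le 1/2 + o(1)$, so $\beta \le 1/2 + o(1) < 0.9$ for all sufficiently large $N$, which is the desired contradiction (and in fact yields the stronger ``$1/2+o(1)$'' bound).

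The main obstacle is the projection switching lemma with both guarantees at once. Ordinary H\aa stad restrictions would destroy the delicate level-by-level balance of the Sipser formula, which is exactly why one must pass to projections; but then the classical switching-lemma counting has to be re-run in the projected model while the error probability is kept below the reciprocal of the circuit's quasipolynomial size, and the projection's block sizes, $\star$-probability, and number of blocks must be chosen so that the circuit side and the function side work for \emph{every} level $1,\dots,k+1$ simultaneously. A secondary, more bookkeeping-flavoured obstacle is controlling the average-case error: one must verify that the agreement fraction degrades by only $o(1)$ per projection step --- which is where measure preservation is used --- so that after all $k-1$ steps it is still far enough from $1/2$ to contradict the base case.
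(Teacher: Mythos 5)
The paper does not prove this lemma at all: it is imported verbatim, as a black box, from Rossman--Servedio--Tan~\cite{Rossman2015FOCS}, so there is no in-paper argument to compare against. Your sketch is a faithful outline of the actual RST proof (Sipser-like depth-$(k+1)$ formulas, a projection switching lemma for random projections that are measure preserving and regenerate the one-level-shallower Sipser function, composition down to a depth-$2$ base case giving agreement at most $1/2+o(1)$, which indeed implies the stated $0.9$ bound). Be aware, though, that as written it is an outline rather than a proof: the projection switching lemma --- which you correctly identify as the technical heart, including the delicate choice of block sizes and $\star$-probabilities and the union bound over quasipolynomially many gates --- is only stated as a goal, so on its own your text does not constitute an independent derivation; for the purposes of this paper, citing \cite{Rossman2015FOCS} as done is the intended treatment.
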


We now define an oracle property that calculates $F$ for some language $O$.
\begin{definition}[$\{ \sipser \}^O$]
    Given an oracle $O$ and $k \in \N$, let $O(i_1, \ldots, i_k)\in \{0,1\}$ be some fixed known Boolean value, for $i_1, \dots, i_k \in \{ 1, \dots, n^{1/(k+1)} \}^k $.
    \(\{\sipser\}^O\) is \(1\) if 
    \[\ \exists i_1 \forall i_2 \exists i_3 \dots Q i_k : O(i_1, i_2, i_3, \dots, i_k) \]
    evaluates to true, and \(0\) otherwise.
\end{definition}

Given the oracle, any machine solving $\{\sipser\}^O$ will have to query $O$ up to $N$ times, while also compounding the answer into a single result according to the quantifiers given in the problem definition.
We now show that $\SPHlevel{k+1}^{O}$ contains this problem, due to the structure of its quantifiers.
In contrast, due to the lack of the $(k+1)$st quantifier, we show $\DRPHlevel{k}^O$ does not contain this problem.

As $\SPHlevel{k+1}$ is equivalent to solving a quantified Boolean formula with $k$ quantifiers, and $\{ \sipser \}^O$ corresponds to verifying \[\exists i_1 \forall i_2 \exists i_3 \dots Q i_k : (i_1, i_2, \dots, i_k) \in O,\] 
it is easy to see that \(\{ \sipser \}^O\in \SPHlevel{k}^{O}\).

\begin{lemma}
\label{thm:LevelSep}
    For a random oracle $O$, with probability $1$ it holds that $\{ \sipser \}^O \not\in \SRPHlevel{k}^O$ and $\{ \sipser \}^O \not\in \PRPHlevel{k}^O$ where $n$ is the input size and $N=2^n$.
\end{lemma}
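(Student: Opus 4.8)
The plan is to combine the circuit-size upper bound from \Cref{thm:CircuitLowerBoundsImplyNoMembershipInRPH} with the average-case depth hierarchy lower bound from \Cref{thm:averageCaseDepthHierarchyTheorem}, in the same spirit as the $\ParityA$ argument but now tracking \emph{depth} as well as size. First I would observe that $\{\sipser\}^O$ is exactly of the form $L = \{1^n \mid p(\truthtable(O,n))=1\}$ with $p = \sipser$ the Sipser-like function on $N = 2^n$ variables (here the $i_1,\ldots,i_k$ index the $N$ bit positions of the truth table). Suppose towards a contradiction that $\{\sipser\}^O \in \SRPHlevel{k}^O$ with positive probability over the random oracle. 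Then \Cref{thm:CircuitLowerBoundsImplyNoMembershipInRPH} yields a family of depth-$(k+1)$ circuits $(C'_N)$ of size $N^{\bigO(\polylog N)} = \quasipoly{N}$ computing $\sipser$ \emph{on every input}. This already contradicts part (1)/(2) tension of \Cref{thm:averageCaseDepthHierarchyTheorem} only if the depths line up; here the issue is that \Cref{thm:averageCaseDepthHierarchyTheorem} gives a lower bound against depth-$k$ circuits of small bottom fan-in, whereas the circuit we produced has depth $k+1$. So the first index shift must be handled carefully: to get a contradiction with the depth-$k$ lower bound for $\sipser$, I should instead assume $\{\sipser\}^O \in \SRPHlevel{k-1}^O \cup \PRPHlevel{k-1}^O$, apply \Cref{thm:CircuitLowerBoundsImplyNoMembershipInRPH} with ``$k$'' $= k-1$ to get a depth-$k$ circuit family of $\quasipoly{N}$ size, and then invoke \Cref{rem:FaninAtmostPoly} to conclude that this depth-$k$ circuit has bottom fan-in at most $\poly(n) = \polylog(N)$ — precisely matching the hypotheses of the depth hierarchy theorem. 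Since such a circuit agrees with $\sipser$ on at most a $0.9$-fraction of inputs while ours computes it exactly, we reach a contradiction, establishing $\{\sipser\}^O \notin \SRPHlevel{k-1}^O \cup \PRPHlevel{k-1}^O$, which in particular gives $\{\sipser\}^O \notin \PRPHlevel{k-1}^O$.

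For the other claim, $\{\sipser\}^O \notin \SRPHlevel{k}^O$, I would argue that the Sipser function $\sipser$ is \emph{read-once}-like in its quantifier structure, so its negation is essentially the Sipser function for the ``other'' quantifier order, and the lower bound of \Cref{thm:averageCaseDepthHierarchyTheorem} applies symmetrically; alternatively, and more simply, note that a $\SRPHlevel{k}^O$ computation again produces (via \Cref{thm:CircuitLowerBoundsImplyNoMembershipInRPH} with parameter $k$) a depth-$(k+1)$ circuit of $\quasipoly{N}$ size and polylog bottom fan-in computing $\sipser$ exactly. But $\sipser$ is designed to be hard for depth-$k$ circuits; one cannot directly use the depth-$(k+1)$ circuit against a depth-$k$ lower bound. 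The clean way around this is to use that $\sipser$ is chosen so that \emph{both} $\sipser$ and its complement are hard for depth-$k$ small-bottom-fan-in circuits, and to exploit that $\SRPHlevel{k}$ versus $\PRPHlevel{k}$ differ by the top quantifier: I would take the depth-$(k+1)$ circuit, note its top gate is an $\lor$ (for $\SRPHlevel{k}$), and argue that since $\sipser$ is a Sipser function the top $\lor$ can be ``folded in'' — i.e.\ $\sipser$ itself already looks like an $\lor$ of depth-$k$ subcircuits only trivially, so any genuine depth reduction would contradict part (2) of \Cref{thm:averageCaseDepthHierarchyTheorem}. Concretely, I expect the paper to define $\{\sipser\}^O$ so that its natural $k$-quantifier structure matches $\SPHlevel{k}$, meaning $\SRPHlevel{k}^O$ would correspond to a depth-$(k+1)$ description, and the hard direction is ruling this out via the depth-$k$ lower bound after absorbing one layer using \Cref{rem:FaninAtmostPoly}.

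The main obstacle, and the step I would spend the most care on, is the bookkeeping of depths and quantifier alternation: matching the ``$+1$ layer'' introduced by the oracle-query-reduction in \Cref{thm:CircuitLowerBoundsImplyNoMembershipInRPH} against the exact depth parameter in \Cref{thm:averageCaseDepthHierarchyTheorem}, and ensuring that the bottom fan-in is genuinely $\polylog(N)$ — which is exactly what \Cref{rem:FaninAtmostPoly} is there to supply, and why that remark was flagged as relevant. A secondary subtlety is that \Cref{thm:averageCaseDepthHierarchyTheorem} is an \emph{average-case} statement (at most $0.9$-fraction agreement), so I must make sure the circuit extracted from an $\RPH^O$ computation computes $\sipser$ \emph{exactly} on all $2^N$ inputs (not just with high probability), which it does since \Cref{thm:RPHandACanalogue} gives an exact equivalence $x \in L \Leftrightarrow C_n(a)=1$; the probabilistic content lives entirely in the choice of oracle, handled as in the $\ParityA$ case by a Borel–Cantelli / $0$-$1$-law style argument over $n$. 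Finally, since $\SPHlevel{k} \subseteq \SRPHlevel{k}$ and we showed $\{\sipser\}^O \in \SPHlevel{k}^O$ but $\{\sipser\}^O \notin \PRPHlevel{k-1}^O \supseteq \SPHlevel{k-1}^O$ (after checking the $\Sigma/\Pi$ correspondence for odd/even $k$), this simultaneously separates the levels of $\PH$ and feeds into \Cref{thm:RPHlayers}.
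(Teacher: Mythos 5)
Your route is the same as the paper's: extract a constant-depth, $\quasipoly{N}$-size circuit with $\polylog(N)$ bottom fan-in from an \RPH computation via \Cref{thm:CircuitLowerBoundsImplyNoMembershipInRPH} and \Cref{rem:FaninAtmostPoly}, observe that it computes $\sipser$ \emph{exactly} (so the average-case lower bound applies a fortiori), and contradict \Cref{thm:averageCaseDepthHierarchyTheorem}. Your worry about the depth bookkeeping is also legitimate: membership in $\SRPHlevel{k}^O\cup\PRPHlevel{k}^O$ yields a circuit of depth $k+1$, while \Cref{thm:averageCaseDepthHierarchyTheorem}, as quoted, only speaks about depth-$k$ circuits.

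However, your resolution proves strictly less than the lemma claims. Instantiating \Cref{thm:CircuitLowerBoundsImplyNoMembershipInRPH} at level $k-1$ gives $\{\sipser\}^O \notin \SRPHlevel{k-1}^O \cup \PRPHlevel{k-1}^O$, which covers the $\PRPHlevel{k-1}$ half of the statement but not the $\SRPHlevel{k}$ half, and your second paragraph does not close that gap: ``folding in'' the top $\lor$ of a depth-$(k+1)$ circuit does not reduce its depth, and no depth-$(k+1)$ size bound can be contradicted per se, since by part (1) of \Cref{thm:averageCaseDepthHierarchyTheorem} the function $\sipser$ \emph{is} computable by a small depth-$(k+1)$ circuit. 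The missing ingredient is the full form of the Rossman--Servedio--Tan theorem: it rules out not only depth-$k$ circuits of quasipolynomial size, but also depth-$(k+1)$ circuits of quasipolynomial size whose \emph{bottom fan-in} is $\polylog(N)$ (such a circuit is ``morally'' of depth $k$, and this is exactly the regime their switching-lemma argument handles). This is the entire reason \Cref{rem:FaninAtmostPoly} is recorded: the circuit obtained from a level-$k$ \RPH computation has depth $k+1$ \emph{and} bottom fan-in $\poly(n)=\polylog(N)$, so it falls under that branch of the lower bound, giving $\{\sipser\}^O \notin \SRPHlevel{k}^O \cup \PRPHlevel{k}^O$ directly. (To be fair, the paper's own statement of \Cref{thm:averageCaseDepthHierarchyTheorem} and its proof of \Cref{thm:LevelSep} carry the same off-by-one in the depth parameter, which is presumably what led you astray; but without invoking the bottom-fan-in branch, neither your argument nor a literal reading of the quoted lemma establishes the $\SRPHlevel{k}^O$ claim.)
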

\begin{proof}
    If $\{ \sipser \}^O \in \SRPHlevel{k}^O$ or $\{ \sipser \}^O \in \PRPHlevel{k}^O$ for a random oracle $O$ with probability $\alpha > 0$, by \Cref{lem:oneTM}, there exists a machine $M^O$ that establishes that $\{ \sipser \}^O \in \SRPHlevel{k}^O$ or $\{ \sipser \}^O \in \PRPHlevel{k}^O$ with probability $\beta > 0.99$.
    Then, \Cref{thm:CircuitLowerBoundsImplyNoMembershipInRPH}
    shows there must exist a Boolean circuit for all $n$ of depth $k+1$, bottom fan-in $\polylog(N)$ and size at most \( N^{\bigO(\polylog(N))} \) that computes $\sipser(\truthtable(O, n))$ for at least some $0.99 \cdot 2^N$ of all possible inputs.
    Note that as $N$ approaches infinity, $N^{-\Omega(1/(k+1))}/N$ approaches $0$.
    So, for any $\beta > 0.99$ and large enough $N$,  \Cref{thm:averageCaseDepthHierarchyTheorem} shows that no circuit of depth $k+1$, size $\quasipoly{N}$ and bottom fan-in $\polylog(N)$ can compute $\sipser$ on more than a $0.5+N^{-\Omega(1/(k+1))}$-fraction of inputs.
    As \( N^{\bigO(\polylog(N))} = \quasipoly{N}\) and the circuit has bottom fan-in $\polylog(N)$ per \Cref{rem:FaninAtmostPoly}, we reach a contradiction and prove that for a random oracle $O$, $\{ \sipser \}^O \not\in \SRPHlevel{k}^O$ and $(\{ \sipser \}^O \not\in \PRPHlevel{k}^O$.
\end{proof}

We showed that $\{ \sipser \}^O \in \SPHlevel{k}^{O}$, $\{ \sipser \}^O \not\in \SRPHlevel{k}^O$, and $\{ \sipser \}^O \not\in \PRPHlevel{k}^O$.
From this, it follows that $\SPHlevel{k}^{O}\not\subseteq\SRPHlevel{k-1}^O$ and $\SPHlevel{k}^{O}\not\subseteq\PRPHlevel{k-1}^O$ for a random oracle $O$, proving \Cref{thm:RPHlayers}.

This result leads us to the following corollary:
\begin{corollary}
    For a random oracle $O$, it holds that $\SRPHlevel{k}^O \subsetneq \SRPHlevel{k+1}^O$.
\end{corollary}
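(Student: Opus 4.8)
The plan is to derive this corollary by combining the unconditional chain inclusion $\SRPHlevel{k}^O \subseteq \SRPHlevel{k+1}^O$ with the strict separation witness produced by \Cref{thm:RPHlayers}. First I would record the easy direction: for \emph{every} oracle $O$ and every $k \geq 0$, $\SRPHlevel{k}^O \subseteq \SRPHlevel{k+1}^O$. Given a language $L$ on the $k$th level certified by a real oracle Turing machine $M^O$ and polynomial $q$ via $\exists u_1 \forall u_2 \cdots Q_k u_k : M^O(x,u_1,\ldots,u_k)=1$, pad with one additional innermost quantifier $Q_{k+1} u_{k+1} \in \R^{q(|x|)}$ and let a machine $M'^O$ simply ignore $u_{k+1}$ and run $M^O$. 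Since the truth value of $Q_{k+1} u_{k+1} : M'^O(x,u_1,\ldots,u_{k+1})=1$ does not depend on $u_{k+1}$, the padded sentence is equivalent to the original one, it uses $k+1$ alternating quantifiers, its leading quantifier is still $\exists$, and $M'^O$ still runs in polynomial time; hence $L \in \SRPHlevel{k+1}^O$. (The case $k=0$, $\SRPHlevel{0} = \PR$, is the same argument with a single added $\exists$.)

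For properness I would invoke two facts. First, $\SPHlevel{k+1} \subseteq \SRPHlevel{k+1}$ for every oracle: every Boolean witness is in particular a real witness, so one restricts each range $\R^{q(|x|)}$ to $\{0,1\}^{q(|x|)}$, which the machine can enforce by an explicit coordinate check before simulating the $\PH$-machine; this reduction never touches the oracle, so it relativizes, giving $\SPHlevel{k+1}^O \subseteq \SRPHlevel{k+1}^O$. Second, \Cref{thm:RPHlayers} supplies, for a random oracle $O$ with probability $1$, a language in $\SPHlevel{k+1}^O$ that is not in $\SRPHlevel{k}^O$. Chaining these, with probability $1$ there is a language in $\SRPHlevel{k+1}^O \setminus \SRPHlevel{k}^O$, which together with the inclusion from the previous paragraph yields $\SRPHlevel{k}^O \subsetneq \SRPHlevel{k+1}^O$ with probability $1$.

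I do not expect a genuine obstacle here: all the real work is carried by \Cref{thm:RPHlayers}, and this corollary is a packaging step. The only points that merit a moment's care are (i) checking that the dummy-quantifier padding preserves the \emph{identity} of the leading quantifier, so that the padded language lands in $\SRPHlevel{k+1}$ and not merely in $\DRPHlevel{k+1}$ or $\PRPHlevel{k+1}$, and (ii) noting that the embedding $\SPHlevel{k+1} \hookrightarrow \SRPHlevel{k+1}$ relativizes, which is immediate since the witness-restriction does not query $O$. Both are routine, so I would state the corollary's proof in a few lines.
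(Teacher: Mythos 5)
Your proposal is correct and follows exactly the paper's route: the paper also derives the corollary from the two inclusions $\SPHlevel{k+1}^O \subseteq \SRPHlevel{k+1}^O$ and $\SRPHlevel{k}^O \subseteq \SRPHlevel{k+1}^O$ combined with the witness language from \Cref{thm:RPHlayers}. You merely spell out the routine padding and witness-restriction arguments that the paper leaves implicit.
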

As $\SPHlevel{k+1}^O \subseteq \SRPHlevel{k+1}^O$ and $\SRPHlevel{k}^O \subseteq \SRPHlevel{k+1}^O$, the result immediately shows that $\SRPHlevel{k}^O \subsetneq \SRPHlevel{k+1}^O$.

All proofs follow analogously for $\{ \tilde{F}^{k+1}_N \}^O$ and $\Pi$-layers;
the complement of $\{ \sipser \}^O$ is reached by substituting all $\forall$ quantifiers with $\exists$ quantifiers and vice versa, and negating the input.
As such, we conclude the following corollaries:

\begin{corollary}
        For any $k\geq 0$,
        there exists an oracle $O$, such that  $\PPHlevel{k+1}^{O} \nsubseteq \PRPHlevel{k}^O$. 
\end{corollary}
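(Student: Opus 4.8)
The plan is to rerun the argument behind \Cref{thm:RPHlayers} (via \Cref{thm:LevelSep}) with every quantifier flipped, replacing the Sipser-like family $\{\sipser\}_{N}$ by its complement. Concretely, I would define the oracle property $\{\tilde{F}^{k+1}_N\}^O$ by taking the De Morgan dual of the formula defining $\{\sipser\}^O$: it outputs $1$ exactly when
\[\forall i_1\,\exists i_2\,\forall i_3\cdots : \neg\, O(i_1,i_2,i_3,\dots)\]
evaluates to true, with the index ranges as in the definition of $\{\sipser\}^O$. By construction $\{\tilde{F}^{k+1}_N\}^O$ is the complement of $\{\sipser\}^O$, and the underlying Boolean function $\tilde{F}^{k+1}_N$ is the negation of $\sipser$.

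First I would verify that $\tilde{F}^{k+1}_N$ still satisfies both conclusions of \Cref{thm:averageCaseDepthHierarchyTheorem}. For part (1), negating a depth-$(k+1)$ \AC\ circuit and pushing the negation down to the leaves via De Morgan produces another depth-$(k+1)$ \AC\ circuit, so $\tilde{F}^{k+1}_N$ is again depth-$(k+1)$ \AC-computable. For part (2), the lower bound is symmetric under negation: a depth-$k$ circuit $C$ of $\quasipoly{N}$ size and bottom fan-in $\polylog(N)$ agreeing with $\tilde{F}^{k+1}_N$ on more than a $0.9$-fraction of inputs negates (via De Morgan, which leaves depth, size and bottom fan-in unchanged) into a circuit of the same parameters agreeing with $\sipser$ on more than a $0.9$-fraction of inputs, contradicting \Cref{thm:averageCaseDepthHierarchyTheorem}. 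Next I would place $\{\tilde{F}^{k+1}_N\}^O$ in $\PPHlevel{k+1}^O$: the dual quantifier prefix $\forall i_1\,\exists i_2\cdots$ over the $\poly(n)$-size index ranges is exactly a $\PPHlevel{k+1}$ quantification, and the predicate $\neg\, O(i_1,i_2,\dots)$ is evaluated by a single oracle query at the bottom, just as for $\{\sipser\}^O$ in the paragraph preceding \Cref{thm:LevelSep}.

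Finally, for a random oracle $O$ (hence in particular for some fixed oracle), I would show $\{\tilde{F}^{k+1}_N\}^O \notin \SRPHlevel{k}^O \cup \PRPHlevel{k}^O$ with probability $1$, verbatim as in the proof of \Cref{thm:LevelSep}: membership would, by \Cref{thm:CircuitLowerBoundsImplyNoMembershipInRPH} together with \Cref{rem:FaninAtmostPoly}, yield a bounded-depth circuit of size $N^{\bigO(\polylog N)} = \quasipoly{N}$ with bottom fan-in $\polylog(N)$ computing $\tilde{F}^{k+1}_N$, contradicting the transferred part (2). In particular $\{\tilde{F}^{k+1}_N\}^O \in \PPHlevel{k+1}^O \setminus \PRPHlevel{k}^O$, which yields $\PPHlevel{k+1}^O \nsubseteq \PRPHlevel{k}^O$. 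The only step that is not a pure transcription of the $\Sigma$-case is the closure of \Cref{thm:averageCaseDepthHierarchyTheorem} under complementation, i.e.\ checking that $\tilde{F}^{k+1}_N$ inherits both the \AC\ upper bound and the average-case depth lower bound; this is the step I would expect to warrant an explicit (though short) argument, while everything else follows by interchanging $\exists$ and $\forall$ throughout \Cref{sec:SepLevelsRPH}.
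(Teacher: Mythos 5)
Your proposal is correct and matches the paper's approach exactly: the paper dispatches this corollary with the one-line remark that all proofs carry over to $\{\tilde{F}^{k+1}_N\}^O$ and the $\Pi$-levels by swapping $\exists$ and $\forall$ and negating, which is precisely the De Morgan dualization you spell out. Your explicit check that the Average-Case Depth Hierarchy Theorem is closed under complementation is the one detail the paper leaves implicit, and it is verified correctly.
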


\begin{corollary}
    There exists an oracle $O$ such that $\PRPHlevel{k}^O \subsetneq \PRPHlevel{k+1}^O$.
\end{corollary}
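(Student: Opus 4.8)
The plan is to deduce $\PRPHlevel{k}^O \subsetneq \PRPHlevel{k+1}^O$ from the previous corollary, $\PPHlevel{k+1}^O \nsubseteq \PRPHlevel{k}^O$, together with two routine inclusions, exactly mirroring the way $\SRPHlevel{k}^O \subsetneq \SRPHlevel{k+1}^O$ was obtained from $\SPHlevel{k+1}^O \nsubseteq \SRPHlevel{k}^O$. I would fix the oracle $O$ furnished by the previous corollary for the given $k$ and then check that this same $O$ witnesses the proper inclusion claimed here.

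First I would record $\PRPHlevel{k}^O \subseteq \PRPHlevel{k+1}^O$. Given a real oracle Turing machine establishing membership in $\PRPHlevel{k}^O$ via $\forall u_1 \exists u_2 \cdots Q_k u_k : M^O(x,u_1,\ldots,u_k) = 1$, I would append a dummy $(k+1)$st quantifier $u_{k+1} \in \R^{q(|x|)}$ of the type that continues the alternation and let the machine simply ignore $u_{k+1}$. Since quantifying over a variable on which the predicate does not depend does not change the truth value of a sentence, the resulting $(k+1)$-quantifier sentence beginning with $\forall$ decides the same language, so it lies in $\PRPHlevel{k+1}^O$. Second, I would record the $\Pi$-analogue of the elementary fact $\SPHlevel{k} \subseteq \SRPHlevel{k}$ used earlier, namely $\PPHlevel{k+1}^O \subseteq \PRPHlevel{k+1}^O$: from a $\PPHlevel{k+1}^O$ machine with Boolean quantifier blocks I would build a real machine that, on real witness vectors, first tests whether every coordinate lies in $\{0,1\}$, rejecting at once if some $\exists$-block vector is non-binary and accepting at once if some $\forall$-block vector is non-binary, and otherwise simulates the Boolean machine. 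Under this gadget neither player gains from a non-binary choice in its own block, so the real sentence over $\R^{q(|x|)}$ and the original Boolean sentence have the same truth value; hence the language is in $\PRPHlevel{k+1}^O$. Both constructions are oblivious to the oracle, so they hold for our fixed $O$.

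With these in hand the conclusion is immediate: the previous corollary gives a language $L \in \PPHlevel{k+1}^O$ with $L \notin \PRPHlevel{k}^O$; by the second inclusion $L \in \PRPHlevel{k+1}^O$, so $\PRPHlevel{k+1}^O \neq \PRPHlevel{k}^O$, and combined with the first inclusion $\PRPHlevel{k}^O \subseteq \PRPHlevel{k+1}^O$ this gives $\PRPHlevel{k}^O \subsetneq \PRPHlevel{k+1}^O$. Alternatively, one may use the explicit separating language directly: the complement $\{\tilde{F}^{k+1}_N\}^O$ of $\{\sipser\}^O$, obtained (as noted in the excerpt) by swapping $\forall \leftrightarrow \exists$ and negating the input bits, lies in $\PPHlevel{k+1}^O \subseteq \PRPHlevel{k+1}^O$, while the $\Pi$-version of \Cref{thm:LevelSep} shows it is not in $\PRPHlevel{k}^O$.

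There is no substantial new mathematics to do here — the content lives in \Cref{thm:RPHandACanalogue}, \Cref{thm:CircuitLowerBoundsImplyNoMembershipInRPH}, and the $\Pi$-analogue of \Cref{thm:LevelSep}. The only points to be careful about are that a single oracle must serve all three parts simultaneously (which it does, since the two inclusions relativize uniformly), and that the accept/reject convention in the $\PPHlevel{k+1}^O \subseteq \PRPHlevel{k+1}^O$ gadget is set so that a non-binary witness is harmless for whichever quantifier introduces it; I expect this bookkeeping — not any genuine difficulty — to be the most error-prone step.
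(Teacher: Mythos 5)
Your proposal is correct and matches the paper's (very terse) argument: the paper likewise derives this from $\PPHlevel{k+1}^{O} \nsubseteq \PRPHlevel{k}^O$ together with the inclusions $\PPHlevel{k+1}^O \subseteq \PRPHlevel{k+1}^O$ and $\PRPHlevel{k}^O \subseteq \PRPHlevel{k+1}^O$, noting that all proofs carry over to the $\Pi$-layers via the complemented Sipser function $\{\tilde{F}^{k+1}_N\}^O$. Your explicit accept/reject gadget for handling non-binary real witnesses just fills in a detail the paper leaves implicit.
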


\subsection{Separating $\RPH$ and $\BQP$}
\label{sec:BQPvsRPH}
Raz and Tal in \cite{RT22} prove the existence of an oracle that separates $\PH$ and $\BQP$. They first present two distributions that are easy to distinguish with a $\BQP$ machine but is hard to distinguish using a classical circuit of quasi-polynomial size and constant depth. Then they use these distributions to construct \emph{distributional oracles} as done in \cite[Section~2.6]{FSUV13} (which in turn is inspired from \cite{Aaronson2010}) to give their oracle separation result. We observe that their result can be straightforwardly adapted to prove the existence of an oracle that separates $\RPH$ and $\BQP$. We state and discuss the results here for completeness. More formally, we are able to prove the following statement.

\BQPnotinRPH*

We will use the following definitions and results from Raz and Tal's \cite{RT22} to prove \Cref{thm:BQPnotinRPH}.
\begin{definition}[Algorithmic advantage]
\label{def:AlgorithmicAdvantage}
Let $\mathcal{D},\mathcal{D'}$ be two probability distributions over a finite set $X$. 
We say an algorithm $\algo{A}$ \textit{distinguishes between $\mathcal{D},\mathcal{D'}$ with advantage $\varepsilon$} if
\begin{equation*}
    \varepsilon=|\underset{x \sim \mathcal{D}}{\probability}[\algo{A} \text{ accepts }x] - \underset{x' \sim \mathcal{D'}}{\probability}[\algo{A} \text{ accepts } x']|.
\end{equation*}
\end{definition}

Note that Raz and Tal used the sets $\{-1,+1\}$ instead of $\{0,1\}$. We adopt their notation here, as it makes it easier to see how we use their results. Additionally, we denote by $\mathcal{U}_\ell$ the uniform distribution on the set $\{-1,+1\}^{\ell}$, i.e., we randomly draw a vector of length $\ell$, where each entry is equally likely to be $-1$ or $+1$. With that, Raz and Tal in \cite{RT22} proved the following.

\begin{theorem}[{\cite[Theorem~1.2]{RT22}}]
\label{thm:MainRazTalTheorem}
There exists an explicit and efficiently samplable distribution $\mathcal{D}_1$ over inputs in $\{\pm 1 \}^{N_1}$ for $N_1$ sufficiently large, such that:
\begin{enumerate}
    \item there exists a quantum algorithm that makes $\polylog(N_1)$ queries to the input, runs in time $\bigO(\polylog(N_1))$ time, and distinguishes between $\mathcal{D}_1$ and $\mathcal{U}_{N_1}$ with probability $1-2^{-\polylog(N_1)}$.
    \item No Boolean circuit of $\quasipoly{N_1}$ size and constant depth distinguishes between $\mathcal{D}_1$ and $\mathcal{U}_{N_1}$ with advantage better than $\polylog(N_1)/\sqrt{N_1}$.
\end{enumerate}
\end{theorem}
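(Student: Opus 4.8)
\textbf{Proof plan for \Cref{thm:BQPnotinRPH}: \(\BQP^O \nsubseteq \RPH^O\) for some oracle \(O\).}

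The plan is to follow the standard ``distributional oracle'' template of Aaronson~\cite{Aaronson2010} and Fefferman--Shaltiel--Umans--Viola~\cite{FSUV13}, using \Cref{thm:MainRazTalTheorem} as the combinatorial engine and \Cref{thm:CircuitLowerBoundsImplyNoMembershipInRPH} as the new ingredient that replaces the classical \PH circuit bound with an \RPH one. First I would fix, for each input length, the distribution \(\mathcal{D}_1\) on \(\{\pm 1\}^{N_1}\) from \Cref{thm:MainRazTalTheorem}, and define the candidate language \(L\) via an oracle property in the style of \Cref{def:TruthTableOfO}: for each \(n\), the truth table \(\truthtable(O,n) \in \{0,1\}^{2^n}\) encodes (after the obvious \(\{0,1\} \leftrightarrow \{\pm 1\}\) relabeling) a string of length \(N_1 = 2^n\), and we put \(1^n \in L\) iff that string ``looks like'' a sample from \(\mathcal{D}_1\) rather than from \(\mathcal{U}_{N_1}\). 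To make this a decision problem with a gap, one builds the oracle \(O\) itself probabilistically: with probability \(1/2\) the length-\(n\) block of \(O\) is drawn from \(\mathcal{D}_1\) (``yes'' instances), and with probability \(1/2\) from \(\mathcal{U}_{N_1}\) (``no'' instances), independently across \(n\). Then \(1^n \in L\) is defined to be the bit recording which case occurred; the point of a distributional oracle is precisely that this bit is not a function of \(\truthtable(O,n)\) alone, but a \BQP\ machine can guess it with overwhelming advantage while a shallow circuit cannot.

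The two halves of the argument are then: (i) \(L \in \BQP^O\) relative to this oracle, and (ii) \(L \notin \RPH^O\) relative to this oracle. For (i), Property~1 of \Cref{thm:MainRazTalTheorem} gives a \(\polylog(N_1)\)-query, \(\bigO(\polylog(N_1))\)-time quantum algorithm distinguishing \(\mathcal{D}_1\) from \(\mathcal{U}_{N_1}\) with error \(2^{-\polylog(N_1)}\); running it with the length-\(n\) block of \(O\) as input decides \(1^n \in L\) with the same error, and since queries and time are \(\polylog(2^n) = \poly(n)\), this is a genuine \(\BQP^O\) algorithm. For (ii), suppose \(L \in \RPH^O\); by \Cref{thm:CircuitLowerBoundsImplyNoMembershipInRPH} there is a family \((C'_{2^n})\) of constant-depth circuits of size \((2^n)^{\bigO(\polylog(2^n))} = \quasipoly{N_1}\) computing, on input \(\truthtable(O,n)\), the predicate \(p\) that equals \(1^n \in L\). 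But such a circuit, fed a random sample, would distinguish \(\mathcal{D}_1\) from \(\mathcal{U}_{N_1}\) with advantage close to \(1\) (certainly better than \(\polylog(N_1)/\sqrt{N_1}\)), contradicting Property~2 of \Cref{thm:MainRazTalTheorem}. This contradiction is for a \emph{fixed} circuit against the \emph{random} oracle, so the usual averaging/union-bound argument (as in \cite{FSUV13,Aaronson2010}) is needed to conclude existence of a single oracle \(O\) that simultaneously defeats the (at most countably many, one per \RPH-machine) circuit families while preserving the \BQP\ upper bound on a subsequence of input lengths --- this is the place where one restricts attention to a sparse sequence of lengths \(n_1 < n_2 < \cdots\) so that the events for different lengths are controllable and Borel--Cantelli applies.

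The main obstacle is the mismatch between the \emph{worst-case} nature of the \RPH-to-circuit translation in \Cref{thm:CircuitLowerBoundsImplyNoMembershipInRPH} and the \emph{average-case / advantage} nature of Property~2 of \Cref{thm:MainRazTalTheorem}. \Cref{thm:CircuitLowerBoundsImplyNoMembershipInRPH} as stated produces a circuit that computes \(p\) exactly on all inputs; but the predicate \(p\) here is not a fixed Boolean function of \(\truthtable(O,n)\) --- it depends on the hidden ``which distribution'' bit. The fix, exactly as in the distributional-oracle framework, is to note that a \(\BQP^O\)-machine (or an \(\RPH^O\)-machine) that decides \(L\) with bounded error must in particular, when \(O\) is drawn from the mixture, output the correct hidden bit with probability noticeably above \(1/2\) averaged over \(O\); tracing this through the circuit translation shows the resulting \(\quasipoly{N_1}\)-size constant-depth circuit distinguishes \(\mathcal{D}_1\) from \(\mathcal{U}_{N_1}\) with advantage bounded below by a constant (or at least by \(\omega(\polylog(N_1)/\sqrt{N_1})\)), which is what contradicts Property~2. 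Making this averaging rigorous --- carefully accounting for the \(\RPH^O\)-machine's own randomness-free but oracle-dependent behavior, and for the extra ``\(1\) query to \(O\)'' layer introduced in the proof of \Cref{thm:CircuitLowerBoundsImplyNoMembershipInRPH} --- is the technical heart of the proof; the rest is bookkeeping that parallels \cite{RT22,FSUV13} essentially verbatim, which is why we (like them) are content to state the result and indicate the adaptation.
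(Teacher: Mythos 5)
There is a genuine mismatch here: the statement you were asked to prove is \Cref{thm:MainRazTalTheorem} itself --- the existence of the distribution $\mathcal{D}_1$ over $\{\pm 1\}^{N_1}$ that a $\polylog(N_1)$-query quantum algorithm distinguishes from $\mathcal{U}_{N_1}$ with error $2^{-\polylog(N_1)}$, while no quasipolynomial-size constant-depth circuit achieves advantage better than $\polylog(N_1)/\sqrt{N_1}$ --- but your text never attempts this. You explicitly treat \Cref{thm:MainRazTalTheorem} as a black box (``the combinatorial engine'') and instead outline a derivation of \Cref{thm:BQPnotinRPH} from it. A proof of the actual statement would have to construct the distribution (in \cite{RT22} this is built from a Forrelation-type distribution, obtained by discretizing a correlated multivariate Gaussian, with $\mathcal{D}_1$ a concatenation of $m$ independent copies), exhibit the quantum distinguisher (a Hadamard/Forrelation-estimation circuit with $\polylog(N_1)$ queries), and, most substantially, prove the circuit lower bound via the Fourier-analytic argument bounding the level-$k$ Fourier mass of bounded-depth circuits against this distribution. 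None of these ingredients appears in your proposal, so as a proof of the stated theorem it has no content; note also that the paper itself does not prove this theorem but imports it verbatim from \cite[Theorem~1.2]{RT22}, so the honest options are either to cite it as the paper does or to reproduce the Raz--Tal construction and Fourier bounds.

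As a secondary remark: read as a sketch of \Cref{thm:BQPnotinRPH}, your outline does roughly parallel the paper's actual proof of that (different) theorem --- a distributional oracle whose length-$n$ block is drawn from $\mathcal{D}_1$ or $\mathcal{U}_{N_1}$ according to a random unary language $L$, the quantum algorithm $Q_1$ giving $L \in \BQP^O$, and \Cref{thm:CircuitLowerBoundsImplyNoMembershipInRPH} plus the advantage bound (\Cref{thm:SizeVsAdvantage} / \Cref{thm:RPHcircuitSizeAndNoGoTheorem}) ruling out any fixed $\RPH^O$ machine, followed by independence across lengths and a union bound over the countably many $\RPH$ machines. Your worry about a ``worst-case vs.\ average-case mismatch'' is handled in the paper exactly this way: one never needs the circuit to compute a well-defined predicate of $\truthtable(O,n)$; one only compares $\probability_{L,O}[A^O(1^n)=L(1^n)]$ against the distinguishing-advantage bound, conditions on the $\BQP$ machine succeeding, and takes the product over independent lengths. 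But this is the proof of \Cref{thm:BQPnotinRPH}, not of \Cref{thm:MainRazTalTheorem}, which is what was asked.
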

The proof of \Cref{thm:MainRazTalTheorem} (and most part of our result) is based on the following claims using the parameters chosen as follows. For all $n \in \N$ and for all $\varepsilon \in [0,1]$, let $N=2^n$, $\delta \geq \frac{1}{2^{\polylog(N)}}$ (we fix $\delta=\frac{1}{n^2}$), $m=32 \cdot \frac{\ln(1/{\delta})}{\varepsilon^2}$ and $N_1=2Nm$. Furthermore, let $\mathcal{D}_1$ be the distribution satisfying conditions stated in \Cref{thm:MainRazTalTheorem}. This distribution $\mathcal{D}_1$ is generated by taking a concatenation of $m$ many independent random variables with
distribution $\mathcal{D}$ which is the distribution over inputs in $\{\pm1\}^{2N}$ constructed in \cite[Section~4]{RT22}. Then,

\begin{claim}[{\cite[Claim~8.1]{RT22}}]
\label{thm:QuantumAlgoQ1}
 There exists a quantum algorithm $Q_1$ making $\bigO (m)$ queries and running in $\bigO( m \cdot \log N)$ time, such that, $\underset{z \sim \mathcal{D}_1}{\probability}[\algo{Q_1} \text{ accepts }z] \geq 1-\delta$ and $\underset{z \sim \mathcal{U}_{N_1}}{\probability}[\algo{Q_1} \text{ accepts }z] \leq \delta$.
\end{claim}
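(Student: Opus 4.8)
The plan is to obtain the claim from the single-block quantum distinguisher underlying the Raz--Tal construction together with a standard probability-amplification argument; this is essentially \cite[Claim~8.1]{RT22}, which we re-use, so the work lies in checking that the construction transfers verbatim. Recall that $\mathcal{D}_1$ is by definition the product distribution $\mathcal{D}^{\otimes m}$: a length-$N_1 = 2Nm$ string formed by concatenating $m$ independent samples from the Forrelation-type distribution $\mathcal{D}$ of \cite[Section~4]{RT22}, while $\mathcal{U}_{N_1}$ restricted to a single block is $\mathcal{U}_{2N}$. The key ingredient is the quantum algorithm hidden inside \Cref{thm:MainRazTalTheorem} at the level of one block: it makes $\bigO(1)$ queries, runs in $\bigO(\log N)$ time (writing an index of $\bigO(\log N)$ bits and performing a quantum Fourier transform on $\bigO(\log N)$-qubit registers), and outputs an accept/reject bit whose acceptance probability is at least some $p_1$ when the block is drawn from $\mathcal{D}$ and at most some $p_0$ when it is drawn from $\mathcal{U}_{2N}$, where $p_0 < p_1$ are known and $p_1 - p_0 = \Omega(\varepsilon)$.

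Given this, the algorithm $Q_1$ runs the base distinguisher independently on each of the $m$ blocks, collects the output bits $b_1,\dots,b_m$, and accepts iff their empirical mean exceeds the midpoint $(p_0+p_1)/2$. The analysis is a Chernoff/Hoeffding bound over the joint randomness of $z$ (the $m$ i.i.d.\ block draws) and the internal quantum measurements: conditioned on $z \sim \mathcal{D}_1$ the bits $b_i$ are independent with mean at least $p_1$, so the probability that the empirical mean falls to or below the threshold is at most $\exp(-\Omega((p_1-p_0)^2 m)) = \exp(-\Omega(\varepsilon^2 m))$, and symmetrically for $z \sim \mathcal{U}_{N_1}$. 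Plugging in $m = 32\ln(1/\delta)/\varepsilon^2$ — which is exactly why that constant is chosen — makes both error probabilities at most $\delta$, which are the two displayed inequalities. For the resources: $\bigO(1)$ queries per block over $m$ blocks gives $\bigO(m)$ queries, and $\bigO(\log N)$ time per block, run sequentially with reused workspace, gives $\bigO(m \log N)$ time.

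The main obstacle I expect is the ``known two-sided gap'' point: \Cref{thm:MainRazTalTheorem} as stated only asserts a distinguishing \emph{advantage}, whereas threshold amplification needs the base algorithm's acceptance probability to lie on opposite, a priori known sides of a fixed value under $\mathcal{D}$ versus $\mathcal{U}_{2N}$. Extracting this clean per-block statement — either by quoting the relevant internals of \cite{RT22} directly or by wrapping the base Forrelation estimator in a small amount of pre-processing so that its acceptance probability provably concentrates near values $p_1$ and $p_0$ separated by $\Theta(\varepsilon)$ — is where the care is needed; once that is in hand the Chernoff step and the gate counting are routine.
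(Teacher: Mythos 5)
The paper does not prove this claim at all: it is imported verbatim as \cite[Claim~8.1]{RT22}, so there is no internal proof to compare against. Your reconstruction (per-block Forrelation-type estimator with acceptance probabilities $1/2$ versus $1/2+\Omega(\varepsilon)$, followed by threshold amplification over the $m$ independent blocks and a Chernoff bound, with $m=32\ln(1/\delta)/\varepsilon^2$ chosen to drive both error probabilities below $\delta$) is a faithful account of how Raz and Tal actually prove it, and your flagged caveat about needing a two-sided gap rather than a mere advantage is exactly the right place to be careful. In short: correct, and the same approach as the cited source.
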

Interpreting the statement of \Cref{thm:QuantumAlgoQ1} in terms of the \emph{algorithmic advantage} tells us that there exists a quantum algorithm $\algo{Q}_1$ that makes $\bigO(m)$ queries and runs in $\bigO( m \cdot \log N)$ time and distinguishes distributions $\mathcal{D}_1$ and $\mathcal{U}_{N_1}$ with advantage at least $1-2\delta$, i.e.,
\begin{equation}
\label{eq:QuantumAlgoDistAdvantage}
\underset{z \sim \mathcal{D}_1}{\probability}[\algo{Q_1} \text{ accepts }z] - \underset{z \sim \mathcal{U}_{N_1}}{\probability}[\algo{Q_1} \text{ accepts }z]\geq 1-2\delta.   
\end{equation}

In the following claim, we use $\underset{x \sim \mathcal{D}}{\mathbb{E}}$ to denote the expected value by taking $x$ according to the distribution $\mathcal{D}$.

\begin{claim}[{\cite[Claim~8.2]{RT22}}] 
\label{thm:SizeVsAdvantage}
Let $C$ be any circuit of size $s$ and depth $d$. Then,
\begin{equation*}
    |\underset{x \sim \mathcal{D}_1}{\mathbb{E}}[C(x)] - \underset{x' \sim \mathcal{U}_{N_1}}{\mathbb{E}}[C(x')]| \leq m \cdot 32 \varepsilon \cdot (c \cdot \log s)^{2(d-1)} \cdot N^{-\frac{1}{2}}.
\end{equation*}
\end{claim}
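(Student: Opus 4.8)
The plan is to obtain \Cref{thm:SizeVsAdvantage} by combining two ingredients: a \emph{single-block} distinguishing bound --- the substantive part, which I would take over essentially unchanged from Raz and Tal --- and a routine \emph{hybrid argument} that transfers it to the $m$-fold product distribution $\mathcal{D}_1$ at the cost of a factor $m$. Recall (cf.\ \Cref{thm:MainRazTalTheorem} and its footnote) that $\mathcal{D}_1$ is the concatenation of $m$ independent copies of the base distribution $\mathcal{D}$ constructed in \cite[Section~4]{RT22}, which lives on $\{\pm1\}^{2N}$; hence $\mathcal{D}_1=\mathcal{D}^{\otimes m}$ on $\{\pm1\}^{N_1}$ with $N_1=2Nm$, and likewise $\mathcal{U}_{N_1}=\mathcal{U}_{2N}^{\otimes m}$.

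\emph{Single-block bound.} The first step is the single-copy statement: every circuit $C'$ on $\{\pm1\}^{2N}$ of size at most $s$ and depth at most $d$ satisfies $\bigl|\mathbb{E}_{w\sim\mathcal{D}}[C'(w)]-\mathbb{E}_{w\sim\mathcal{U}_{2N}}[C'(w)]\bigr|\le B$ with $B:=32\,\varepsilon\,(c\log s)^{2(d-1)}\,N^{-1/2}$. This is exactly the single-copy form of Raz and Tal's main theorem, which I would invoke as a black box; for self-containedness I would only recall its skeleton. Expanding both expectations in the Fourier basis over $\{\pm1\}^{2N}$ and using that $\mathbb{E}_{\mathcal{U}_{2N}}[\chi_S]=0$ for $S\neq\emptyset$, the left-hand side equals $\bigl|\sum_{\emptyset\neq S}\widehat{C'}(S)\,\widehat{\mathcal{D}}(S)\bigr|$, which one bounds level by level via $\bigl|\sum_{|S|=k}\widehat{C'}(S)\widehat{\mathcal{D}}(S)\bigr|\le\bigl(\sum_{|S|=k}|\widehat{C'}(S)|\bigr)\cdot\max_{|S|=k}|\widehat{\mathcal{D}}(S)|$. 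The first factor is controlled by Tal's bound on the $L_1$-Fourier weight of \AC\ circuits, and the second by Raz and Tal's estimate that the Fourier coefficients of $\mathcal{D}$ decay geometrically in the level (with an $\varepsilon/\sqrt{N}$-scale base, vanishing off the ``balanced'' sets coming from the forrelation construction); this geometric decay localises the sum near its lowest surviving level, and collecting constants yields $B$ with precisely the leading $32$ and the exponent $2(d-1)$ obtained in \cite{RT22}.

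\emph{Hybrid argument.} For $0\le j\le m$, let $H_j$ be the distribution on $\{\pm1\}^{N_1}$ whose first $j$ length-$2N$ blocks are i.i.d.\ $\mathcal{U}_{2N}$ and whose remaining $m-j$ blocks are i.i.d.\ $\mathcal{D}$, so $H_0=\mathcal{D}_1$ and $H_m=\mathcal{U}_{N_1}$. By the triangle inequality,
\[ \bigl|\mathbb{E}_{\mathcal{D}_1}[C]-\mathbb{E}_{\mathcal{U}_{N_1}}[C]\bigr|\;\le\;\sum_{j=1}^{m}\bigl|\mathbb{E}_{H_{j-1}}[C]-\mathbb{E}_{H_{j}}[C]\bigr|. \]
For a fixed $j$, the distributions $H_{j-1}$ and $H_j$ differ only in block $j$ (which is $\mathcal{D}$ under $H_{j-1}$ and $\mathcal{U}_{2N}$ under $H_j$) and agree on, and are independent of, the other $m-1$ blocks. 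Conditioning on an assignment $\rho$ to those blocks and letting $C|_\rho$ be the circuit on the $2N$ variables of block $j$ obtained by hard-wiring $\rho$ --- which again has size at most $s$ and depth at most $d$, since hard-wiring inputs never increases either --- we get $\mathbb{E}_{H_{j-1}}[C]-\mathbb{E}_{H_{j}}[C]=\mathbb{E}_{\rho}\bigl[\mathbb{E}_{w\sim\mathcal{D}}[C|_\rho(w)]-\mathbb{E}_{w\sim\mathcal{U}_{2N}}[C|_\rho(w)]\bigr]$, and applying the single-block bound to $C|_\rho$ for every $\rho$ shows this has absolute value at most $B$. Summing over $j$ gives $\bigl|\mathbb{E}_{\mathcal{D}_1}[C]-\mathbb{E}_{\mathcal{U}_{N_1}}[C]\bigr|\le mB=m\cdot32\,\varepsilon\,(c\log s)^{2(d-1)}\,N^{-1/2}$, which is \Cref{thm:SizeVsAdvantage}.

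\emph{Main obstacle.} The only genuine difficulty is the single-block bound, and it is entirely inherited from \cite{RT22}: it rests on their delicate Fourier analysis of the tailor-made distribution $\mathcal{D}$ together with Tal's sharp control of the Fourier spectrum of constant-depth circuits. Everything else --- the Fourier expansion, the level-by-level Cauchy--Schwarz step, and the hybrid reduction --- is elementary and involves no interaction with oracles or real computation, so it transfers to our setting verbatim; accordingly I would present the single-block bound as a cited black box and write out only the hybrid reduction in full.
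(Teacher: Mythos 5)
The paper offers no proof of this statement at all: it is imported verbatim as \cite[Claim~8.2]{RT22}, so there is nothing internal to compare against. Your reconstruction is correct and is, in substance, exactly how Raz and Tal themselves obtain their Claim~8.2: the single-copy distinguishing bound $\bigl|\mathbb{E}_{\mathcal{D}}[C']-\mathbb{E}_{\mathcal{U}_{2N}}[C']\bigr|\le 32\,\varepsilon\,(c\log s)^{2(d-1)}N^{-1/2}$ is their main theorem, and the factor $m$ comes from precisely the hybrid argument you write out (restrictions of $C$ obtained by hard-wiring the other blocks have size at most $s$ and depth at most $d$, so the single-block bound applies to each hybrid step). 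Your hybrid chain $H_0=\mathcal{D}_1$, $H_m=\mathcal{U}_{N_1}$ and the conditioning step are sound. The only inaccuracy is in your parenthetical sketch of the black box itself: Raz and Tal do not bound the advantage by a direct level-by-level pairing of the Fourier coefficients of $C'$ against those of $\mathcal{D}$ with geometric decay; their argument proceeds by a random-walk/interpolation decomposition of a sample from $\mathcal{D}$ into many small steps, bounding each step's contribution by the level-two $L_1$ Fourier weight of restrictions of the circuit, which Tal's theorem bounds by $(c\log s)^{2(d-1)}$ --- that is where the exponent $2(d-1)$ really comes from. Since you invoke that part strictly as a cited black box, this mischaracterization does not affect the validity of your proposal, but the sketch should either be corrected or omitted.
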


Using \Cref{thm:SizeVsAdvantage} along with the value of $s=N_1^{\bigO(\polylog(N_1))}$ obtained from \Cref{sub:InterpretationCircuitUpperBounds} we get the following corollary.

\begin{corollary}
\label{thm:RPHcircuitSizeAndNoGoTheorem}
Let $C$ be any circuit of size $s=N_1^{\bigO(\polylog(N_1))}$ and depth $d$. Then,
\begin{align*}
\begin{split}
    |\underset{x \sim \mathcal{D}_1}{E}[C(x)] - \underset{x' \sim \mathcal{U}_{N_1}}{E}[C(x')]| & \leq m \cdot 32 \varepsilon \cdot (c \cdot \log s)^{2(d-1)} \cdot N^{-\frac{1}{2}} \\
   & \leq m \cdot 32 \varepsilon \cdot (c \cdot \polylog(N_1))^{2(d-1)} \cdot N^{-\frac{1}{2}} \\
   & \leq 2m^2 \cdot 32 \varepsilon \cdot (c \cdot \polylog(N_1))^{2(d-1)} \cdot N_{1}^{-\frac{1}{2}}
\end{split}
\end{align*} 
\end{corollary}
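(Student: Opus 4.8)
The plan is to obtain this corollary as a direct specialization of \Cref{thm:SizeVsAdvantage} (Claim~8.2 of Raz and Tal) by substituting the circuit size $s = N_1^{\bigO(\polylog(N_1))}$ that comes out of \Cref{sub:InterpretationCircuitUpperBounds}, and then performing two elementary rewrites: replacing $\log s$ by a polylogarithmic quantity in $N_1$, and re-expressing the factor $N^{-1/2}$ in terms of $N_1$ via the relation $N_1 = 2Nm$. All remaining parameters ($\varepsilon$, $m$, $\delta = 1/n^2$, $N = 2^n$) are carried along unchanged, so no new estimate is needed beyond what is already available; this is a bookkeeping argument with the fixed parameter choices, not a new bound.

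For the first displayed inequality I would simply instantiate \Cref{thm:SizeVsAdvantage} with $d$ arbitrary and $s = N_1^{\bigO(\polylog(N_1))}$; this is exactly the circuit size produced by \Cref{thm:CircuitLowerBoundsImplyNoMembershipInRPH} when an $\RPH^O$ computation is translated into a bounded-depth circuit, so the substitution is legitimate. For the second inequality I would observe that $s = N_1^{\bigO(\polylog(N_1))}$ forces $\log s = \bigO(\polylog(N_1)) \cdot \log N_1 = \bigO(\polylog(N_1))$, since the product of a polylogarithmic function of $N_1$ with $\log N_1$ is again polylogarithmic in $N_1$. Consequently $(c \cdot \log s)^{2(d-1)} \le (c \cdot \polylog(N_1))^{2(d-1)}$ after renaming the polylogarithmic factor and absorbing constants, which yields the middle line verbatim.

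For the third inequality I would use $N_1 = 2Nm$, hence $N = N_1/(2m)$ and $N^{-1/2} = \sqrt{2m} \cdot N_1^{-1/2}$. Since $m \ge 1$, we have $\sqrt{2m} \le 2m$, so $m \cdot N^{-1/2} = m\sqrt{2m} \cdot N_1^{-1/2} \le 2m^2 \cdot N_1^{-1/2}$; substituting this into the middle line gives the final line. I do not expect a genuine obstacle: the argument is a parameter substitution, and the only thing to keep track of is that every polylog-in-$N_1$ factor and every absolute constant introduced in the two rewrites stays within the form quantified in the statement, which holds because we add only one extra factor of $m$ and one factor of $2$ over the course of the computation.
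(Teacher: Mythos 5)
Your proposal is correct and follows essentially the same route as the paper, which presents this corollary as a direct substitution of the circuit size $s=N_1^{\bigO(\polylog(N_1))}$ from \Cref{sub:InterpretationCircuitUpperBounds} into \Cref{thm:SizeVsAdvantage}, followed by the observations that $\log s$ is polylogarithmic in $N_1$ and that $N^{-1/2}=\sqrt{2m}\cdot N_1^{-1/2}\leq 2m\cdot N_1^{-1/2}$ via $N_1=2Nm$. Your bookkeeping (including the estimate $m\sqrt{2m}\leq 2m^2$ for $m\geq 1$) is exactly the intended argument.
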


Similar to the case with Raz and Tal, we also fix $\varepsilon = \frac{1}{24 \cdot \ln{N}}$ so that any $d$-depth circuit of size $s$ has diminishing advantage of distinguishing distributions $\mathcal{D}_1$ and $\mathcal{U}_{N_1}$ as seen in \Cref{thm:RPHcircuitSizeAndNoGoTheorem}. Rewriting \Cref{thm:RPHcircuitSizeAndNoGoTheorem} in terms of probabilities and substituting the values choosen for $m, \varepsilon$ gives us 
    \begin{equation}
    \label{eq:ClassicalCircDistAdvantage}
        |\probability_{x \sim \mathcal{D}_1} [C(x)=1] - \probability_{x \sim \mathcal{U}_{N_1}} [C(x)=1]| \leq \frac{\polylog(N_1)}{\sqrt{N_1}}.
    \end{equation}

Having fixed all the required parameters, we now proceed to prove \Cref{thm:BQPnotinRPH}.

\begin{proof}[Proof of \Cref{thm:BQPnotinRPH}]
Recall that, for all $n \in \mathbb{N}$ we defined $N=2^n$, $\delta = \frac{1}{n^2}$, $m=32 \cdot \frac{\ln(1/{\delta})}{\varepsilon^2}$, $N_1=2Nm$ and $\varepsilon = \frac{1}{24 \cdot \ln{N}}$. 
\begin{enumerate}
    \item \label{Item:TheLanguage} \textbf{The language $L$}. Let $L$ be a uniformly random language in $\{1\}^*$, i.e., for every $n \in \mathbb{N}$ with probability $\frac{1}{2}$ the string $1^n$ is in $L$.
    
    \item \label{Item:DistOracle} \textbf{The oracle $\oracle{O}$.} For every $n \in \mathbb{N}$, if $1^n \in L$, then $\oracle{O}$ samples $x_n \in \{\pm 1\}^{N_1}$ from the distribution $\mathcal{D}_1$ otherwise $\oracle{O}$ samples $x_n$ from the uniform distribution $\mathcal{U}_{N_1}$. The string $x_n$ can be interpreted as the Boolean function $f_n:\{\pm 1\}^{\lceil \log{N_1} \rceil} \rightarrow \{\pm 1\}$ that describes the oracle $\oracle{O}$ restricted to membership queries of strings of length $\lceil \log{N_1} \rceil$.
    
    \item \label{Item:BQPmachine} \textbf{A $\BQP^{\oracle{O}}$ machine deciding $L$.} Let $M^{\oracle{O}}$ be a $\BQP$ machine with query access to the oracle $\oracle{O}$ from \Cref{Item:DistOracle} above. Given an input $1^n$, the machine $M^\oracle{O}$ runs the algorithm $Q_1$ from \Cref{thm:QuantumAlgoQ1} on the $N_1$-length string provided by $\oracle{O}$ and accepts/rejects according to $Q_1$. Clearly, $M^\oracle{O}$ is a $\BQP$ machine because $Q_1$ runs in $\bigO( m\cdot \log N)=\poly(n)$ time. Furthermore, $M^\oracle{O}$ decides $L$ because of the following reasons.
    \begin{enumerate}
        \item \label{item:acceptingProb} If $1^n \in L$, the oracle $\oracle{O}$ samples $x_n$ from the distribution $\mathcal{D}_1$, and
        \begin{equation*}
            \probability_{x_n \sim \mathcal{D}_1} [Q_1 \text{ accepts } x_n] \geq 1 - \frac{1}{n^2}.
        \end{equation*}
        \item \label{item:rejectingProb} If $1^n \notin L$, the oracle $\oracle{O}$ samples $x_n$ from the distribution $\mathcal{U}_{N_1}$, and
        \begin{equation*}
            \probability_{x_n \sim \mathcal{U}_{N_1}} [Q_1 \text{ accepts } x_n] \leq \frac{1}{n^2}.
        \end{equation*}
    \end{enumerate}
    For any $x \in \{1\}^{*}$, let $L(x) \in \{0,1\}$ denote whether or not $x \in L$; more precisely, $L(1^n) \coloneqq 1$ whenever $1^n \in L$ otherwise $L(1^n) \coloneqq 0$. Consequently we get, for all $n \in \mathbb{N}$ sufficiently large,
    \begin{equation}
    \label{eq:BehaviourOfBQPmachine}
        \probability_{L, \oracle{O}}[M^{\oracle{O}}(1^n) = L(1^n)] \underset{(1)}{=}\frac{1}{2} \cdot \probability_{x_n \sim \mathcal{D}_1} [Q_1 \text{ accepts } x_n] + \frac{1}{2} \cdot \probability_{x_n \sim \mathcal{U}_{N_1}} [Q_1 \text{ rejects } x_n] \underset{(2)}{\geq} 1 - \frac{1}{n^2}.
    \end{equation}    
    The equality (1) stems from the construction of $L$ and $\oracle{O}$ as stated in \Cref{Item:TheLanguage} and \Cref{Item:DistOracle}, respectively and from the behaviour of $M^{\oracle{O}}$ which in this case is the behaviour of Algorithm $Q_1$ on an input given by the oracle $\oracle{O}$. The inequality (2) stems from the behaviour of $Q_1$ as stated in \Cref{item:acceptingProb} and \Cref{item:rejectingProb} (alternatively as stated in \Cref{eq:QuantumAlgoDistAdvantage}).
    
    \item \label{Item:NoPHmachine} \textbf{No $\RPH^{\oracle{O}}$ machine can decide $L$ conditioned on $\BQP^{\oracle{O}}$ deciding $L$.} Let $d$ be a constant and let $C$ be any circuit of size $N_1^{\bigO(\polylog(N_1))}$ and depth $d$. Let $A^\oracle{O}$ denote a $\RPH$ machine with oracle access to $\oracle{O}$. Invoking the connection between constant-depth circuits and $\RPH$ as witnessed in \Cref{thm:CircuitLowerBoundsImplyNoMembershipInRPH} and \Cref{thm:RPHcircuitSizeAndNoGoTheorem} (more specifically \Cref{eq:ClassicalCircDistAdvantage}), along with the construction of language $L$ and oracle $\oracle{O}$ stated in \Cref{Item:TheLanguage} and \Cref{Item:DistOracle} (respectively) gives, for all $n \in \mathbb{N}$ sufficiently large,
    \begin{equation}
    \label{eq:BehaviourOfPHmachine}
    \probability_{L,\oracle{O}} [A^{\oracle{O}}(1^n)=L(1^n)] \leq \frac{1}{2}+\frac{\polylog(N_1)}{2\sqrt{N_1}}.
    \end{equation}
    Let $Q(n)$ denote the event that $M^{\oracle{O}}(1^n)=L(1^n)$ and $E_A(n)$ denote the event that $A^{\oracle{O}}(1^n)=L(1^n)$. Then from \Cref{eq:BehaviourOfBQPmachine,eq:BehaviourOfPHmachine} we have 
     $\probability_{L, \oracle{O}}[Q(n)] \geq 1 -\frac{1}{n^2}$ and $\probability_{L, \oracle{O}}[E_A(n)] \leq \frac{1}{2} + \frac{\polylog(N_1)}{\sqrt{N_1}}$, respectively. Thus,
     \begin{equation*}
         \probability_{L, \oracle{O}} [E_A(n) \mid Q(n)] \leq \frac{\probability_{L, \oracle{O}} [E_A(n)]}{\probability_{L, \oracle{O}} [Q(n)]} \leq \frac{\frac{1}{2} + \frac{\polylog(N_1)}{\sqrt{N_1}}} {1 -\frac{1}{n^2}}< 1.
     \end{equation*}
    Because of the independence of different input lengths, we further get
    \begin{equation*}
         \probability_{L, \oracle{O}} [E_A(n_0) \land E_A(n_0+1) \land \ldots  \mid Q(n_0) \land Q(n_0 +1) \land \ldots] =0;
     \end{equation*}
     here $n_0$ denotes the sufficiently large $n$ such that for all $n\geq n_0$ both \Cref{eq:BehaviourOfBQPmachine,eq:BehaviourOfPHmachine} hold. As the number of $\RPH$ machines is countably infinite, using a union bound over these machines gives
     \begin{equation*}
         \probability_{L, \oracle{O}} [\exists A \text{ s.t. } E_A(n_0) \land E_A(n_0+1) \land \ldots  \mid Q(n_0) \land Q(n_0 +1) \land \ldots] =0.
     \end{equation*}
     So conditioned on $M^{\oracle{O}}(1^n)=L(1^n)$ for all $n \geq n_0$, we get $L \notin \RPH^{\oracle{O}}$ with probability $1$ over all choices of $L, \oracle{O}$. Thus (by hardwiring $L$ on $1^n$ for $n < n_0$ in the $\BQP$ machine), there exists an oracle $O$ for which $\BQP^{\oracle{O}} \nsubseteq \RPH^{\oracle{O}}$.\qedhere
\end{enumerate}
\end{proof}

\bibliographystyle{alpha}
\bibliography{lib,ER,oracle}

\appendix

\section{Foundations}
\label{sec:preliminaries}


This section serves as a self-contained introduction to basic concepts of complexity theory
to the readers that are not very familiar with it.
It also serves as reference to make precise what exactly we mean with our notation.
This can be useful as the terms are sometimes overloaded within the complexity theory literature.
It is also useful to spell out certain details in our proofs more explicitly.

We start to recall the definition of a classical Turing machine.
Thereafter, we explain how we extend it to real Turing machines 
and how this is equivalent to standard real computation models in the literature.

\subsection{Turing Machines}
\label{sec:TuringMachines}
As there are many different definitions of Turing machines, for the sake of being self-contained, 
we provide a definition that is very close to the definition given by Arora and Barak~\cite{AB09}.
Before we start, we note that $S^* = \bigcup_{k\in \N} S^k$, for a given set $S$.

\begin{definition}[Turing Machine {\cite[Section~1.2]{AB09}}]
A Turing machine receives some input $\{0,1\}^*$ does some computations and outputs again a word in $\{0,1\}^*$
or does not halt. In which case, we say the output is unspecified, abbreviated with the symbol $\bot$.
Formally, a Turing Machine $M$ is specified by a tuple $(\Gamma, Q, \delta)$.
Those components are called the alphabet $\Gamma$, the set of states~$Q$, and the transition function $\delta$.
\begin{itemize}[leftmargin=2cm]
     \item[alphabet] 
    The \textit{alphabet} is denoted by $\Gamma = \{0, 1, \square, \rhd\}$, which comprises the symbols $0$ and $1$, the designated "blank" symbol $\square$, and the "start" symbol $\rhd$. 
    \item[tape]
    The Turing machine $M$ operates on a single tape, which consists of a countably infinite sequence of cells indexed by the natural numbers. A \textit{tape head} moves across the tape, indicating the currently active cell. At each step, the tape head can move one cell to the left or right, or stays at the current location.

    \item[states]
    The set $Q = \{1,\ldots,q\}$ denotes all possible states of the Turing machine $M$. 
    We assume that $Q$ includes a designated starting state $q_{\texttt{start}} = 1$ and a designated halting state $q_{\texttt{halt}} = q$.
    \item[transition] A function $\delta: Q \times \Gamma \rightarrow Q \times \Gamma \times \{ \leftarrow, \circ, \rightarrow \}$.
    This function is called the \textit{transition function} and describes how $M$ changes each step.
    The input is the current state and the symbol on the tape at the head position.
    The output is the new state, a new symbol and a movement specification for the head.
    \item[initialization] 
    The tape is initialized with the start symbol $\rhd$ in its first cell, followed by a sequence of non-blank symbols representing the \textit{input} in binary. All other cells are initialized to the blank symbol $\square$. The tape head begins at the first cell on the left end of the tape. The machine starts in the special state $q_{\texttt{start}}=1$.

    \item[computation]
    In each step of the computation, the function $\delta$ is applied.
    The state of $M$ and the symbol read by the tape head will be the input for $\delta$;
    after the computation, $\delta$ specifies the new state, writes a new symbol to the tape at the current head position and moves the tape head either left or right by one position.
    If the tape head is already in the leftmost 
    position, the tape head stays put.
    (Recall that each position of the tape corresponds to a natural number starting with $0$ at the very left and going to infinity to the right.)
    The computation repeatedly applies the $\delta$-function until it reaches the special state $q_\texttt{halt}$.
    Then the computation halts.
    \item[output] When the Turing machine halts, i.e., enters state $q_{\texttt{halt}} = q$, it outputs everything to the left of the tape head, ignoring the blank symbols and the start symbol.
    Note that the output is always a finite sequence of binary symbols.
\end{itemize}

\end{definition}
In our definition of binary Turing machine, we have made some choices.
\begin{itemize}
    \item For example, it is possible to allow multiple tapes.
However, it is known that multiple tapes can be simulated on a single tape.
    \item It is also common to restrict to a binary alphabet without the start and the blank symbol.
The idea is that we map the four symbols $\{0, 1, \square, \rhd\}$ to the pairs $\{00, 01, 10, 11\}$.
However, this comes at a cost of a more complicated input and output encodings, which we wanted to avoid.
On the other hand, if the alphabet $\Gamma$ is chosen to be larger, then it is easy to simulate the larger alphabet with our alphabet.
    \item In a similar spirit, it is possible to allow for multiple heads. Again, this is known not to change the computational power.
    \item Also note that, technically, $\Gamma$ is fixed for us and thus does not need to be specified, we do this to emphasize that different alphabets are possible.
    \item Note also that what we did is equally valid to have the tape bidirectionally infinite.
    We see neither advantages nor disadvantages to either choice, and thus picked the arguably simpler choice to be infinite in only one of the two directions.
    \item Sometimes Turing machines have an accepting or rejection state.
    This can be simulated with making the output being $1$ or $0$.
    \item The output can also be defined as all non-blank symbols on the tape.
    Again, this is equivalent to our choice, see for example~\cite{vEB12}.
\end{itemize}

We will also call a Turing machine a \textit{binary Turing machine}, to contrast it to the \textit{real Turing machine} defined below.
Intuitively, a real Turing machine extends the definition of a binary Turing machine by 
adding one extra tape which can hold real values.
Furthermore, we allow it to calculate basic arithmetic operations in one step: $\{ +, -, \cdot, \div \}$ on the real tape.
We repeat also the parts of the real Turing machine that are identical to the binary Turing machine in order to have a self-contained definition.

\begin{definition}[Real Turing machine]
\label{def:RealTuringMachine}
    A real Turing machine receives some input $\{0,1\}^*\times \R^*$, does some computations and 
    outputs again a word in $\R^*$ or does not halt. 
    In the latter case, we say the output is unspecified, abbreviated with the symbol $\bot$.
    We specify a real Turing machine by the tuple $(\Gamma_\bin, \Gamma_\real, Q, \delta)$.
    Those components are called the alphabets $\Gamma_\bin,\Gamma_\real$, the set of states~$Q$, and the transition function $\delta$.
    \begin{itemize}[leftmargin=2cm]
    \item[alphabets] We specify the \textit{binary alphabet} by  $\Gamma_\bin = \{0, 1, \square, \rhd \}$.
    The \textit{real alphabet} $\Gamma_\real = \R \cup \{ \square, \rhd\}$ has real numbers instead of binary numbers.
    \item[tapes] We have a binary and a real tape.
    Each tape contains a cell for each natural number.
    We think of the $0$ position as the leftmost position ``going to the right'' to infinity.
    Each cell of the binary tape contains a symbol from the binary alphabet and likewise each cell from the real tape contains a real symbol.
    The binary tape has one head and the real tape has three heads.
    We denote the real head positions with $h,i,j$, where $h$ is a write head and $i,j$ are reading heads.

     \item[states]
    The set $Q = \{1,\ldots,q\}$ denotes all possible states of the Turing machine $M$. 
    We assume that $Q$ includes a designated starting state $q_{\texttt{start}} = 1$ and a designated halting state $q_{\texttt{halt}} = q$.
    \item[operations] The Turing machine, can only write on the cell $x_h$ of the head $h$.
    It can use the content of the cells $x_i$ and $x_j$ of the heads $i,j$.
    The set of operations $\operations$ is the following:
    \begin{enumerate}
        \item \textit{Idle}: $x_h = x_h$.
        \item \textit{Addition}: $x_h = x_i + x_j$.
        \item \textit{Subtraction}: $x_h = x_i - x_j$.
        \item \textit{Multiplication}: $x_h = x_i \cdot x_j$.
        \item \textit{Division}: $x_h = \frac{x_i}{x_j}$ 
        \item \textit{Write constant}: $x_h \leftarrow x$ where $x \in \{0,1,\square, \rhd\}$.
        \item \textit{Copy}: $x_h \leftarrow x_i$. 
    \end{enumerate}
    Note that there are $10$ different possible operations.
    \item[transition] 
    The \textit{transition function} $\delta$ takes as input the current state, the content 
    of the active cell on the binary tape, and the sign
    ($\signset = \{+,0,-\}$) of the reading head $i$ of the real tape.
    (By convention, we assume that the sign of the start and the blank symbol is $0$.)
    It returns a new state, a new binary symbol, an operation and a specification of the head movements $\{\leftarrow, \circ, \rightarrow\}$.
    Formally, we write \[\delta : Q \times \Gamma_\bin \times \signset \rightarrow Q \times \Gamma_\bin \times \operations \times \{\leftarrow, \circ, \rightarrow\}^4.\]
     \item[initialization] 
    The binary tape is initialized with the start symbol $\rhd$ in its first cell, followed by a finite sequence of $\{0,1\}$ symbols representing the \textit{input} in binary. 
    All other cells are initialized to the blank symbol $\square$. 
    The tape head begins at the first cell on the left end of the tape. 
    The real tape is initialized equally with the real input instead of the binary input.
    The machine starts in the special state $q_{\texttt{start}}= 1$.
    \item[computation]
    In each step of the computation, the function $\delta$ is applied.
    The symbol on the binary tape, the current state and the sign of the first real reading head is processed.
    The Turing machine goes into a new state, writes on the binary tape directly and performs the specified operation on the real tape.
    All heads are moved according to $\delta$ as well.
    If any of the tape heads are already in the leftmost 
    position, it cannot move further to the left and stays put instead.
    The computation repeatedly applies the $\delta$-function until it reaches the special state $q_\texttt{halt}$.
    Then the computation halts.
    \item[output]
    When the Turing machine halts, i.e., reaches state $q$, it outputs everything to the left of the real tape head $h$, ignoring the blank symbols and the start symbol.
    Note that the output can be interpreted in a binary way, as $\{0,1\}^* \subset \R^*$.
    \end{itemize}
\end{definition}

Again there are choices to be made in the way that we define a real Turing machine.
Clearly, the BSS model of computation serves as an inspiration~\cite{BCSS98}.
Different sources also differ in part in the way that BSS machines are described~\cite{BSS89}.
For example, the original paper by Blum Shub and Smale had two heads~\cite{BSS89} and a one directional tape.
In contrast, the model in the celebrated book by Blum, Cucker, Shub and Smale~\cite{BCSS98} had no head at all.
Instead, they had the possibility to ``shift'' the entire input one step to the right or one step to the left.
We found the description by Gr\"{a}del and Meer~\cite{GM95}, particularly pleasing to read.

We want to highlight some of the choices we made that differ to other definitions in the literature:
\begin{itemize}
    \item BSS machines often are allowed to have access to arbitrary real constants. 
    As we pointed out earlier this leads to a lot of power.
    We therefore deliberately do not allow our real Turing machine to use any constants apart from zero and one.
    \item We decided not to allow arbitrary rational functions.
        Instead, we allow standard arithmetic operations that
        can simulate those rational functions, so the model would remain as simple as possible.
    \item BSS machines typically allow branching on the sign of some fixed polynomials $\beta$ 
    evaluated on a constant number of real cells.
    The polynomial $\beta$ is part of the BSS-machine.
    We can easily simulate this branching, by computing $\beta$ using the basic arithmetic operations and then check the sign 
    of the cell containing it.
    For example, if we want to simulate the instruction $x_i \leq x_j$, we compute $x_h = x_i - x_j$,
    and then check the sign of $x_h$.
    \item Also, the transition function has no direct access to the numbers in the real tape, but only to the sign
        of one of the heads. 
        In this way, we do not have to restrict the set of possible transition functions.
    \item The first definition of a BSS-machines stems from 1989~\cite{BSS89}.
        Yet, there are several different authors who gave machine models 
        that were capable to work with more general structures than a finite alphabet.
        
        We find the description of Gr\"{a}del and Meer~\cite{GM95} very accessible.
        The original definition allowed two heads that can be moved.
        But secretly, the branching and computation instructions allowed access
        to more registers.
        Later, in the book~\cite{BCSS98} from 1998 the authors define a machine model that has no head at all.
        Yet, it has a shift operation that moves the entire tape one step to the left or one step to the right.
        Which is again a head in disguise.
        We are not sure why the authors, try to avoid using heads that operate on the tape.
        We take the opposite approach and are generous with the number of heads we allow,
        in order to have a simpler and more intuitive description.
    \item 
        In the literature, the word random access is used in different ways.
        In the 1980s, random access referred to the fact that any part of the memory could access.
        In comparison, we can only access the top element of a stack.
        Yet nowadays, if people talk about the word RAM, it stands for random access memory,
        which means that we can use the content of a register as an address to another register.
        We can manipulate the contents of this register, which allows efficient access to otherwise distant registers. 
        It is that ability, that allows us to have a fine-grained complexity analysis,
        which is not so useful on a Turing machine, as most of the running time is spent by moving the heads forth and back.
    \item We also want to point out that division is not strictly needed due to the fact that, for positive $a,b,c,d$ it holds that
        \[\frac{a}{b} < \frac{c}{d} \Leftrightarrow cb - ad > 0.\]
        Therefore it is possible to store nominator and denominator of each number separately and simulate all other operations, including determining the sign of a number with a constant number of extra calculations. 
    \item BSS machines typically define functions $f : \R^* \rightarrow \R^*$. 
        And thereafter, we may restrict out inputs to be binary~\cite{BC09}. 
        We explicitly have a binary tape to emphasize that 
        the model is meant to take binary inputs.
\end{itemize}

To use Turing machines in a sensible way, we do need to be able 
to have a notion of running time of a Turing machine.

\begin{definition}[Runtime]
    Recall that a real Turing machine $M$ defines a function $f : \{0,1\}^*\times \R^* \rightarrow \{0,1\}^*\times \R^* \cup \bot$.
    The size of the input $(x,y) \in \{0,1\}^*\times \R^*$ is the number of binary and real symbols.
    The runtime of a Turing machine on input $(x,y)$ is the number of computation steps until it halts.
    We say $M$ has running time $T : \N \rightarrow \N$, if 
    \begin{itemize}
        \item it halts on any input, and
        \item for any input of size $k$ it halts in at most $T(k)$ steps.
    \end{itemize}
\end{definition}

\begin{lemma}
\label{lem:NrRTMsinstates}
    There are at most 
    $(3240q)^{12 q}$ real Turing machines with $q$ states.
\end{lemma}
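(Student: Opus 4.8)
The plan is a straightforward counting argument based on \Cref{def:RealTuringMachine}. A real Turing machine is specified by a tuple $(\Gamma_\bin, \Gamma_\real, Q, \delta)$, but once $q$ is fixed the first three components are determined: the alphabets $\Gamma_\bin$ and $\Gamma_\real$ are the same for every real Turing machine, and $Q = \{1, \ldots, q\}$. Hence the number of real Turing machines with $q$ states is exactly the number of possible transition functions
\[
\delta : Q \times \Gamma_\bin \times \signset \rightarrow Q \times \Gamma_\bin \times \operations \times \{\leftarrow, \circ, \rightarrow\}^4 ,
\]
and I would bound this by counting the sizes of the domain and the codomain.

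First I would record $|\Gamma_\bin| = |\{0,1,\square,\rhd\}| = 4$, $|\signset| = |\{+,0,-\}| = 3$, and $|\operations| = 10$ (the six operations Idle, Addition, Subtraction, Multiplication, Division, Copy, together with the four instances of Write constant, one for each of $0,1,\square,\rhd$), as well as $|\{\leftarrow,\circ,\rightarrow\}^4| = 3^4 = 81$ for the four head movements. The domain of $\delta$ then has size $q \cdot 4 \cdot 3 = 12q$, and the codomain has size $q \cdot 4 \cdot 10 \cdot 81 = 3240q$. Since the number of functions from a set of size $a$ to a set of size $b$ is $b^a$, there are $(3240q)^{12q}$ transition functions, and hence at most $(3240q)^{12q}$ real Turing machines with $q$ states.

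This argument has no real obstacle; it is pure bookkeeping. The only points requiring care are getting the constant $|\operations| = 10$ right (remembering that Write constant contributes four instructions, not one) and observing that the statement is phrased as an upper bound only because syntactically distinct transition functions may induce machines computing the same function — the count $(3240q)^{12q}$ is in fact the exact number of distinct machines with $q$ states.
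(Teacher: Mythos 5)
Your proposal is correct and matches the paper's proof essentially verbatim: both count the transition functions by noting the domain has size $q\cdot 4\cdot 3 = 12q$ and the codomain has size $q\cdot 4\cdot 10\cdot 3^4 = 3240q$, giving $(3240q)^{12q}$ functions. Your extra bookkeeping on why $|\operations| = 10$ (four instances of Write constant) is accurate and consistent with the paper's definition.
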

\begin{proof}
    Note that if $\delta$ is specified, essentially the rest of the real Turing machine is specified as well.
    \Cref{tab:setsizes} gives the size of each set involved in the transition function.
    \begin{table}
        \centering
        \begin{tabular}{|cc|}
        \hline
        \textbf{Set} & \textbf{Size} \\
        \hline
        $Q$ & $q$ \\
        $\Gamma_{\text{bin}}$ & $4$ \\
        $\signset$ & $3$  \\
        $\operations$ & $10$\\
        $\{\leftarrow, \circ, \rightarrow\}$ & $3$ \\
        \hline
        \end{tabular}
        \caption{Sizes of different sets involved in the transition function.}
        \label{tab:setsizes}
    \end{table}
Thus, the number of possible inputs equals $a= q\cdot 4 \cdot 3 = 12 q$
and the number of possible outputs equals $b = q \cdot 4 \cdot 10 \cdot 3^4 = 3240 q$.
The number of functions from a set of size $a$ to a set of size $b$ equals $b^a = (3240q)^{12q}$.
\end{proof}

\subsection{Real Turing Machine Justification}
The next step is to justify our model of computation.
For that we point out that any function $f : \{0,1\}^* \rightarrow \{0,1\}^*$ computable in polynomial time on a binary Turing machine is trivially also computable in polynomial time on a real Turing machine, as we can just ignore the real tape.
Furthermore, we should make sure that we are not more or less powerful than other standard models of computation.
As comparison, we take the realRAM as defined by Erickson, Hoog and Miltzow~\cite{EvdHM20}.

\begin{lemma}
    \label{lem:realRAMEquivalence}
    Let $f : \{0,1\}^*\times \R^* \rightarrow \{0,1\}^*\times \R^*$ be a function.
    Then $f$ can be computed in polynomial time on a real Turing machine if and only if it can be computed in polynomial time on a realRAM.
\end{lemma}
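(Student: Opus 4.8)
The plan is to establish both simulation directions separately, each with only polynomial time overhead, using the standard fact that the (single- or multi-)tape Turing machine and the RAM are equivalent up to polynomial slowdown, adapted to the real setting. For the ``only if'' direction, where a realRAM simulates a real Turing machine $M$ with running time $T(n)=\poly(n)$: the realRAM stores the binary tape in an array of word cells and the real tape in an array of real cells, and keeps the positions of the binary head and of the three real heads $h,i,j$ in word registers. One step of $M$ is simulated by reading the symbol under the binary head and the sign of the cell under real head $i$ (constant-time indexed accesses), looking up $\delta$ in the finite control, performing the prescribed operation among $+,-,\cdot,\div$ on the real cells addressed by $h,i,j$ (one realRAM arithmetic step), writing the new binary symbol, and incrementing or decrementing the head-position registers. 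Each step of $M$ costs $\bigO(1)$ realRAM steps, so the simulation runs in $\bigO(T(n))=\poly(n)$ time.

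The ``if'' direction, where a real Turing machine simulates a realRAM, carries the real content. The key observations are that a realRAM running for $T(n)$ steps touches at most $\bigO(T(n))$ distinct memory cells and that every integer appearing in a word register has only $\poly(n)$ bits (true for the realRAM of~\cite{EvdHM20}; under a uniform-cost convention one additionally imposes the standard restriction that word values stay polynomially bounded). Consequently the real Turing machine keeps the live part of the word memory on its binary tape as a sparse \textit{association list} of $\bigO(T(n))$ pairs (address, value), each of $\poly(n)$ bits, hence $\poly(n)$ bits in total; the live part of the real memory is kept as a parallel list whose addresses sit on the binary tape and whose real values sit, in the same order, on the real tape. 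The realRAM's current registers are stored the same way (word registers on the binary tape, real registers on the real tape), and the finite control of the real Turing machine encodes the realRAM's fixed program together with a program counter. Each realRAM instruction is then simulated in $\poly(n)$ steps: arithmetic and comparisons on word registers become ordinary binary arithmetic; an arithmetic operation on real registers is one step on the real tape, after $\poly(n)$ steps navigating the parallel lists to locate the operands; a branch on the sign of a real register is immediate because $\delta$ receives the sign of real head $i$; and an indirect access through an address held in a word register is carried out by scanning the association list (length $\bigO(T(n))$, comparing $\poly(n)$-bit addresses), updating the matching entry, or appending a fresh entry if the address is new (at most $\bigO(T(n))$ such entries ever arise, so the size bounds are preserved). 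Over all $T(n)$ instructions this yields a $\poly(n)$-time real Turing machine.

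The delicate part, and the main obstacle, is this second direction: translating random access into sequential tape navigation without a super-polynomial blow-up. This rests on the bound of $\bigO(T(n))$ live cells and on word values (hence addresses) having $\poly(n)$ bits, which keeps the association list of polynomial total length; one must also keep the binary ``index'' tape and the real tape synchronised so a real cell can be located from its address, and one must check that the realRAM in question is, like our real Turing machine, constant-free (or uses only the constants $0$ and $1$) and works with the same arithmetic operations $+,-,\cdot,\div$, since otherwise no such equivalence could hold. Everything else is a routine bookkeeping argument, so I would not spell out the tape-layout encodings in detail.
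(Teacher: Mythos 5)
Your proof is correct and follows essentially the same route as the paper: both directions are handled by direct step-by-step simulation, with the realRAM storing tapes as arrays and head positions in registers, and the real Turing machine reusing the standard wordRAM-to-Turing-machine simulation for the word part. Your treatment of the second direction is in fact more explicit than the paper's (which invokes the wordRAM simulation as a black box and only adds casting and real arithmetic), since you spell out how indirect access into the real memory is realised via an address list on the binary tape kept synchronised with the real tape --- a detail the paper glosses over but which is needed and handled correctly here.
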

\begin{proof}
    We first argue that a realRam can simulate a real Turing machine.
    The binary input of the real Turing machine is stored in the word registers (leaving the first 4 word registers empty) and
    the real input is stored in the real registers.
    We can store the position of the four heads in the four separate word registers at the beginning.
    The realRAM can then simulate each computation step easily.
    It can store the current state, update the head positions, perform the correct arithmetic operations and go to the next state.

    We argue now that real Turing machines can simulate the realRAM.    
    We use the known fact that the wordRAM can be simulated by a binary Turing machine.
    Specifically, all standard operations, like indirect accessing, addition, subtraction etc. from the wordRAM can
    be simulated on a binary Turing machine.
    Thus, we only need to show that also the operations on the real registers can be simulated as well.
    Those include casting integers to reals, and real arithmetic operations.
    All standard real arithmetic operations  are part of the definition of the real Turing machine.
    In order to describe casting, let $x$ be the content of an integer register.
    Let us say that $x = b_1\ldots b_k$ in binary, with each $b_i \in \{0,1\}$.
    We see that $x = ( \ldots (2(2(b_1 ) + b_2)+ \ldots ) + b_k$.
    This means we can construct $x = x_k$ by computing $x_1 = b_1$, and 
    $X_{i+1} = 2\cdot x_i + b_{i+1}$. 
    This takes at most a polynomial number of steps.
\end{proof}

\subsection{Oracle Machines}
\label{sec:OracleDefinition}

Sometimes our Turing machines can have access to an oracle, 
then we call it an \emph{oracle} (real) Turing machine. 
The idea is that an \textit{oracle} can be an arbitrary set $O\subseteq \{0,1\}^*$ 
or $O\subseteq \R^*$.
We denote those as \textit{binary} or \textit{real} oracles, respectively.
Unless specified otherwise, our oracles are binary.
The idea of an oracle Turing machine is that it can check if a given $x$ is in $O$ or not in one time step.
Note that there are no restrictions on the set $O$.
So $O$ could also contain the answer to the halting problem.
Given a real Turing machine $M$ and an oracle $O$, we write $M^O$ or $(M,O)$ to refer to $M$ with oracle access to $O$.

\begin{definition}[Oracle (real) Turing machines]
\label{def:OracleRTM}
    We define oracle machines as extensions of normal machines.
    We add one more tape, which we refer to as the \textit{oracle tape}.
    Depending on the oracle, this tape is a binary or real tape and the alphabet is either
    $\Gamma_\bin$ or $\Gamma_\real$. 
    This tape has only one head, and it is only possible to copy to this tape from a different tape or to write a blank or start symbol.
    We interpret the string of symbols ignoring the blank and start symbol as the query $x$.
    The transition function has access to the answer \texttt{y/n} of the query $x\in O$.
    Thus, formally, we extend the operations $\operations$ to include the operations
    \begin{enumerate}
        \item copy on oracle tape $x_o \leftarrow x_h$
        \item write blank or start symbol $x_o \leftarrow \square$ or $x_o \leftarrow \rhd$.
    \end{enumerate}
    Note that the new set of operations has size $13$, as we added 3 possible operations.
    
    Formally, the transition function for a binary oracle function is described by
    \[\delta : Q \times \Gamma_\bin \times \signset \times \{\texttt{y},\texttt{n}\} \rightarrow Q \times \Gamma_\bin \times \operations \times \{\leftarrow, \circ, \rightarrow\}^5.\]
    Similarly, $\delta$ can be adopted for a real oracle functions.
    
\end{definition}

\subsection{Complexity Classes}
\label{sec:ComplexityClasses}

We are now ready to define decision problems and languages.
We think of a decision problem as having some input and a \texttt{yes/no} question.
For example, the input could be a graph, and we want to know if it contains a Hamilton cycle.
To be formal and in a coherent framework, we need to understand two steps.
First, any input can be encoded using $0$'s and $1$'s.
Furthermore, if we only look at the set of inputs with answer \texttt{yes} then we 
get a set $L \subseteq \{0,1\}^*$.
Now, if a machine can solve if some $x$ is in $L$ then it can also solve the decision problem.
We distinguish between binary language ($\subseteq \{0,1\}^*$) and real languages ($\subseteq \R^*$).
However, most of this article deals with binary languages.

\begin{definition}[Function and language recognition]
\label{def:FunctionNlanguageRecognition}
Let $f:\{0,1\}^{*} \rightarrow \{0,1\}$ be a function with Boolean output. 
The language corresponding to $f$, is the set of strings $L \subseteq \{0,1\}^{*}$ on which $f$ evaluates to $1$, i.e., $\forall x \in \{0,1\}^{*}$,
\begin{equation*}
    x \in L \iff f(x)=1.
\end{equation*}

We say a real Turing machine $M$ \emph{decides} a language $L \subseteq \{0,1\}^{*}$ if for all $x \in \{0,1\}^*$, whenever $M$ is initialized to the start configuration on input $x$ (i.e., goes to the $q_\texttt{start}$ state), then it halts (i.e., goes to the $q_\texttt{halt}$ state) with $f(x)$ written on its output tape.
\end{definition}

\paragraph{Complexity Classes.}
We are now ready to define \textit{complexity classes}, simply as a set of languages.
In other words, $C \subseteq \power(\{0,1\}^*)$.
Given a machine model $M$, we say $C$ is \textit{defined by} $M$,
if for every language $L$ it holds that
\[L(\prop, O) \in C \Leftrightarrow M \text{ decides } L.\]
We write $C^O$ if $C$ is defined by machine model $M$ and
machine model $M$ with oracle access to $O$ decides~$C^O$.

Typically, if a complexity class $C$ comes with two natural machine models $M_1$ and $M_2$ then adding an oracle $O$ to either machine leads to the same complexity class. For example, \PSPACE could be defined using polynomial space single tape Turing machines, or using wordRAM and polynomial space.
Adding an oracle $O$ to either of them in a natural way leads to the same complexity class $\PSPACE^{O}$.
In such a case, we can neglect the machine model $M$ in the notation. On the other hand,  $\textsc{IP} = \PSPACE$, but there exists an oracle $O$ such that $\textsc{IP}^O \neq \PSPACE^O$ and that happens because the underlying machine models can access the oracle differently.

We are now ready to define standard complexity classes.

\begin{itemize}
    \item \P is the set of languages that can be decided with a binary Turing machine in polynomial time.
    \item \PSPACE is the set of languages that can be decided with a binary Turing machine in polynomial space.
    \item \NP is defined as follows. 
    $L\in \NP$ if and only if there is a binary Turing machine $M$, with polynomial running time and a polynomial $q$ such that
        \[x \in L \Leftrightarrow \exists w\in \{0,1\}^{q(|x|)} : M(x,w) = 1.\]    
    \item \ER is defined as follows. 
    $L\in \ER$ if and only if there is a real Turing machine $M$, with polynomial running time and a polynomial $q$ such that
        \[x \in L \Leftrightarrow \exists w\in \R^{q(|x|)} : M(x,w) = 1.\]
    \item \BQP is the set of languages that can be decided by quantum computers using polynomial time and with error probability $\leq$ 1/3 on every input. We refer the readers to \cite[Definition~10.9]{AB09} for a detailed definition.
\end{itemize}

\begin{definition}[Three types of polynomial hierarchies, Definition~5.3 in \cite{AB09}]
\label{def:PH}
A language  $L$ is in $\PH$ if there exists a polynomial-time Turing machine $M$, polynomial $q:\N \rightarrow \N$, and a constant $k \in \N$ such that on every $x \in \{0,1\}^{*}$,
\begin{equation*}
    x \in L \iff Q_1 u_1 \in \{0,1\}^{q(|x|)} Q_2 u_2 \in \{0,1\}^{q(|x|)} \cdots Q_k u_k \in \{0,1\}^{q(|x|)} M(x, u_1,\ldots,u_k)=1
\end{equation*}
where $Q_i \in \exists$ or $\forall$.
The $k$th level is denoted by $\DPHlevel{k}$ if at most $k$ quantifier alternations are allowed.
We distinguish further between $\SPHlevel{k}$ and $\PPHlevel{k}$, to indicate that the first quantifier is either existential or universal.

We define the \textit{truly real polynomial hierarchy} \TrulyRPH in the same way, but with every occurrence of $\{0,1\}$
replaced by $\R$ and $M$ a real Turing machine.
That is, the language is real, and the quantified inputs are real.
The levels are denoted by \TrulyDRPHlevel{k}, \TrulySRPHlevel{k}, and \TrulyPRPHlevel{k}.

In this work, we deal with the \textit{real polynomial hierarchy} \RPH, which is defined in the same way as the truly polynomial hierarchy, except that we restrict ourselves to binary languages.
The levels are denoted by \DRPHlevel{k}, \SRPHlevel{k} and \PRPHlevel{k} respectively.
Note that $\SRPHlevel{1}$ is also denoted as $\ER$, $\PRPHlevel{1}$ as $\forall \R$,
$\SRPHlevel{2}$, as $\exists \forall \R$ and so on~\cite{BC09,JKM23, DKMR18, JJ23}.
\end{definition}

\subsection{Boolean Circuits}

This section is dedicated to formally define Boolean Circuits.
We closely follow the definitions by Arora and Barak~\cite{AB09}.

\begin{definition}[Boolean Circuits, Definition~6.1 in \cite{AB09}]
\label{def:BooleanCircuits}
For every $n \in \N$, an \emph{$n$-input, single-output Boolean circuit} $C$ is a directed acyclic graph with the following characteristics:

\begin{enumerate}
    \item \textbf{Inputs:} There are $n$ source vertices, called \emph{inputs}. These vertices have no incoming edges, but may have any number of outgoing edges. Each input vertex is labeled with an index $i \in [n]$, representing the input variables.
    \item \textbf{Output:} There is a single sink vertex, called the \emph{output}. This vertex has no outgoing edges.
    \item \textbf{Gates:} All non-input vertices are called \emph{gates}. Each gate is labeled with one of the logical operations $\land$ (AND), $\lor$ (OR), or $\neg$ (NOT).
    \begin{itemize}
        \item Gates labeled with $\land$ or $\lor$ have an arbitrary number of incoming edges (arbitrary fan-in).
        \item Gates labeled with $\neg$ have exactly one incoming edge (fan-in of 1).
    \end{itemize}
\end{enumerate}

We denote the label of a vertex $v$ by $\operatorname{label}(v)$. 
The \emph{size} of a Boolean circuit $C$, denoted by $|C|$, is the total number of vertices in the circuit.
The \emph{depth} of a circuit is the length of the longest path from a source vertex to the sink vertex.
(The length of a path is measured by the number of edges on the path.)

Given a Boolean circuit $C$ and an input $x \in \{0, 1\}^n$, the output $C(x)$ is computed by assigning a value $\operatorname{val}(v) \in \{0,1\}$ to each vertex $v$ in $C$ as follows:

\begin{enumerate}
    \item If $v$ is an input vertex labeled with $i$, then $\operatorname{val}(v) = x_i$.
    \item If $v$ is a gate, then $\operatorname{val}(v)$ is the result of applying $\operatorname{label}(v)$ to the values of its predecessor vertices (the vertices with edges leading into $v$).
\end{enumerate}

The output $C(x)$ is the value assigned to the output vertex after evaluating the circuit on $x$.
\end{definition}

\begin{remark}
Sometimes in literature, the term Boolean circuit is defined as in \Cref{def:BooleanCircuits} with constant fan-in of at most $2$.
However, we adhere to the other standard in literature, by considering circuits with unbounded fan-in~\cite{AB09}.
\end{remark}

Similar to computation with Turing machines, we can also recognize languages with circuits.

\begin{definition}[Circuit family and language recognition, Definition~6.2 in \cite{AB09}]
\label{def:CircuitFamily}
Let $T: \N \rightarrow \N$ be a function. A \emph{$T(n)$-size circuit family} is a sequence $(C_n)_{n \in \N}$ of Boolean circuits, where for every $n$, $C_n$ has $n$ inputs and a single output and its size $|C_n| \leq T(n)$.

We say a language $L(\prop, O) \in \Size{T(n)}$ if there exists a $T(n)$-size circuit family $(C_n)_{n \in \N}$ such that for every $x \in \{0,1\}^n, x \in L \iff C_n(x)=1$.
\end{definition}

\begin{definition}[$\AC$, Definition~6.25 in \cite{AB09}]
\label{def:AC0circuits}
A language $L$ is in $\AC$ if $L$ can be decided by a family of circuits $(C_n)_{n \in \mathbb{N}}$ where $C_n$ has $\poly(n)$ size and constant depth and is allowed to have gates with unbounded fan-in (i.e., the $\lor$ and $\land$ gates can be applied to more than two bits).
\end{definition}

\subsection{Two Definitions of the Real Polynomial Hierarchy}
\label{sub:TwoRPH-Definitions}
We defined the real polynomial hierarchy using the real Turing machine.
However, it is also possible to define it using first order logical formulas, oracle machines, or Boolean circuits~\cite{M24, ERcompendium}.
We leave open how the latter two definitions would look like.
We recite the definition via first order logical formulas as, to the best of our knowledge, were first described by B\"urgisser and Cucker~\cite{BC09}.

\begin{definition}[\LogicRPH]
\label{def:LogicRPH}
A language $L$ is in $\LogicRPH$ if there exists a binary polynomial-time Turing machine $N$ that returns a quantifier-free formula $\Phi$ with free variables \(u_1, \dotsc, u_k\) on every $x \in \{0,1\}^{*}$ with $|x| = n$.
Furthermore, there must be a constant $k$ and a polynomial $q:\N \rightarrow \N$
\begin{equation*}
    x \in L \iff \exists u_1 \in \R^{q(|x|)} \forall u_2 \R^{q(|x|)} \cdots Q_k u_k \in \R^{q(|x|)} \Phi^{(n)}(u_1,\ldots,u_k)
\end{equation*}
where $Q_i \in \{\exists,\forall\}$.
\end{definition}

It was observed by B\"{u}rgisser and Cucker that \LogicRPH and \RPH define the same set of languages~\cite{BC09}.
The idea of the proof is a Cook-Levin style reduction.
Clearly, a real Turing Machine $M$ can simulate the Turing machine $N$ and evaluate the formula $\Phi_x$ on the variable assignment $u_1,u_2,\ldots , u_k$.
Conversely, all the computations that $M$ will do can be encoded into a formula. 
To do this, there exists a variable $y_{i,t}$ for each cell $i$ of the tape and each time step $t$. 
The content of the cells is determined by 
the quantified variables.
the formula $\Phi_x$ makes sure that all cells are assigned values that could represent consistently a run of the Turing machine $M$. Let us remark that oracle access could be modeled in \LogicRPH by including a relational symbol \(O\) into the set of logical atoms whose semantics would be the output of the oracle \(O\) for the input of the arguments of the relation.


\subsection{Formula Formats}
Finally, we give a rudimentary definition of $\CNF$ and $\DNF$ formulas.
\begin{definition}[$\CNF$ and $\DNF$ Formulas]
\label{def:CNFnDNFformulas}
A $\CNF$ ($\DNF$) formula is a logical Boolean formula in conjunctive (disjunctive) normal form, that is, the formula is a conjunction (disjunction) of one or more clauses where each clause is a disjunction (conjunction) of one or more literals. Otherwise put, it is an $\textsc{AND}$ ($\textsc{OR}$) of $\textsc{OR}$s (\textsc{AND}s).
\end{definition}

\end{document}